\documentclass[preprint,12pt]{elsarticle}

\usepackage{amssymb}

\usepackage{amsmath}

\usepackage{generic}
\usepackage{amsmath,amssymb,amsfonts}
\usepackage{algorithmic}
\usepackage{graphicx}
\usepackage{textcomp}

\newtheorem{definition}{Definition}
\newtheorem{remark}{Remark}
\newtheorem{lemma}{Lemma}

\newtheorem{assumption}{Assumption}
\newtheorem{theorem}{Theorem}
\newtheorem{corollary}{Corollary}

\journal{Nuclear Physics B}

\begin{document}

\begin{frontmatter}

\title{Grounded Laplacians of Directed Signed Matrix-Weighted Networks: Spectral Properties and Applications to Non-Trivial Consensus}

\author[label1,label2]{Tianmu Niu}
\affiliation[label1]{organization={the College of Computer Science and Software Engineering, Shenzhen University},
            city={Shenzhen},
            postcode={518060},
            country={China}}

\affiliation[label2]{organization={the School of Mathematics and Statistics, Wuhan University},
            city={Wuhan},
            postcode={430072},
            country={China}}

\affiliation[label3]{organization={the Faculty of Computer Science and Control Engineering, Shenzhen University of Advanced Technology},          
            city={Shenzhen},            
            postcode={518055},            
            country={China}}

\author[label1]{Bing Mao} 
\author[label1]{Hao Liao}
\author[label1]{Xiaoqun Wu}
\author[label3]{Tingwen Huang}

\begin{abstract}
Grounded Laplacians provide the spectral link between external information and network convergence. This paper establishes positive-stability results for grounded Laplacians in directed signed matrix-weighted networks, where directionality, antagonism, and singular edge weight matrices coexist. First, under in-degree dominance and positive-negative reachability, we derive explicit local thresholds for the grounding gains. Second, a scaled, kernel-based certificate replaces the unscaled degree condition with a signed matrix-weighted Dirichlet decomposition and a joint-kernel test for the scaled symmetric part. The computable margin $\gamma_p$ lower-bounds the minimum real part of the spectrum and certifies exponential contraction in the $P$-norm. Under absolute generalized balance, the kernel-intersection test is given; the balanced and definite-edge unbalanced undirected cases follow. As an application, non-trivial consensus (NTC) on signed matrix-weighted networks is studied. Informed agents, external signals and coupling terms are designed to steer all agents to any prescribed nonzero state without requiring structural balance. Switching topology case retains non-trivial consensus result under certain conditions. Realizing NTC on signed matrix-weighted networks demonstrates that groups with both cooperative and antagonistic multi-dimensional interactions can achieve consensus, which was previously deemed exclusive to fully cooperative groups.
\end{abstract}

\begin{graphicalabstract}
\end{graphicalabstract}

\begin{highlights}
\item Positive stability criteria for grounded Laplacians in directed signed matrix-weighted networks.
\item Grounding thresholds combine degree dominance with positive-negative reachability.
\item A scaled Dirichlet decomposition gives an exact joint-kernel stability test.
\item A computable margin bounds eigenvalue real parts and contraction rate.
\item Non-trivial consensus steers signed matrix-weighted networks to any prescribed nonzero state.
\end{highlights}

\begin{keyword}
Grounded Laplacian, matrix-weighted networks, non-trivial consensus, directed graphs, null space, signed graphs, fixed/switching togologies, positive-negative path.
\end{keyword}

\end{frontmatter}

\section{Introduction}
\label{sec:introduction}
Grounded Laplacians describe network dynamics after anchoring some vertices to prescribed states, arising in leader–follower coordination, pinning control, and models with informed or stubborn agents \cite{naive,FAN&SAN,SIAM Mao,leader-following matrix-weighted consensus}. In the semi-autonomous formulation, grounding virtual reference vertices yields \(L_B=L+G\), where \(L\) is the interaction Laplacian and \(G\) collects grounding blocks. Positive stability of \(L_B\) ensures exponential tracking-error decay; for symmetric \(L_B\), its smallest eigenvalue bounds the convergence rate. For scalar networks, grounded-Laplacian spectra have characterized stubborn-agent influence, leader placement, coherence, and topology design \cite{grounded,RuXiaCao2025}. These developments are mature for unsigned scalar graphs but do not directly address directed networks with signed and matrix-valued couplings.

Signed networks model cooperative and antagonistic interactions. Positive edges promote agreement, negative ones opposition, so asymptotic behavior may include polarization or zero convergence. Altafini's signed-consensus model linked antagonistic dynamics to structural balance \cite{antagonistic}; subsequent work addressed (interval) bipartite consensus for directed signed graphs \cite{IB consensus,HuangXiangJFI2021,DingGongGeJFI2024,XuLiuCaoJFI2022,ProskurnikovTAC2016}. Directionality and switching complicate behavior, connecting to dynamics over switching digraphs \cite{consensus problems,lihongyi_switch} and uniform convergence for time-varying signed networks \cite{scalar switching directed,WuMengCheng2023}. However, most theory assigns a scalar per edge, missing coordinate-dependent interactions.


Many networked agent interactions are intrinsically multi-dimensional: 
the dynamics of coupled oscillators and multiple-link pendulums \cite{small oscillations}\cite{LC}, the cross‑topic logical dependencies in opinion dynamics \cite{network science} \cite{multiple interdependent topics} \cite{Luan}, the graph effective resistances in distributed control and estimation \cite{Graph effective resistance} \cite{Barooah Estimation}. Assigning a scalar coefficient $a_{ij}$ to an edge implicitly means the isotropic coupling $a_{ij}I_d$; every coordinate is then exchanged with the same strength and the same cooperative or antagonistic sign. This assumption hides selective sensing and anisotropic interactions. For example, $A_{ij}=\operatorname{diag}(1,0)$ models an edge that shares one coordinate while leaving another unconstrained, which no scalar edge can distinguish. Matrix-weighted graphs capture such effects through symmetric blocks that are positive or negative semi-definite, allowing each edge to act on a prescribed subspace with its own strength and sign \cite{LC,positive spanning tree,Su TAC directed matrix weighted consensus,TrinhAhn2026}. The added modeling fidelity has a direct spectral consequence: ordinary graph connectivity is no longer sufficient to determine the stability of the network. The kernel of the matrix-weighted Laplacian is governed by intersections of edge-weight kernels, and its eigenvalues reflect the geometry and strength of coupling in each state direction. Grounding makes this distinction decisive: external anchors must eliminate the directions left invisible by the edge matrices, so a graph connected in the scalar sense can still yield a grounded Laplacian with a zero or arbitrarily small spectral margin, whereas complementary edge and grounding subspaces can remove that obstruction. Matrix-weighted grounded analysis is therefore needed to determine whether external information reaches the whole state and to quantify its convergence rate. Existing work has developed kernel and positive-spanning-tree tools and consensus algorithms, including signed, event-based, and sampled-data settings \cite{MiaoJFI2023,MiaoSuWangJFI2023,WangEtAlJFI2025,ZhuoYinYinJFI2025}, with applications to oscillator synchronization and multidimensional opinion dynamics \cite{small oscillations,multiple interdependent topics,Luan}; nevertheless, positive stability and quantitative decay criteria for directed signed grounded Laplacians remain largely undeveloped.

We address this gap with two complementary results. First, we give a readily verifiable sufficient condition for positive stability of generally non-symmetric \(L_B\). It partitions vertices into grounded root and remaining sets, combines in-degree dominance on the latter with positive-negative paths from the former, and gives explicit grounding-gain bounds. Antagonistic edges are handled via an expanded cooperative representation related to lifting \cite{lifting approach}; symmetry then yields a simpler positive-definiteness result for undirected networks. Second, we introduce \(P=\operatorname{blkdiag}(p_iI_d)\succ0\) and local residuals \(R_i(p)\). Under \(R_i(p)\succeq0\), a signed matrix-weighted Dirichlet-energy decomposition \cite{Chung1997} yields an exact joint-kernel test for \(\operatorname{sym}(PL_B)\succ0\), implying positive stability of \(L_B\). The quantity \(\gamma_p=\lambda_{\min}\!(P^{-1/2}\operatorname{sym}(PL_B)P^{-1/2})\) gives a computable lower bound on the minimum real part of the spectrum and an exponential contraction bound in the \(P\)-norm. Under absolute generalized balance, positive stability of \(L_B\), positive definiteness of \(\operatorname{sym}(PL_B)\), and \(\ker L\cap\ker G=\{\boldsymbol 0\}\) become equivalent. The result also yields exact undirected kernel criteria, balanced/unbalanced special cases, and a linear program for testing the existence of positive scaling, following standard matrix-analysis principles \cite{matrix analysis}.

Non-trivial consensus (NTC), proposed in \cite{non-trivial consensus}, defines a largely unexplored convergence form on signed networks: consensus in both value and sign despite antagonistic interactions. It holds theoretical value and practical potential—e.g., in UAV formation control \cite{UAS} and opinion manipulation in social networks \cite{manipulation Automatica}, where hostile interactions often cause divergence. NTC enables a strategy to steer a group with hostile relations toward a consensus view, avoiding polarization or indifference. Leader-following and energy-based designs for antagonistic networks further motivate externally steered targets \cite{HuangXiangJFI2021,LiaqatNaqviKhalidJFI2025}. Despite its significance, NTC research remains sparse. \cite{non-trivial consensus} demonstrated feasibility only for essentially cooperative networks. For undirected connected structurally balanced signed networks, \cite{Valcher} derived NTC conditions without formula-based control design. \cite{SCIS Non-trivial consensus} achieved NTC on directed signed networks without structural balance. However, all these results are limited to scalar weights.

We therefore establish NTC as an application of grounded-Laplacian positive stability. For fixed directed and undirected signed matrix-weighted networks, we select informed agents and explicitly design their coupling matrices, grounding gains, and external signal so that every agent converges to an arbitrary prescribed \(\boldsymbol\theta\ne\boldsymbol 0_d\). The construction does not impose structural balance or strong connectivity. For switching topologies, building on time-varying and matrix-weighted consensus analysis \cite{time-varying consensus TAC2011Cao,switching matrix-weighted consensus,switching matrix-weighted cluster consensus,MiaoJFI2023}, we first derive a necessary null-space intersection condition involving modes recurring infinitely often. We then give dynamic parameter designs under finite-mode and dwell-time assumptions and prove convergence using logarithmic-norm estimates \cite{GeromelColaneri2006,logarithmic norm}.

The main contributions are summarized as follows.
\begin{enumerate}
\item For directed signed matrix-weighted networks, we establish a baseline sufficient condition for positive stability of the grounded Laplacian \(L_B\). In-degree dominance outside a selected root set, positive-negative reachability from that set, and an expanded cooperative representation lead to explicit local thresholds for the grounding gains. For undirected networks, symmetry gives a simpler sufficient condition for positive definiteness.

\item We then develop a scaled kernel-based result using \(P=\operatorname{blkdiag}(p_iI_d)\). The residual inequalities \(R_i(p)\succeq0\) yield a signed Dirichlet-energy decomposition and an exact joint-kernel criterion for \(\operatorname{sym}(PL_B)\succ0\). Consequently, \(L_B\) is positive stable and \(\gamma_p\) lower-bounds both its minimum eigenvalue real part and the exponential contraction rate in the \(P\)-norm. Under absolute generalized balance, the condition becomes the exact criterion \(\ker L\cap\ker G=\{\boldsymbol 0\}\); undirected balanced and unbalanced cases and a linear-programming test for the scaling coefficients are also obtained.

\item We apply the grounded-Laplacian results to prescribed non-trivial consensus (NTC). For fixed directed and undirected topologies, the informed set, coupling matrices, grounding gains, and external signal are constructed explicitly for any desired NTC state \(\boldsymbol\theta\ne\boldsymbol 0_d\). For switching directed topologies, we derive a null-space necessary condition, emphasizing the significance of those temporary Laplacian matrices appearing for infinite times. We then provide topology-dependent parameter designs that ensure convergence under the assumptions of Section \ref{MAIN RESULTS}.
\end{enumerate}

The remainder of the paper is organized as follows. Section \ref{NOTATIONS} introduces the notation and signed matrix-weighted graph model. Section \ref{PROBLEM STATEMENT} formulates the fully autonomous and semi-autonomous networks and the objective in this study. Section \ref{MAIN RESULTS} develops the positive stability and rate results for grounded matrix-weighted Laplacians. Section \ref{NTC Section} applies these results to non-trivial consensus (NTC) under fixed and switching topologies. Section \ref{SIMULATION} presents numerical simulations, and Section \ref{CONCLUSION} concludes the paper.

\section{Preliminaries}\label{NOTATIONS}
\subsection{Notations}
\indent Denote the sets of non-negative integers, positive integers, complex numbers and real numbers as $\mathbb{N},\ \mathbb{N}^{*},\ \mathbb{C},\ \mathbb{R}$. Denote $\underline{T}\triangleq\{1,...,T\}$ for any $T\in\mathbb{N}^{*}$. Denote $D^{+}(\cdot)$ as the right-hand derivative operator. Symbol $\otimes$ indicates the Kronecker product. $\boldsymbol{0}_{d\times d}\in\mathbb{R}^{d\times d},\ \boldsymbol{0}_{d}\in\mathbb{R}^{d}$ represent the zero matrix and zero vector, respectively. A symmetric matrix $S=S^{\top}\in\mathbb{R}^{d\times d}$ is positive (semi-) definite, i.e. $S\succ\boldsymbol{0}\ (S\succeq\boldsymbol{0})$, if $\forall\boldsymbol{\eta}\in\mathbb{R}^{d},\ \boldsymbol{\eta}\neq\boldsymbol{0}_{d},\ \boldsymbol{\eta}^{\top}S\boldsymbol{\eta}>0\ (\boldsymbol{\eta}^{\top}S\boldsymbol{\eta}\geq0)$. The negative (semi-)definiteness of $S$, i.e. $S\prec\boldsymbol{0}\ (S\preceq\boldsymbol{0})$, is defined accordingly. For any two symmetric matrices $S_{1},\ S_{2}\in\mathbb{R}^{d\times d}$, we write $S_{1}\succ S_{2} (S_{1}\succeq S_{2})$ if $S_{1}-S_{2}\succ\boldsymbol{0} (S_{1}-S_{2}\succeq\boldsymbol{0})$, and $S_{1}\prec S_{2} (S_{1}\preceq S_{2})$ if $S_{1}-S_{2}\prec\boldsymbol{0} (S_{1}-S_{2}\preceq\boldsymbol{0})$. Further, define $\mathbf{max}\{S_{1},S_{2}\}=S_{1}$ if $S_{1}\succ S_{2} (S_{1}\succeq S_{2})$.
Mapping $\mathbf{sgn}(\cdot)$ is used to indicate whether a real symmetric matrix $Q$ is positive/negative (semi‑)definite, as follows,
\begin{equation}\notag
\mathbf{sgn}(S)
\begin{cases}
1,\ S\succeq\boldsymbol{0},\ S\neq\boldsymbol{0}_{d\times d},\\
0,\qquad\qquad S=\boldsymbol{0}_{d\times d},\\
-1,\ S\preceq\boldsymbol{0},\ S\neq\boldsymbol{0}_{d\times d},
\end{cases}
\end{equation}
for such a symmetric matrix $S$, we define $|S|=\mathbf{sgn}(S)\cdot S$, and $\mathbf{null}(S)=\{\boldsymbol{\eta}\in\mathbb{R}^{d}|S\boldsymbol{\eta}=\boldsymbol{0}_{d}\}$. For a block matrix $\mathcal{A}=[A_{ij}]\in\mathbb{R}^{Nd\times Nd}$, submatrix $A_{ij}\in\mathbb{R}^{d\times d},\ i,j\in\underline{N}$ represent its $(i,j)$th block. For matrix $\Psi=\left[\boldsymbol{\alpha}_{1},...,\boldsymbol{\alpha}_{m}\right]$, in which $\boldsymbol{\alpha}_{i}\in\mathbb{R}^{d},\ i\in\underline{m}$ are the $m$ column vectors of $\Psi$, denote $\mathbf{span}\{\Psi\}$ as the subspace of $\mathbb{R}^{d}$ spanned by all the column vectors of $\Psi$.

\subsection{Graph Theory}
$\mathcal{G}=(\mathcal{V},\mathcal{E},\mathcal{A})$ describes a signed matrix-weighted communication graph with $N$ vertices, where $\mathcal{V}=\{v_{1},...,v_{N}\}$ denotes the vertex set and $\mathcal{E}\subseteq\mathcal{V}\times\mathcal{V}$ the edge set. In a directed graph $\mathcal{G}$, a directed edge from vertex $v_{j}$ to $v_{i}$ exists iff $(v_{j},v_{i})\in\mathcal{E}$, for an undirected $\mathcal{G}$, $(v_{j},v_{i})\in\mathcal{E}$ implies $(v_{i},v_{j})\in\mathcal{E}$ as well. The block adjacency matrix $\mathcal{A}=[A_{ij}]\in\mathbb{R}^{Nd\times Nd}$ consists of symmetric submatrices $A_{ij}\in\mathbb{R}^{d\times d}$ that encode the connection between $v_{i}$ and $v_{j}$. When $\mathbf{sgn}(A_{ij})=1$ or $\mathbf{sgn}(A_{ij})=-1$, it signifies that the directed edge $(v_{j},v_{i})\in\mathcal{E}$ is respectively positive or negative (semi‑)definite; a value of $0$ indicates the absence of an edge from $v_{j}$ to $v_{i}$. In the undirected case, we have $A_{ji}=A_{ij}$ for all distinct $i,j\in\underline{N}$. For a given vertex $v_{i}$, we introduce the sets $\varGamma_{i}=\{v_{j}\in\mathcal{V}|\mathbf{sgn}(A_{ij})=1\}$ and $\varOmega_{i}=\{v_{j}\in\mathcal{V}|\mathbf{sgn}(A_{ij})=-1\}$. The in‑neighbor set of $v_{i}$ is $\mathcal{N}_{i}=\{v_{j}\in\mathcal{V}|\mathbf{sgn}(A_{ij})\neq0\}$, and its out-neighbor set is $\mathcal{N}'_{i}=\{v_{j}\in\mathcal{V}|\mathbf{sgn}(A_{ji})\neq0\}$. Finally, the subset $\mathcal{U}\subseteq\mathcal{V}$ collects all vertices that receive at least one incoming negative edge: $\mathcal{U}=\{v_{i}\in\mathcal{V}|\exists j\in\underline{N}\ s.t.\ \mathbf{sgn}(A_{ij})=-1\}$.

\subsection{Matrix-Weighted Networks}
A class of directed signed matrix‑weighted networks under topology $\mathcal{G}=(\mathcal{V},\mathcal{E},\mathcal{A})$ is considered:
\begin{equation}\label{FAN}
\dot{x}_{i}(t)=\sum_{j\in\mathcal{N}_{i}}\left|A_{i j}\right|\left[\mathbf{sgn}\left(A_{i j}\right) x_{j}(t)-x_{i}(t)\right],\ i\in\underline{N},
\end{equation}
where $x_{i}(t)\in\mathbb{R}^{d}$ denotes agent $i$'s state. Each edge $(v_{i},v_{j})$ carries a symmetric matrix weight $A_{ij}\in\mathbb{R}^{d\times d}$, in which $A_{ij}\succeq\boldsymbol{0}$ or $A_{ij}\preceq\boldsymbol{0}$. Denote $L(\mathcal{G})=[L_{ij}]\in\mathbb{R}^{Nd\times Nd}$ (or simply $L$ for brevity) as the signed Laplacian \cite{balancing set} associated with $\mathcal{G}$:
\begin{equation}\notag
L_{ij}=
\begin{cases}
-A_{ij},\qquad\qquad\ j\neq i, \\
\sum_{k=1,k\neq i }^{N}\lvert A_{ik}\rvert,\ j=i,
\end{cases}
\end{equation}
where $L_{ij}\in\mathbb{R}^{d\times d}$ stands for the $(i,j)$-th block of $L$.

\section{Problem Statement}\label{PROBLEM STATEMENT}
Consider a directed signed matrix-weighted network of $N$ agents with state $x_i(t)\in\mathbb{R}^d$. The cooperative and antagonistic interactions are encoded by the signed Laplacian $L\in\mathbb{R}^{Nd\times Nd}$, which governs the fully-autonomous network (FAN) dynamics
\begin{equation}\label{original system}	
\dot{x}(t)=-Lx(t),
\end{equation}
where $x(t) = \left[x_1^{\top}(t), ..., x_N^{\top}(t)\right]^{\top}\in \mathbb{R}^{Nd}$ collects the states of all $N$ agents.

To steer the network toward a prescribed coordinated behavior, external signals are injected into a subset of informed agents \cite{naive}, leading to the semi-autonomous network (SAN) \cite{FAN&SAN}:
\begin{equation}	
\begin{aligned}\label{New system}		
\dot{x}_{i}(t)=&\sum_{j\in\mathcal{N}_{i}}\left|A_{i j}\right|\left[\mathbf{sgn}\left(A_{i j}\right) x_{j}(t)-x_{i}(t)\right]\\		
&+\delta_{i}|B_{i}|\left[\mathbf{sgn}(B_{i})x_{0}-x_{i}(t)\right],\ i\in\underline{N},	
\end{aligned}
\end{equation}
where symmetric matrices $B_{i}\in\mathbb{R}^{d\times d}$ serve as coupling weights, and $\delta_{i}\geq0$ represent the coupling coefficients from external signal $x_{0}$ to agent $x_{i}$. Informed agents \cite{naive}\textemdash are supplied with external signals to navigate the whole network toward the prescribed coordinated behavior. Let $\mathcal{V}_{\mathcal{I}}$ collects all informed agents, an agent $v_{i}\in\mathcal{V}_{\mathcal{I}}$ iff $\delta_{i}>0$ and $B_{i}\neq\boldsymbol{0}_{d\times d}$. Let $\mathcal{V}_{\mathcal{N}}=\mathcal{V}\setminus\mathcal{V}_{\mathcal{I}}$ collects all naive agents \cite{naive}.

By treating the time-invariant external signal $x_0$ as a virtual agent, the augmented system
\begin{equation}\label{augmented system1}
\dot{z}_{i}(t)=\sum_{j=1}^{\widehat{N}}\left|\widehat{A}_{i j}\right|\left[\mathbf{sgn}\left(\widehat{A}_{i j}\right) z_{j}(t)-z_{i}(t)\right],\ i\in\underline{\widehat{N}},
\end{equation}
with the compact representation
\begin{align}\label{augmented system2}	
\dot{z}(t)=-\widehat{{L}}z(t),
\end{align}
in which $z(t)=\left[z_{1}^{\top}(t),...,z_{N}^{\top}(t),z_{\widehat{N}}^{\top}(t)\right]^{\top}$, $z_{i}(t)=x_{i}(t)$ for $i\in\underline{N}$, and $z_{\widehat{N}}(t)\equiv x_{0}$. For the augmented system \eqref{augmented system1} \eqref{augmented system2}, we introduce the augmented signed Laplacian $\widehat{L}\in\mathbb{R}^{\widehat{N}d\times\widehat{N}d}$ and the augmented signed graph $\widehat{\mathcal{G}}=\left(\widehat{\mathcal{V}},\widehat{\mathcal{E}},\widehat{\mathcal{A}}\right)$, in which block matrix $\widehat{\mathcal{A}}=\left[\widehat{A}_{ij}\right]\in\mathbb{R}^{\widehat{N}d\times\widehat{N}d}$. For the augmented graph $\widehat{\mathcal{G}}=\left(\widehat{\mathcal{V}},\widehat{\mathcal{E}},\widehat{\mathcal{A}}\right)$, let $\widehat{\mathcal{V}}=\{\widehat{v}_{1},...,\widehat{v}_{N},\widehat{v}_{0}\}$, in which $\widehat{v}_{1},...,\widehat{v}_{N}$ and $\widehat{v}_{0}$ denotes the $N$ agents $x_{1},...,x_{N}$ and the external signal $x_{0}$ in \eqref{New system}. Evidently, $\widehat{N}=N+1$, and
\begin{align}\label{grounded L_hat}	
\widehat{L}=	
\begin{bmatrix}		
L+(\Delta\otimes I_{d})\cdot\mathbf{diag}(|B|)&-(\Delta\otimes I_{d})B\\		
\boldsymbol{0}_{d\times Nd}\ &\boldsymbol{0}_{d\times d}\end{bmatrix},
\end{align}
where $\Delta=\mathbf{diag}\left(\delta_{1},...,\delta_{N}\right),\ B=\left[B_{1};...;B_{N}\right]\in\mathbb{R}^{Nd\times d}$, and $\mathbf{diag}(|B|)=\mathbf{diag}\left(|B_{1}|,...,|B_{N}|\right)$ is the block diagonal matrix whose $i$-th diagonal block is $|B_{i}|$. Denote $L_{B}=L+(\Delta\otimes I_{d})\cdot\mathbf{diag}(|B|)$ as the grounded matrix-weighted Laplacian. Illustrative examples of original graph $\mathcal{G}$ and its augmented graph $\widehat{\mathcal{G}}$ can be founded in \ref{expanded topology} and \ref{directed topology}. 

Here, $L_B$ is precisely the \textbf{grounded matrix-weighted Laplacian}. Its spectral properties are of paramount importance: the convergence behavior of the SAN is entirely governed by the location of the eigenvalues of $L_B$ in the complex plane. Specifically, if $L_B$ is \textbf{positive stable}---meaning every eigenvalue has positive real part---then the augmented system $\dot{z}=-\hat{L}z$ is globally exponentially stable onto a $d$-dimensional null space, which is exactly the prerequisite for any coordinated convergence task.

The central problem of this paper is therefore twofold: we first seek \textbf{rigorous spectral criteria} for the positive stability of $L_B$ on directed signed matrix-weighted networks, and then leverage these criteria to solve a challenging coordination problem---namely, \textbf{Non-Trivial Consensus (NTC)}---as a direct application.

\section{Positive stability of grounded matrix-weighted Laplacian}\label{MAIN RESULTS}

\subsection{First Baseline Result: A Sufficient Criterion via In-Degree Dominance Condition}\label{baseline result}

Since $L_B$ is generally non-symmetric, throughout this section
``positive stable'' means that all eigenvalues of $L_B$ have positive
real parts.

The following Lemma on simultaneous diagonalizations of a pair of real symmetric matrices is presented first.

\begin{lemma}\cite{matrix analysis}\label{positive definite lemma}
{\rm Consider symmetric $A,B\in\mathbb{R}^{d\times d}$, if $A\succ\boldsymbol{0}$, then one can find a non-singular $S\in\mathbb{C}^{d\times d}$ and a real diagonal $\Lambda$ such that $A=SIS^{*}$ and $B=S\Lambda S^{*}$, where $\Lambda$'s diagonal collects all the eigenvalues of matrix $A^{-1}B$.}
\end{lemma}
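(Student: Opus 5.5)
The plan is the standard congruence argument, built on the square root of $A$. Since $A\succ\boldsymbol{0}$ is real symmetric, the spectral theorem gives an orthogonal $U$ and a positive diagonal $D$ with $A=UDU^{\top}$. I will set $A^{1/2}=UD^{1/2}U^{\top}$, which is symmetric, positive definite and invertible. I then form
\[
C=A^{-1/2}BA^{-1/2}.
\]
This matrix is real symmetric because $B$ and $A^{-1/2}$ are. Applying the spectral theorem to $C$ gives an orthogonal $V$ and a real diagonal $\Lambda$ with $C=V\Lambda V^{\top}$.

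Next I define $S=A^{1/2}V$. It is real, hence in $\mathbb{C}^{d\times d}$ with $S^{*}=S^{\top}$, and it is nonsingular as a product of invertible matrices. Two direct checks give the required factorizations:
\[
SIS^{*}=A^{1/2}VV^{\top}A^{1/2}=A,\qquad S\Lambda S^{*}=A^{1/2}V\Lambda V^{\top}A^{1/2}=A^{1/2}CA^{1/2}=B.
\]

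It remains to identify the diagonal of $\Lambda$. From the two factorizations,
\[
A^{-1}B=(S^{*})^{-1}S^{-1}S\Lambda S^{*}=(S^{*})^{-1}\Lambda S^{*}.
\]
So $A^{-1}B$ is similar to $\Lambda$, and the diagonal entries of $\Lambda$ are exactly the eigenvalues of $A^{-1}B$, counted with multiplicity. Equivalently, $A^{-1}B=A^{-1/2}CA^{1/2}$ is similar to $C$.

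I do not expect a real obstacle here. The only point needing care is that the similarity argument gives equality of spectra with algebraic multiplicities. This is what justifies saying that the diagonal of $\Lambda$ "collects all the eigenvalues" of $A^{-1}B$, and it also shows that these eigenvalues are real even though $A^{-1}B$ need not be symmetric.
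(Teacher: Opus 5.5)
Your proposal is correct and is essentially the standard argument behind the result; the paper gives no proof of its own and simply cites a matrix-analysis reference. The argument is: diagonalize $A^{-1/2}BA^{-1/2}=V\Lambda V^{\top}$ orthogonally, set $S=A^{1/2}V$, and read off the spectrum of $A^{-1}B$, with multiplicities, from the similarity $A^{-1}B=(S^{*})^{-1}\Lambda S^{*}$; taking $S$ real is harmless, since $S^{*}=S^{\top}$.
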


The following Definition and Assumption are needed.

\begin{definition}[Positive-Negative Path]
{\rm In a matrix-weighted network, path $\mathcal{P}_{ij}=\{(v_{i},v_{i1}),(v_{i1},v_{i2}),...,(v_{ik},v_{j})\}$ from vertex $v_{i}$ to $v_{j}$ is termed a positive-negative path, if for any edge along $\mathcal{P}_{ij}$, its weight matrix is either positive definite or negative definite.}
\end{definition}

\begin{definition}[In-Degree-Dominated]\label{in-degree-dominated}
{\rm In a signed matrix-weighted graph $\mathcal{G}=(\mathcal{V},\mathcal{E},\mathcal{A})$, a vertex $v_{i}\in\mathcal{V}$ is said to be in-degree-dominated when}
\begin{align}\label{in-degree-dominated equation}
\sum_{j\in\mathcal{N}_{i}}|A_{ij}|-\sum_{j\in\mathcal{N}'_{i}}|A_{ji}|\succeq\boldsymbol{0},
\end{align}
{\rm where sets $\mathcal{N}_{i}$ and $\mathcal{N}'_{i}$ collect all the in-neighbor and out-neighbor vertices of $v_{i}$, respectively.}
\end{definition}

\begin{remark}
{\rm Condition \eqref{in-degree-dominated equation} in Definition \ref{in-degree-dominated} contains the special case
\begin{align}\label{M-W version weight-balance}
\sum_{j\in\mathcal{N}_{i}}|A_{ij}|=\sum_{j\in\mathcal{N}'_{i}}|A_{ji}|,
\end{align} 
which can be interpreted as the \textit{matrix-weighted} counterpart of the ``\textit{balanced node}" concept introduced in \cite{consensus problems}. In fact, in the special case of unsigned networks with dimension $d=1$, \eqref{M-W version weight-balance} reduces to
\begin{align}
\sum_{j\in\mathcal{N}_{i}}a_{ij}=\sum_{j\in\mathcal{N}'_{i}}a_{ji},
\end{align}
which is exactly the definition of ``balanced node" in \cite{consensus problems}.
} 
\end{remark}

\begin{assumption}\label{directed Assumption}
{\rm In the directed signed matrix-weighted subgraph $\mathcal{G}=(\mathcal{V},\mathcal{E},\mathcal{A})$, $\mathcal{V}$ admits a partition $\mathcal{V}=\mathcal{V}_{1}\cup\mathcal{V}_{2}$ with $\mathcal{V}_{1}\cap\mathcal{V}_{2}=\emptyset$, satisfying:
\begin{enumerate}
\item Each $v_{j}\in\mathcal{V}_{2}$ is in-degree-dominated.
\item For each $v_{j}\in\mathcal{V}_{2}$, there exists $v_{i}\in\mathcal{V}_{1}$ with a positive-negative path $\mathcal{P}_{ij}$ from $v_{i}$ to $v_{j}$.
\end{enumerate}}
\end{assumption}

\begin{theorem}\label{directed proposition}
{\rm Under Assumption \ref{directed Assumption}, for $L_{B}=L+(\Delta\otimes I_{d})\cdot\mathbf{diag}(|B|)$ in \eqref{grounded L_hat}, let $|B_{i}|\succ\boldsymbol{0},\ \forall i\in\mathcal{V}_{1}$. Take
\begin{equation}
\begin{aligned}\label{lower bound}
C_{i}=&\frac{1}{2}\lambda_{max}\left[|B_{i}|^{-1}\left(\sum_{j\in\mathcal{N}'_{i}}|A_{ji}|-\sum_{j\in\mathcal{N}_{i}}|A_{ij}|\right)\right],\\
&i\in\mathcal{V}_{1},
\end{aligned}
\end{equation}
if $\delta_{i}>C_{i},\ \forall i\in\mathcal{V}_{1}$, then every eigenvalue of $L_{B}$ has positive real part.}
\end{theorem}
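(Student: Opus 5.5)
The plan is to show that the symmetric part $\mathbf{sym}(L_B)=\frac{1}{2}(L_B+L_B^{\top})$ is positive definite. The conclusion then follows from the standard fact that if $L_B\boldsymbol{v}=\lambda\boldsymbol{v}$ with $\boldsymbol{v}\in\mathbb{C}^{Nd}\setminus\{\boldsymbol{0}\}$, then
\begin{align}\notag
\mathrm{Re}\,\lambda=\frac{\boldsymbol{v}^{*}\,\mathbf{sym}(L_B)\,\boldsymbol{v}}{\boldsymbol{v}^{*}\boldsymbol{v}}>0,
\end{align}
because for real $L_B$ the quantity $\mathrm{Re}(\boldsymbol{v}^{*}L_B\boldsymbol{v})$ equals the Hermitian form of $\mathbf{sym}(L_B)$. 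It therefore suffices to work with the real quadratic form $x^{\top}L_Bx$ for $x\in\mathbb{R}^{Nd}$.

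The first step is a signed Dirichlet-type decomposition of $x^{\top}Lx$. Write $A_{ij}=s_{ij}|A_{ij}|$ with $s_{ij}=\mathbf{sgn}(A_{ij})\in\{\pm1\}$ on edges. Then
\begin{align}\notag
x^{\top}Lx=\sum_i x_i^{\top}\Big(\sum_{j\in\mathcal{N}_i}|A_{ij}|\Big)x_i-\sum_{i}\sum_{j\in\mathcal{N}_i}s_{ij}\,x_i^{\top}|A_{ij}|x_j .
\end{align}
Each cross term is completed to a square via
\begin{align}\notag
-2s_{ij}x_i^{\top}|A_{ij}|x_j=(x_i-s_{ij}x_j)^{\top}|A_{ij}|(x_i-s_{ij}x_j)-x_i^{\top}|A_{ij}|x_i-x_j^{\top}|A_{ij}|x_j .
\end{align}
Regrouping the leftover diagonal terms by vertex (the $x_j^{\top}|A_{ij}|x_j$ term is an out-edge contribution of $v_j$) gives
\begin{align}\notag
x^{\top}L_Bx=\frac12\sum_i\sum_{j\in\mathcal{N}_i}(x_i-s_{ij}x_j)^{\top}|A_{ij}|(x_i-s_{ij}x_j)+\sum_i x_i^{\top}\Big(\frac12 D_i+\delta_i|B_i|\Big)x_i,
\end{align}
where $D_i=\sum_{j\in\mathcal{N}_i}|A_{ij}|-\sum_{j\in\mathcal{N}'_i}|A_{ji}|$. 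The edge sum is nonnegative, and for $v_i\in\mathcal{V}_2$ we have $D_i\succeq\boldsymbol{0}$ by in-degree dominance (Assumption \ref{directed Assumption}, item 1). For $v_i\in\mathcal{V}_1$ we use Lemma \ref{positive definite lemma} with $A=|B_i|\succ\boldsymbol{0}$ and $B=-D_i$: writing $|B_i|=SS^{*}$ and $-D_i=S\Lambda S^{*}$ yields
\begin{align}\notag
\delta_i|B_i|+\frac12 D_i=S\Big(\delta_iI-\frac12\Lambda\Big)S^{*}.
\end{align}
This is positive definite precisely when $\delta_i>\frac12\lambda_{\max}(|B_i|^{-1}(-D_i))=C_i$, using that the spectrum of $|B_i|^{-1}(-D_i)$ is real. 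Hence $x^{\top}L_Bx\ge0$ for all $x$.

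The remaining step, which I expect to be the main obstacle, is ruling out nonzero $x$ with $x^{\top}L_Bx=0$. Suppose $x^{\top}L_Bx=0$. The $\mathcal{V}_1$ terms are strictly positive definite, so $x_i=\boldsymbol{0}$ for every $v_i\in\mathcal{V}_1$, and every edge term must vanish. Now take $v_j\in\mathcal{V}_2$ and the positive-negative path from some $v_i\in\mathcal{V}_1$ guaranteed by Assumption \ref{directed Assumption}, item 2. For each edge $(v_k,v_l)$ on that path the weight $|A_{lk}|$ is positive definite, so the vanishing of $(x_l-s_{lk}x_k)^{\top}|A_{lk}|(x_l-s_{lk}x_k)$ forces $x_l=s_{lk}x_k$. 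Propagating from $x_i=\boldsymbol{0}$ along the path gives $x_j=\boldsymbol{0}$. The care needed here is only to track edge orientation correctly (the edge $(v_k,v_l)$ carries weight $A_{lk}$) and to note that semidefinite edges elsewhere contribute only nonnegative terms, which are harmless. Thus $x=\boldsymbol{0}$, so $\mathbf{sym}(L_B)\succ\boldsymbol{0}$, and every eigenvalue of $L_B$ has positive real part.
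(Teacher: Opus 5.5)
Your argument is correct, but it is not the route the paper takes for this theorem.

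\textbf{The paper's route.} The paper never completes squares in the signed form $x^{\top}L_Bx$ directly. It proceeds in four steps:
\begin{enumerate}
\item It lifts the network to the $2N$-vertex expanded system, in which each antagonistic weight $A_{ij}$ becomes a cooperative weight $\mathbf{max}\{-A_{ij},\boldsymbol{0}\}$ toward the virtual agent $-x_j$.
\item It applies the unsigned estimate of Lemma~\ref{PhiB Lemma} to $\overline{L}_{\overline{B}}$. The resulting vertex residuals for both $v_i$ and $v_{i+N}$ are exactly your $\delta_i|B_i|+\frac12D_i$.
\item It propagates zeros along positive paths of the expanded graph from $\{v_i,v_{i+N}\}$ to $\{v_j,v_{j+N}\}$.
\item It transfers the conclusion back through the spectral inclusion $\sigma(L_B)\subseteq\sigma(\overline{L}_{\overline{B}})$ of Lemma~\ref{LB spec lemma}.
\end{enumerate}

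\textbf{Your route.} Your signed completion of squares with $x_i-s_{ij}x_j$ absorbs the signs directly, so you need neither the doubled graph nor Lemma~\ref{LB spec lemma}. It is the $p_i\equiv1$ instance of the signed Dirichlet decomposition that the paper introduces only later, for Theorem~\ref{thm:scaled-grounding}. It also mirrors the paper's own proof of the undirected Lemma~\ref{undirected proposition}.

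\textbf{What each approach buys.} Your version certifies $\operatorname{sym}(L_B)\succ\boldsymbol{0}$ for the original $Nd\times Nd$ matrix. This gives $\operatorname{Re}\lambda\ge\lambda_{\min}(\operatorname{sym}(L_B))$ and Euclidean-norm contraction, which is precisely what the proof of Theorem~\ref{switching NTC} invokes. The lifting proof establishes this only for $\overline{L}_{\overline{B}}$. To recover it for $L_B$ one needs the extra identity $x^{\top}L_Bx=\frac12\bar{x}^{\top}\overline{L}_{\overline{B}}\bar{x}$ with $\bar{x}=[x^{\top},-x^{\top}]^{\top}$. In return, the lifting gives a clean reduction to a purely cooperative network. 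There the standard unsigned inequality applies verbatim, and antagonism is handled by a reusable graph construction rather than by sign bookkeeping.
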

\begin{remark}
{\rm In essence, the result of Theorem \ref{directed proposition} guarantees the convergence of system states on directed signed matrix-weighted networks containing rooted vertices. Further, the convergence states can be determined within the null space of signed Laplacian $\widehat{L}$ in \eqref{grounded L_hat}. The corresponding result in \cite{IB consensus} of signed scalar-weighted networks can be viewed as a special case of our Theorem \ref{directed proposition}.
}	 
\end{remark}

The proof of Theorem \ref{directed proposition} requires the aid of expanded system. To this end, we first present the following essential lemma.
\begin{lemma}\label{PhiB Lemma}
{\rm Let $\Phi_{B}(x)=x^{\top}L_{B}x$, where
\begin{equation}\notag
\begin{aligned}
L_{B}=
\begin{bmatrix}
\delta_{1}|B_{1}|+\sum_{k\neq1}^{N}A_{1k}&\cdots&-A_{1N}\\
\vdots&\ddots&\vdots\\
-A_{N1}&\cdots&\delta_{N}|B_{N}|+\sum_{k\neq N}^{N}A_{Nk}
\end{bmatrix}
\end{aligned}
\end{equation}
in which $A_{ij}\in\mathbb{R}^{d\times d},\ i,j\in\underline{N}$ are all positive (semi)-definite matrices. Then
\begin{align}\label{PhiB bound}
\Phi_{B}(x)\geq\sum_{i=1}^{N}x_{i}^{\top}\left[\delta_{i}|B_{i}|+\frac{1}{2}\sum_{j\neq i}^{N}(A_{ij}-A_{ji})\right]x_{i}.
\end{align}}
\end{lemma}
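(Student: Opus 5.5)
We expand the quadratic form $\Phi_{B}(x)=x^{\top}L_{B}x$ blockwise and then bound the cross terms using the positive semidefiniteness of each $A_{ij}$. Writing out the product with the diagonal blocks $\delta_{i}|B_{i}|+\sum_{k\neq i}A_{ik}$ and the off-diagonal blocks $-A_{ij}$ gives
\begin{align}
\Phi_{B}(x)=\sum_{i=1}^{N}x_{i}^{\top}\delta_{i}|B_{i}|x_{i}+\sum_{i=1}^{N}\sum_{j\neq i}\left(x_{i}^{\top}A_{ij}x_{i}-x_{i}^{\top}A_{ij}x_{j}\right).\notag
\end{align}
The grounding term already matches the target bound \eqref{PhiB bound}, so all the work is in the double sum.

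For each ordered pair $(i,j)$ with $j\neq i$, we use that $A_{ij}$ is symmetric and $A_{ij}\succeq\boldsymbol{0}$. Then $(x_{i}-x_{j})^{\top}A_{ij}(x_{i}-x_{j})\geq0$, which gives the matrix-weighted Young inequality
\begin{align}
x_{i}^{\top}A_{ij}x_{j}\leq\frac{1}{2}x_{i}^{\top}A_{ij}x_{i}+\frac{1}{2}x_{j}^{\top}A_{ij}x_{j}.\notag
\end{align}
Substituting this into each pair yields
\begin{align}
\Phi_{B}(x)\geq\sum_{i}x_{i}^{\top}\delta_{i}|B_{i}|x_{i}+\frac{1}{2}\sum_{i}\sum_{j\neq i}x_{i}^{\top}A_{ij}x_{i}-\frac{1}{2}\sum_{i}\sum_{j\neq i}x_{j}^{\top}A_{ij}x_{j}.\notag
\end{align}

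The last step is to re-index the final double sum by exchanging the names $i\leftrightarrow j$. It becomes $\frac{1}{2}\sum_{i}\sum_{j\neq i}x_{i}^{\top}A_{ji}x_{i}$. Grouping terms by the vertex $x_{i}$ then produces exactly
\begin{align}
\sum_{i}x_{i}^{\top}\Bigl[\delta_{i}|B_{i}|+\frac{1}{2}\sum_{j\neq i}(A_{ij}-A_{ji})\Bigr]x_{i},\notag
\end{align}
which is \eqref{PhiB bound}.

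There is no real obstacle here; the only point needing care is the bookkeeping in the re-indexing. The dependence on the in-neighbor weights $A_{ij}$ versus the out-neighbor weights $A_{ji}$ is what later produces the in-degree-dominance quantity in Definition~\ref{in-degree-dominated}, and the constants $C_{i}$ in Theorem~\ref{directed proposition}.
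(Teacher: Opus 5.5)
Your proof is correct and follows the paper's argument: expand $x^{\top}L_{B}x$ blockwise, bound each cross term using $(x_i-x_j)^{\top}A_{ij}(x_i-x_j)\geq 0$ (the paper phrases this through $A_{ij}^{1/2}$), and re-index the $x_j^{\top}A_{ij}x_j$ sum. The paper also records that equality holds if and only if $A_{ij}^{1/2}(x_i-x_j)=\boldsymbol{0}_d$ for all $i\neq j$; this follows directly from your step, since every dropped term is nonnegative, and it is used later in the proof of Theorem~\ref{directed proposition}.
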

{\it Proof:}
See Appendix \ref{Lemma 2 Proof}. $\hfill\blacksquare$
\begin{remark}
{\rm{The result of Lemma \ref{PhiB Lemma} reveals that, when $\delta_{i}|B_{i}|+\frac{1}{2}\sum_{j\neq i}^{N}(A_{ij}-A_{ji})\succ\boldsymbol{0},\ i\in\underline{N}$, then $\Phi_{B}(x)>0,\ \forall x\neq\boldsymbol{0}_{d}$, which indicates that every eigenvalue of $L_{B}$ has positive real part, despite the fact that $L_{B}$ may not be a symmetric matrix.}}
\end{remark}

Notice that the result of Lemma \ref{PhiB Lemma} relies on the condition that $A_{ij}$ are all positive (semi)-definite. In the general case where $A_{ij}$ may be negative (semi)-definite, we propose an expanded matrix-weighted system framework, which is inspired by \cite{lifting approach} that designed an enlarged network to analyze the opinion dynamics with antagonisms. To begin with, the expanded matrix-weighted system corresponding to \eqref{New system} is introduced. By separating the cooperative and antagonistic interactions, \eqref{New system} can be rewritten as
\begin{equation}\notag
\begin{aligned}
\dot{x}_{i}(t)=&\sum_{j\in\varGamma_{i}}|A_{ij}|\left[x_{j}(t)-x_{i}(t)\right]\\
&+\sum_{j\in\varOmega_{i}}|A_{ij}|\left[-x_{j}(t)-x_{i}(t)\right]\\
&+\delta_{i}|B_{i}|\left[\mathbf{sgn}(B_{i})x_{0}-x_{i}(t)\right],\ i\in\underline{N},
\end{aligned}
\end{equation}
in which for $i\in\underline{N}$, the antagonistic interactions $A_{ij},\ j\in\varOmega_{i}$ from $x_{j}$ can be viewed as cooperative interactions from the virtual agent $-x_{j}$, and this virtual agent takes the opposite value of actual neighbor $x_{j}$. The expanded system dynamic is
\begin{equation}\label{expanded system1}
\begin{aligned}
\dot{\overline{x}}_{i}(t)=&\sum_{j=1}^{2N}
\left|\overline{A}_{ij}\right|\left[\overline{x}_{j}(t)-\overline{x}_{i}(t)\right]\\
&+\overline{\delta}_{i}|\overline{B}_{i}|\left[\mathbf{sgn}(\overline{B}_{i})x_{0}-\overline{x}_{i}(t)\right],\ i\in\underline{2N},
\end{aligned}
\end{equation}
where $\overline{A}_{ij}=\overline{A}_{i+N,j+N}=\mathbf{max}\{A_{ij},\boldsymbol{0}_{d\times d}\}$ and $\overline{A}_{i+N,j}=\overline{A}_{i,j+N}=\mathbf{max}\{-A_{ij},\boldsymbol{0}_{d\times d}\}$ for $i,j\in\underline{N}$, $\overline{x}_{i}(t)=x_{i}(t)$ and $\overline{x}_{i+N}(t)=-x_{i}(t)$ for $i\in\underline{N}$. Correspondingly, for $i\in\underline{N}$, $\overline{\delta}_{i+N}=\overline{\delta}_{i}=\delta_{i}$, $\overline{B}_{i}=B_{i},\ \overline{B}_{i+N}=-B_{i}$ . The augmented Laplacian $\widehat{\overline{L}}$ is
\begin{align}\label{grounded expanded L_hat}
\widehat{\overline{L}}=\begin{bmatrix}\overline{L}+(\overline{\Delta}\otimes I_{d})\cdot\mathbf{diag}(|\overline{B}|)&-(\overline{\Delta}\otimes I_{d})\overline{B}\\\boldsymbol{0}_{d\times 2Nd}&\boldsymbol{0}_{d\times d}\end{bmatrix},
\end{align}
where $\overline{\Delta}=\mathbf{diag}(\overline{\delta}_{1},...,\overline{\delta}_{2N}),\ \overline{B}=\left[\overline{B}_{1};...;\overline{B}_{2N}\right]\in\mathbb{R}^{2Nd\times d}$, and $\mathbf{diag}(|\overline{B}|)=\mathbf{diag}(|\overline{B}_{1}|,...,|\overline{B}_{N}|)$ is the block diagonal matrix with $|\overline{B}_{i}|$ being its $i$th diagonal block. Denote $\overline{L}_{\overline{B}}=\overline{L}+(\overline{\Delta}\otimes I_{d})\cdot\mathbf{diag}(|\overline{B}|)$ as the grounded matrix-weighted Laplacian of the expanded system.
The topology of the expanded system is denoted as $\widehat{\overline{\mathcal{G}}}$, the relationship between $\widehat{\mathcal{G}}$ and $\widehat{\overline{\mathcal{G}}}$ is illustrated by an example in Figure \ref{expanded topology}. The relationship between eigenvalues of $L_{B}$ and $\overline{L}_{\overline{B}}$ is illustrated by Lemma \ref{LB spec lemma}.

\begin{lemma}\label{LB spec lemma}
{\rm $\sigma(L_{B})\subseteq\sigma(\overline{L}_{\overline{B}})$.}    
\end{lemma}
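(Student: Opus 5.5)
The plan is to prove the inclusion directly with the lifting map that defines the expanded system. If $\lambda\in\sigma(L_B)$ with eigenvector $x\in\mathbb{C}^{Nd}$, I will build an eigenvector of $\overline{L}_{\overline{B}}$ for the same $\lambda$, namely the lifted vector $\overline{x}=[x^{\top},-x^{\top}]^{\top}\in\mathbb{C}^{2Nd}$. This choice mirrors the rule $\overline{x}_{i}=x_{i}$, $\overline{x}_{i+N}=-x_{i}$ used in \eqref{expanded system1}. Since $x\neq\boldsymbol{0}$, we have $\overline{x}\neq\boldsymbol{0}$. So it suffices to verify the identity $\overline{L}_{\overline{B}}\overline{x}=\lambda\overline{x}$ block row by block row.

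\textbf{Diagonal blocks.} I first record the elementary identity
\begin{equation}\notag
\mathbf{max}\{A_{ij},\boldsymbol{0}\}+\mathbf{max}\{-A_{ij},\boldsymbol{0}\}=|A_{ij}|,\qquad \mathbf{max}\{A_{ij},\boldsymbol{0}\}-\mathbf{max}\{-A_{ij},\boldsymbol{0}\}=A_{ij}.
\end{equation}
It holds because each $A_{ij}$ is either $\succeq\boldsymbol{0}$ or $\preceq\boldsymbol{0}$, so exactly one of the two terms is nonzero. The $i$-th diagonal block of $\overline{L}_{\overline{B}}$, for $i\in\underline{N}$, is $\sum_{j\neq i}\bigl(\overline{A}_{ij}+\overline{A}_{i,j+N}\bigr)+\overline{\delta}_{i}|\overline{B}_{i}|$. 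By the first identity this equals $\sum_{j\neq i}|A_{ij}|+\delta_{i}|B_{i}|$, which is exactly the diagonal block of $L_B$. The same holds for the $(i+N)$-th diagonal block, using $\overline{\delta}_{i+N}=\delta_i$ and $|\overline{B}_{i+N}|=|-B_i|=|B_i|$. Here I read $\mathbf{diag}(|\overline{B}|)$ as having $2N$ diagonal blocks, as intended. Neither the expanded graph nor the original has self-loops, so no term $\overline{A}_{i,i+N}$ arises.

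\textbf{Off-diagonal blocks.} For $i\le N$, the contribution of vertices $j$ and $j+N$ to the $i$-th block row is
\begin{equation}\notag
-\overline{A}_{ij}x_j-\overline{A}_{i,j+N}(-x_j)=-A_{ij}x_j,
\end{equation}
by the second identity. Hence the $i$-th block of $\overline{L}_{\overline{B}}\overline{x}$ equals $(L_Bx)_i=\lambda x_i=\lambda\overline{x}_i$. For the row $i+N$, the analogous contribution is $-\overline{A}_{i+N,j}x_j+\overline{A}_{i+N,j+N}x_j=A_{ij}x_j$. The diagonal term there acts on $-x_i$. So the $(i+N)$-th block equals $-(L_Bx)_i=-\lambda x_i=\lambda\overline{x}_{i+N}$. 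This gives $\overline{L}_{\overline{B}}\overline{x}=\lambda\overline{x}$, hence $\lambda\in\sigma(\overline{L}_{\overline{B}})$ and $\sigma(L_B)\subseteq\sigma(\overline{L}_{\overline{B}})$.

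No step is deep. The main care point is the bookkeeping of the expanded weights $\overline{A}_{i+N,j}$ versus $\overline{A}_{i,j+N}$, and checking that their signs combine correctly with $\overline{x}_{j+N}=-x_j$. Equivalently, one can state the whole argument as the intertwining relation $\overline{L}_{\overline{B}}T=TL_B$ with $T=[I_{Nd};-I_{Nd}]$, which has full column rank. The eigenvector computation above is then just the verification of this block identity.
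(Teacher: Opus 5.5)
Your proof is correct and follows the paper's argument: the paper also lifts an eigenpair $(\lambda,\eta)$ of $L_B$ to $\overline{\eta}=[\eta^{\top},-\eta^{\top}]^{\top}$ and observes that $\overline{L}_{\overline{B}}\overline{\eta}=[(L_B\eta)^{\top},-(L_B\eta)^{\top}]^{\top}=\lambda\overline{\eta}$. Your block-row bookkeeping, including the reading of $\mathbf{diag}(|\overline{B}|)$ as having $2N$ blocks, fills in the verification that the paper leaves to ``the expanded system construction.''
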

{\it Proof of Lemma \ref{LB spec lemma}:}
See Appendix \ref{LB spec lemma Proof}. $\hfill\blacksquare$

\begin{figure}
\begin{center}
\includegraphics[height=4.5cm]{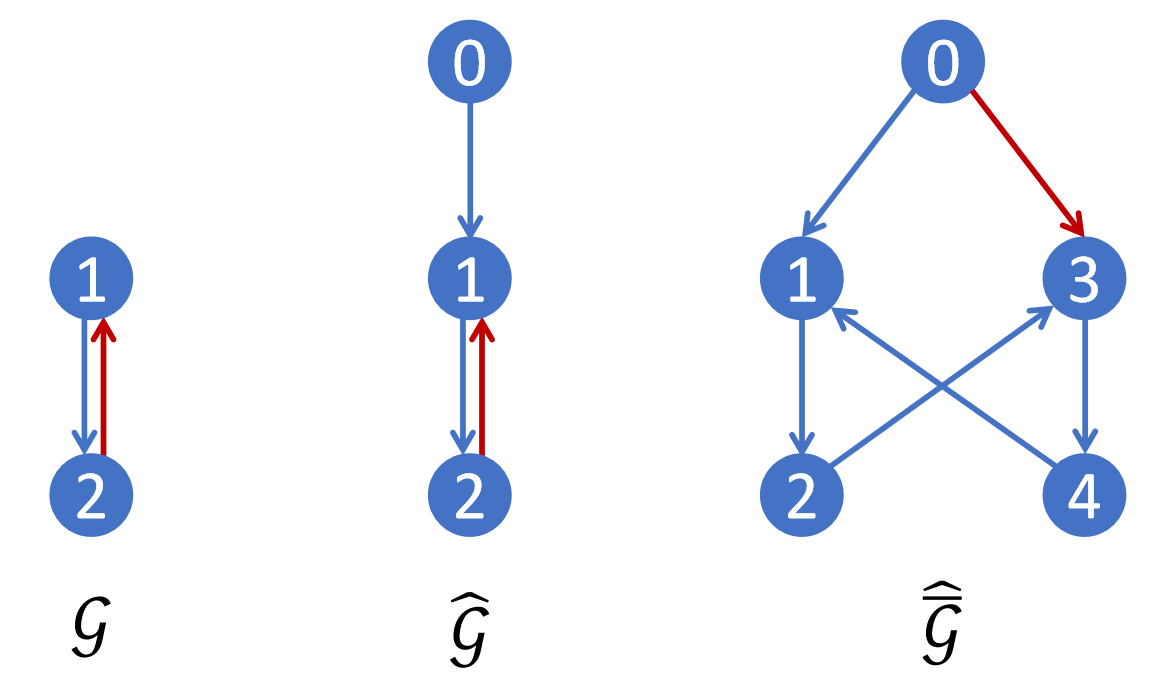}
\caption{An illustration example of the original FAN topology $\mathcal{G}$, the augmented graph $\widehat{\mathcal{G}}$ and its expanded topology $\widehat{\overline{\mathcal{G}}}$. Positive and negative definite edges correspond to solid blue and solid red lines, respectively.}  
\label{expanded topology}                                 
\end{center}                                 
\end{figure}

{\it Proof of Theorem \ref{directed proposition}:}
See Appendix \ref{Theorem 1 Proof}. $\hfill\blacksquare$

Under Assumption \ref{directed Assumption}, by setting the coupling coefficient $\delta$ sufficiently large, the positive stability of $L_{B}$ is accomplished. Extending to undirected case, owing to the symmetry of the corresponding Laplacian, we are able to derive more intuitive and simple conditions. The following Assumption is needed.
\begin{assumption}\label{undirected Assumption}
{\rm Consider undirected signed matrix-weighted graph $\mathcal{G}=(\mathcal{V},\mathcal{E},\mathcal{A})$ underlying the original FAN \eqref{original system}, in which $\mathcal{V}$ can be partitioned as $\mathcal{V}=\mathcal{V}_{1}\cup\mathcal{V}_{2}$ with $\mathcal{V}_{1}\cap\mathcal{V}_{2}=\emptyset$, such that for each $v_{j}\in\mathcal{V}_{2}$, there exists $v_{i}\in\mathcal{V}_{1}$ with a positive-negative path $\mathcal{P}_{ij}$ connecting $v_{i}$ and $v_{j}$.}
\end{assumption}

\begin{lemma}\label{undirected proposition}
{\rm Under Assumption \ref{undirected Assumption}, for $L_{B}=L+(\Delta\otimes I_{d})\cdot\mathbf{diag}(|B|)$ in \eqref{grounded L_hat}, let $A_{ij}=A_{ji}$ for all distinct $i,j\in\underline{N}$, and $|B_{i}|\succ\boldsymbol{0},\ \forall i\in\mathcal{V}_{1}$. If $\delta_{i}>0,\ \forall i\in\mathcal{V}_{1}$, then every eigenvalue of $L_{B}$ is positive real number.}
\end{lemma}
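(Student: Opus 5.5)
Since $A_{ij}=A_{ji}$ for all distinct $i,j$, the signed Laplacian $L$ is symmetric, and $(\Delta\otimes I_d)\mathbf{diag}(|B|)$ is block diagonal with symmetric blocks. Hence $L_B$ is real symmetric and all its eigenvalues are real. It therefore suffices to show $L_B\succ\boldsymbol{0}$, i.e. $x^{\top}L_Bx>0$ for every $x\neq\boldsymbol{0}$. We work with the quadratic form directly rather than through the expanded system.

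\textbf{Step 1 (signed Dirichlet decomposition).} Using $L_{ii}=\sum_{k\neq i}|A_{ik}|$, $L_{ij}=-A_{ij}=-\mathbf{sgn}(A_{ij})|A_{ij}|$ and symmetry, group each unordered edge $\{i,j\}$ to obtain
\begin{equation}\notag
x^{\top}L_Bx=\sum_{\{i,j\}\in\mathcal{E}}\bigl(x_i-\mathbf{sgn}(A_{ij})x_j\bigr)^{\top}|A_{ij}|\bigl(x_i-\mathbf{sgn}(A_{ij})x_j\bigr)+\sum_{i=1}^{N}\delta_i x_i^{\top}|B_i|x_i .
\end{equation}
Each term is nonnegative because $|A_{ij}|\succeq\boldsymbol{0}$ and $|B_i|\succeq\boldsymbol{0}$. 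This gives $L_B\succeq\boldsymbol{0}$.

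\textbf{Step 2 (trivial kernel).} Suppose $x^{\top}L_Bx=0$. Then every term in Step 1 vanishes.

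For $i\in\mathcal{V}_1$, we have $\delta_i>0$ and $|B_i|\succ\boldsymbol{0}$, so $x_i^{\top}|B_i|x_i=0$ forces $x_i=\boldsymbol{0}_d$.

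Now take $v_j\in\mathcal{V}_2$. By Assumption \ref{undirected Assumption}, there is a positive-negative path $v_i=v_{i_0},v_{i_1},\dots,v_{i_m}=v_j$ with $v_i\in\mathcal{V}_1$. Every edge on this path has a definite weight, so $|A_{i_ki_{k+1}}|\succ\boldsymbol{0}$. The vanishing of the corresponding edge term therefore gives $x_{i_{k+1}}=\mathbf{sgn}(A_{i_ki_{k+1}})\,x_{i_k}$. Inducting along the path from $x_{i_0}=\boldsymbol{0}_d$ yields $x_j=\boldsymbol{0}_d$.

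Hence $x=\boldsymbol{0}$, so $L_B\succ\boldsymbol{0}$, and every eigenvalue of $L_B$ is a positive real number.

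\textbf{Main obstacle.} The only point needing care is the bookkeeping in Step 1. Each unordered edge appears twice, as $(i,j)$ and $(j,i)$, and these two contributions must combine into a single squared term with weight $|A_{ij}|$. This relies on $A_{ij}=A_{ji}$ together with $\mathbf{sgn}(A_{ij})^2=1$. The cross terms then come out as $-2\,\mathbf{sgn}(A_{ij})\,x_i^{\top}|A_{ij}|x_j=-2\,x_i^{\top}A_{ij}x_j$, which matches the off-diagonal contribution of $L$.

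As an alternative route, one could pass to the expanded system and invoke Lemma \ref{LB spec lemma} and Lemma \ref{PhiB Lemma}. However, the direct energy argument avoids the lifting and shows positivity of the spectrum rather than merely positive real parts.
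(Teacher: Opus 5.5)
Your proof is correct and follows essentially the same route as the paper: it writes $x^{\top}L_Bx$ as the signed edge-energy sum plus the grounding term, uses $\delta_i|B_i|\succ\boldsymbol{0}$ to force $x_i=\boldsymbol{0}_d$ on $\mathcal{V}_1$, and propagates zeros along positive-negative paths. Your version is slightly more careful than the paper's. You derive the decomposition yourself, counting each unordered edge once, where the paper cites positive semidefiniteness of $L$. You also state explicitly that $L_B$ is symmetric, which is what justifies concluding that the eigenvalues are real.
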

{\it Proof:}
See Appendix \ref{Lemma 3 Proof}. $\hfill\blacksquare$

\subsection{Second Result: Scaled Kernel-Based Certificate with Explicit Contraction Rate}\label{second result}

For each ordered pair $(i,j)$, let $s_{ij}=\operatorname{sgn}(A_{ij})\in\{-1,0,1\}$. Define
\[
 D_i^{\rm in}=\sum_{j=1}^N|A_{ij}|,\qquad
 G_i=\delta_i|B_i|\succeq0,
\]
and write
\[
 L_B=L+G,\qquad G=\operatorname{blkdiag}(G_1,\ldots,G_N).
\]
We use
\[
 \operatorname{sym}(M)=\frac{M+M^\top}{2}.
\]

Choose $p_i>0$ and define
\[
 P=\operatorname{blkdiag}(p_1I_d,\ldots,p_NI_d)\succ0
\]
and
\begin{equation}\label{eq:scaled-residual}
 R_i(p)=p_iG_i+\dfrac{1}{2}
 \left(p_iD_i^{\rm in}-\sum_{j=1}^N p_j|A_{ji}|\right).
\end{equation}

\begin{theorem}[Scaled grounding certificate and contraction rate]
\label{thm:scaled-grounding}
Suppose that $R_i(p)\succeq0$ for every $i\in\underline{N}$.
Then, for every $x=(x_1^{\top},\ldots,x_N^{\top})^{\top}\in\mathbb R^{Nd}$,
\begin{align}
 x^\top\operatorname{sym}(PL_B)x
 ={}&\frac{1}{2}\sum_{i=1}^N\sum_{j\ne i}p_i
 (x_i-s_{ij}x_j)^\top|A_{ij}|(x_i-s_{ij}x_j)
 \nonumber\\
 &+\sum_{i=1}^N x_i^\top R_i(p)x_i.
 \label{eq:signed-dirichlet-decomposition}
\end{align}
Consequently, $\operatorname{sym}(PL_B)\succ0$ if and only if the
simultaneous equations
\begin{align}
 |A_{ij}|^{1/2}(x_i-s_{ij}x_j)&=0_d,
 &&(v_j,v_i)\in\mathcal E,                                      \label{eq:edge-kernel}\\
 R_i(p)x_i&=0_d,
 &&i\in\{1,\ldots,N\},                                        \label{eq:ground-kernel}
\end{align}
admit only the zero solution. If this condition holds, $L_B$ is
positive stable.  Moreover, letting
\begin{equation}
 \gamma_p=\lambda_{\min}\!\left(
 P^{-1/2}\operatorname{sym}(PL_B)P^{-1/2}\right)>0,
 \label{eq:weighted-contraction-margin}
\end{equation}
one has
\begin{equation}
 \min_{\lambda\in\sigma(L_B)}\operatorname{Re}\lambda\ge\gamma_p
 \label{eq:spectral-real-part-bound}
\end{equation}
and, for $\dot e=-L_Be$,
\begin{equation}
 \|e(t)\|_P\le e^{-\gamma_pt}\|e(0)\|_P,
 \qquad \|e\|_P=(e^\top Pe)^{1/2}.
 \label{eq:weighted-contraction}
\end{equation}
\end{theorem}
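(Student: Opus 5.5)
The plan is to prove the identity \eqref{eq:signed-dirichlet-decomposition} by direct expansion; the kernel characterization and the spectral and contraction statements then follow from it with standard quadratic-form arguments.

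\textbf{Step 1 (the decomposition).} Since $x^\top\operatorname{sym}(PL_B)x=x^\top PL_Bx$, it suffices to expand the latter. Recall that $L_{ii}=D_i^{\rm in}$ (taking $A_{ii}=0$) and $L_{ij}=-A_{ij}=-s_{ij}|A_{ij}|$. This gives
\[
x^\top PL_Bx=\sum_i p_i x_i^\top (D_i^{\rm in}+G_i)x_i-\sum_i\sum_{j\ne i}p_i s_{ij}\,x_i^\top|A_{ij}|x_j .
\]
Next expand each edge term of the right-hand side of \eqref{eq:signed-dirichlet-decomposition}. For any edge, $s_{ij}^2=1$; for a non-edge, $|A_{ij}|=0$ and the term vanishes anyway. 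Hence
\[
\tfrac12 p_i(x_i-s_{ij}x_j)^\top|A_{ij}|(x_i-s_{ij}x_j)=\tfrac12p_ix_i^\top|A_{ij}|x_i-p_is_{ij}x_i^\top|A_{ij}|x_j+\tfrac12p_ix_j^\top|A_{ij}|x_j .
\]
We now sum over $i$ and $j\ne i$. The cross terms reproduce the off-diagonal part of $x^\top PL_Bx$ exactly. The first diagonal piece sums to $\tfrac12\sum_ip_ix_i^\top D_i^{\rm in}x_i$. After swapping the roles of $i$ and $j$, the last piece becomes $\tfrac12\sum_ix_i^\top\bigl(\sum_jp_j|A_{ji}|\bigr)x_i$. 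Subtracting these from the diagonal part $\sum_ip_ix_i^\top(D_i^{\rm in}+G_i)x_i$ leaves exactly $\sum_ix_i^\top R_i(p)x_i$ by \eqref{eq:scaled-residual}, which proves the identity. The only delicate point in the whole proof is this index bookkeeping: the reindexing of the last term, and the treatment of $s_{ij}=0$. I expect no genuine obstacle here.

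\textbf{Step 2 (kernel test).} Every summand in \eqref{eq:signed-dirichlet-decomposition} is a positive semidefinite quadratic form. This holds because $|A_{ij}|\succeq0$, $p_i>0$, and $R_i(p)\succeq0$ by hypothesis. Hence $\operatorname{sym}(PL_B)\succeq0$, and $x^\top\operatorname{sym}(PL_B)x=0$ holds iff every summand vanishes. For $M\succeq0$, we have $y^\top My=\|M^{1/2}y\|^2$, so $y^\top My=0$ iff $M^{1/2}y=0$ iff $My=0$. Thus the quadratic form vanishes exactly on the solution set of \eqref{eq:edge-kernel}--\eqref{eq:ground-kernel}. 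Since a positive semidefinite matrix is definite iff its quadratic form has only the trivial zero, this gives the claimed equivalence.

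\textbf{Step 3 (spectrum and contraction).} Assume the kernel condition holds. Then $S:=P^{-1/2}\operatorname{sym}(PL_B)P^{-1/2}\succ0$, so $\gamma_p>0$, and by congruence $\operatorname{sym}(PL_B)\succeq\gamma_pP$. Let $L_Bv=\lambda v$ with $v\in\mathbb C^{Nd}\setminus\{0\}$. Then $v^*PL_Bv=\lambda\,v^*Pv$. For any real $M$ one has $\operatorname{Re}(v^*Mv)=v^*\operatorname{sym}(M)v$. Applying this with $M=PL_B$ gives
\[
\operatorname{Re}\lambda\,v^*Pv=v^*\operatorname{sym}(PL_B)v\ge\gamma_p\,v^*Pv .
\]
Since $v^*Pv>0$, this yields \eqref{eq:spectral-real-part-bound}, and in particular $L_B$ is positive stable. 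For the contraction estimate, take $\dot e=-L_Be$ and compute
\[
\tfrac{d}{dt}\,e^\top Pe=-2e^\top\operatorname{sym}(PL_B)e\le-2\gamma_p\,e^\top Pe .
\]
Grönwall's inequality then gives $\|e(t)\|_P^2\le e^{-2\gamma_pt}\|e(0)\|_P^2$, and taking square roots yields \eqref{eq:weighted-contraction}.
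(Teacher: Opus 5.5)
Your proposal is correct and follows essentially the same route as the paper. You expand $x^\top PL_Bx$ edge by edge to obtain the signed Dirichlet decomposition, use the eigenvector identity $\operatorname{Re}\lambda\, v^*Pv = v^*\operatorname{sym}(PL_B)v \ge \gamma_p v^*Pv$ via $\operatorname{sym}(PL_B)\succeq\gamma_p P$, and apply the Lyapunov function $e^\top Pe$ with a comparison argument. Your Step 2 also spells out the kernel-equivalence argument (every summand is a nonnegative quadratic form, so the form vanishes only where each square root annihilates its vector), which the paper's appendix leaves implicit.
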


{\it Proof:}
See Appendix \ref{thm:scaled-grounding_proof}. $\hfill\blacksquare$

The decomposition is a matrix-weighted Dirichlet-energy \cite{Chung1997,positive spanning tree,LC,TrinhAhn2026} certificate.
The margin $\gamma_p$ is a scaled counterpart of grounded-Laplacian
spectral-rate measures studied in
\cite{grounded,RuXiaCao2025}. Its common-metric form is suitable for quadratic Lyapunov
analysis of switched systems \cite{GeromelColaneri2006}.

Theorem~\ref{thm:scaled-grounding} provides a general certificate for positive stability via the kernel condition regarding \eqref{eq:edge-kernel} \eqref{eq:ground-kernel}. In practice, this condition can be verified by structural properties of the underlying graph. The following assumption describes a practically relevant class of networks where this condition is guaranteed by the existence of positive-negative paths from a subset of vertices, in conjunction with the local positive definiteness of $R_i(p)$ on those vertices. Recall $R_{i}(P)$ in \eqref{eq:scaled-residual}, Assumption \ref{weighted_Assumption} and Corollary \ref{cor:scaled-local-threshold} can be proposed to practically use Theorem \ref{thm:scaled-grounding}'s result.

\begin{assumption}\label{weighted_Assumption}
{\rm In the directed signed matrix-weighted subgraph $\mathcal{G}=(\mathcal{V},\mathcal{E},\mathcal{A})$, $\mathcal{V}$ admits a partition $\mathcal{V}=\mathcal{V}_{1}\cup\mathcal{V}_{2}$ with $\mathcal{V}_{1}\cap\mathcal{V}_{2}=\emptyset$, satisfying:
\begin{enumerate}
\item $R_{i}(P)\succeq\boldsymbol{0},\ \forall i\in\mathcal{V}$.
\item $R_{i}(P)\succ\boldsymbol{0},\ \forall i\in\mathcal{V}_{1}$.
\item For each $v_{j}\in\mathcal{V}_{2}$, there exists $v_{i}\in\mathcal{V}_{1}$ with a positive-negative path $\mathcal{P}_{ij}$ from $v_{i}$ to $v_{j}$.
\end{enumerate}}
\end{assumption}

Based on Assumption \ref{weighted_Assumption}, Corollary \ref{cor:scaled-local-threshold} is a directed result from Theorem \ref{thm:scaled-grounding}. 

\begin{corollary}
\label{cor:scaled-local-threshold}
Under Assumption \ref{weighted_Assumption}, the grounded matrix-weighted Laplacian $L_B$ is positive stable and \eqref{eq:spectral-real-part-bound} \eqref{eq:weighted-contraction} hold.
\end{corollary}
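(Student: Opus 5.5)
We derive Corollary \ref{cor:scaled-local-threshold} from Theorem \ref{thm:scaled-grounding}. Item 1 of Assumption \ref{weighted_Assumption} is exactly the standing hypothesis $R_i(p)\succeq\boldsymbol{0}$ of that theorem, so the decomposition \eqref{eq:signed-dirichlet-decomposition} holds. It therefore suffices to show that the simultaneous equations \eqref{eq:edge-kernel}--\eqref{eq:ground-kernel} admit only the zero solution. Once this is done, $\operatorname{sym}(PL_B)\succ\boldsymbol{0}$, so $\gamma_p>0$, and positive stability together with \eqref{eq:spectral-real-part-bound} and \eqref{eq:weighted-contraction} follow directly from the theorem.

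Fix a solution $x$ of \eqref{eq:edge-kernel}--\eqref{eq:ground-kernel}. The argument has three steps.

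\emph{Step 1 (root vertices).} For $i\in\mathcal{V}_1$, item 2 gives $R_i(p)\succ\boldsymbol{0}$, so $R_i(p)$ is nonsingular. Hence \eqref{eq:ground-kernel} forces $x_i=\boldsymbol{0}_d$.

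\emph{Step 2 (definite edges).} Consider an edge $(v_a,v_b)\in\mathcal{E}$ whose weight $A_{ba}$ is positive or negative definite, as along a positive-negative path. Then $|A_{ba}|\succ\boldsymbol{0}$, so $|A_{ba}|^{1/2}$ is invertible. Equation \eqref{eq:edge-kernel} for this edge then gives $x_b=s_{ba}x_a$ with $s_{ba}=\pm1$.

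\emph{Step 3 (propagation).} Take $v_j\in\mathcal{V}_2$. By item 3, there is $v_i\in\mathcal{V}_1$ and a positive-negative path $v_i=v_{i_0}\to v_{i_1}\to\cdots\to v_{i_m}=v_j$. Inducting along the path with Step 2 gives $x_{i_{k+1}}=\pm x_{i_k}$, so $x_j=\pm x_i=\boldsymbol{0}_d$ by Step 1. Since $\mathcal{V}=\mathcal{V}_1\cup\mathcal{V}_2$, we get $x=\boldsymbol{0}$, and the kernel condition holds.

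The main care point is the indexing convention in Step 2. \eqref{eq:edge-kernel} is stated for $(v_j,v_i)\in\mathcal{E}$ with weight $A_{ij}$, that is, for the edge from $v_j$ into $v_i$. The path definition traverses edges in the direction from the root toward $v_j$, so each path edge $(v_{i_k},v_{i_{k+1}})$ matches the term $|A_{i_{k+1}i_k}|^{1/2}(x_{i_{k+1}}-s\,x_{i_k})=\boldsymbol{0}_d$. The direction of propagation is then from the known value $x_{i_k}$ to the unknown $x_{i_{k+1}}$, which is what the induction needs.

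Because the relation in Step 2 is symmetric up to sign, the argument works regardless of orientation once definiteness is available. The only genuine requirement is invertibility of $|A|$ on every path edge, and this is exactly what the positive-negative path assumption supplies.
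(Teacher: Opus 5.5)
Your proof is correct and is the argument the paper intends. The paper gives no separate proof and simply calls the corollary a direct consequence of Theorem~\ref{thm:scaled-grounding}; your reduction to the joint-kernel test (root vertices forced to zero by $R_i(p)\succ\boldsymbol{0}$, then zeros propagated along definite edges via \eqref{eq:edge-kernel}) mirrors the propagation step in the proofs of Theorem~\ref{directed proposition} and Lemma~\ref{undirected proposition}. Your closing remark that path orientation is immaterial is also correct: $|A_{ij}|\succ\boldsymbol{0}$ makes $x_i=s_{ij}x_j$ a symmetric relation, and the directionality is already encoded in the residual conditions $R_i(p)\succeq\boldsymbol{0}$.
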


\begin{remark}
Suppose that $|B_{i}|\succ\boldsymbol{0},\ \forall i\in\mathcal{V}_{1}$, define
\begin{small}
\begin{equation}
 C_i(p)=\frac{1}{2p_i}\lambda_{\max}\!\left(
 |B_i|^{-1/2}
 \left[\sum_jp_j|A_{ji}|-p_i\sum_j|A_{ij}|\right]
 |B_i|^{-1/2}\right).
 \label{eq:scaled-local-threshold}
\end{equation}
\end{small}
If $\delta_i>C_i(p)$ for $i\in\mathcal V_1$, then
$R_i(p)\succ\boldsymbol{0}$ on $\mathcal V_1$.
\end{remark}

The original local threshold is recovered by setting $p_i=1$ in
\eqref{eq:scaled-local-threshold}.  

\noindent \textbf{Comparison with Assumption~\ref{directed Assumption}.} 
Compared with Assumption~\ref{directed Assumption}, Assumption~\ref{weighted_Assumption} offers several advantages. First, it completely removes the ``in-degree-dominated'' condition $\sum_{j\in\mathcal{N}_i}|A_{ij}| - \sum_{j\in\mathcal{N}'_i}|A_{ji}|\succeq 0$ for vertices in $\mathcal{V}_2$. This condition, while algebraically convenient, is often restrictive in practice. In contrast, Assumption~\ref{weighted_Assumption} replaces this local degree requirement with the positive definiteness $R_i(p)\succ 0$ on the root set $\mathcal{V}_1$. Second, the introduction of the vertex-dependent scaling factors $p_i>0$ provides additional degrees of freedom: by appropriately choosing $p_i$, one can redistribute the effect of directed imbalances across the network, potentially satisfying $R_i(p)\succeq 0$ even when the unscaled residual $R_i(1)$ is indefinite. Finally, beyond merely certifying positive stability, the new assumption, via Corollary~\ref{cor:scaled-local-threshold}, also provides the explicit contraction rate $\gamma_p$ and the weighted exponential bound \eqref{eq:weighted-contraction}, which are absent in Subsection \ref{baseline result}.

Despite the above generalizations, Assumption~\ref{directed Assumption} and Theorem~\ref{directed proposition} retain their own merits and remain valuable in certain scenarios. First, the original criterion is more straightforward to verify in practice: it only requires checking the in-degree-dominance condition and computing the local scalar thresholds $C_i$ in \eqref{lower bound}, without the need to search over scaling factors $p_i$ or to verify a global kernel condition. Second, the original theorem provides a constructive and explicit design guideline for the grounding gains $\delta_i$: each informed vertex's gain is chosen independently according to its local degree imbalance, which facilitates decentralized implementation. Third, in the special case where $p_i\equiv 1$ and all inequalities are strict, the original criterion coincides with a particular instance of the new framework, thereby serving as a baseline that is both conceptually and computationally simple. Therefore, rather than superseding the original result, the new framework in Subsection \ref{second result} should be viewed as a complementary generalization that offers greater flexibility and stronger guarantees when the original conditions are violated or when explicit contraction rates are required.

Condition \eqref{eq:absolute-generalized-balance} below is the matrix-weighted analogue of diagonal balancing by a positive left eigenvector in directed consensus theory \cite{Su TAC directed matrix weighted consensus,TrinhAhn2026}.

\begin{corollary}[Exact criterion under absolute generalized balance]
\label{thm:generalized-balanced-grounding}
Suppose that there exist $p_i>0$ such that
\begin{equation}
 p_i\sum_j|A_{ij}|=\sum_jp_j|A_{ji}|,\qquad i=1,\ldots,N.
 \label{eq:absolute-generalized-balance}
\end{equation}
Then the following statements are equivalent:
\begin{enumerate}
 \item $L_B$ is positive stable;
 \item $\operatorname{sym}(PL_B)\succ0$;
 \item $\ker L\cap\ker G=\{0\}$.
\end{enumerate}
\end{corollary}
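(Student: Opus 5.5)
The plan is to specialise the Dirichlet decomposition \eqref{eq:signed-dirichlet-decomposition} of Theorem~\ref{thm:scaled-grounding} to the balanced case. Then show that the joint kernel in \eqref{eq:edge-kernel}--\eqref{eq:ground-kernel} is exactly $\ker L\cap\ker G$, and close the cycle of implications $(2)\Rightarrow(1)\Rightarrow(3)\Rightarrow(2)$.

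First, under \eqref{eq:absolute-generalized-balance} the bracket in \eqref{eq:scaled-residual} vanishes, since $D_i^{\rm in}=\sum_j|A_{ij}|$. Hence $R_i(p)=p_iG_i\succeq0$ for every $i$. So the hypothesis of Theorem~\ref{thm:scaled-grounding} holds automatically, and
\[
x^\top\operatorname{sym}(PL_B)x=\frac12\sum_i\sum_{j\ne i}p_i(x_i-s_{ij}x_j)^\top|A_{ij}|(x_i-s_{ij}x_j)+\sum_i p_i\,x_i^\top G_ix_i .
\]
Theorem~\ref{thm:scaled-grounding} then gives that $\operatorname{sym}(PL_B)\succ0$ if and only if the only solution of the edge equations \eqref{eq:edge-kernel} together with $G_ix_i=0$ for all $i$ is $x=0$. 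Here $R_i(p)x_i=0$ is equivalent to $G_ix_i=0$ because $p_i>0$. The implication $(2)\Rightarrow(1)$ is already the last assertion of Theorem~\ref{thm:scaled-grounding}.

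The key step is to identify the solution set of the edge equations \eqref{eq:edge-kernel} with $\ker L$. One direction is immediate. Since $(Lx)_i=\sum_{j}|A_{ij}|(x_i-s_{ij}x_j)$, which follows from $-A_{ij}=-s_{ij}|A_{ij}|$ and the diagonal block $\sum_j|A_{ij}|$, every solution of \eqref{eq:edge-kernel} lies in $\ker L$. For the converse I apply the same decomposition with $G=0$. This is legitimate because balance makes the residuals identically zero, so the decomposition of $\operatorname{sym}(PL)$ contains only the edge energies. If $Lx=0$, then $x^\top PLx=0$, so the edge energy is zero. Each term is nonnegative since $|A_{ij}|\succeq0$ and $p_i>0$, so every term vanishes, which is exactly \eqref{eq:edge-kernel}. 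Likewise $G_ix_i=0$ is equivalent to $x_i^\top G_ix_i=0$ because $G_i\succeq0$. Hence $\ker\operatorname{sym}(PL_B)=\ker L\cap\ker G$, which gives $(2)\Leftrightarrow(3)$. This identification is the step I expect to need the most care: it is where balance is essential, since it is needed to make the scaled Laplacian energy exactly a sum of nonnegative edge energies.

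Finally, for $(1)\Rightarrow(3)$ I argue by contraposition. If $0\ne x\in\ker L\cap\ker G$, then $L_Bx=Lx+Gx=0$. So $0\in\sigma(L_B)$, and $L_B$ is not positive stable. This closes the cycle $(2)\Rightarrow(1)\Rightarrow(3)\Rightarrow(2)$.
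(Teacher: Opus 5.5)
Your proposal is correct and follows the paper's proof step for step. You specialise the Dirichlet decomposition so that $R_i(p)=p_iG_i$, identify $\ker\operatorname{sym}(PL_B)$ with $\ker L\cap\ker G$, and close the cycle with the eigenvector argument and the zero-eigenvalue contradiction. At the one delicate step you actually do better than the paper. The paper simply asserts that the edge equations are equivalent to $x\in\ker L$, whereas you justify the inclusion $\ker L\subseteq\{\text{edge-equation solutions}\}$ by applying the balanced decomposition with $G=0$; this is indeed where balance is essential, since that inclusion fails for unbalanced digraphs.
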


{\it Proof:}
See Appendix \ref{thm:generalized-balanced-grounding_proof}. $\hfill\blacksquare$

Under the balance condition \eqref{eq:absolute-generalized-balance}, any collection of grounding blocks
$G_i=\delta_i|B_i|\succeq0$ satisfying
$\ker L\cap\ker G=\{0\}$ is sufficient; the resulting rate is certified
by \eqref{eq:weighted-contraction-margin}.

\subsection*{Undirected Signed Graph: Exact Kernel Intersection Condition}
\label{subsec:undirected-exact}

In the undirected case, the signed Laplacian $L$ is symmetric, and positive stability of $L_B=L+G$ reduces to positive definiteness. Since $L_B$ is symmetric, we write $L_B\succ 0$ instead of ``positive stable.'' The following exact criterion holds without any structural balance assumption:
\[
L_B\succ 0 \;\Longleftrightarrow\; \ker L \cap \ker G = \{0\}. 
\]
This is an immediate consequence from Corollary \ref{thm:generalized-balanced-grounding}. To derive a more explicit condition, we distinguish two cases according to the structural balance property of the undirected signed graph.

\medskip

\noindent \textbf{Case 1: Structurally balanced.} 
Suppose the undirected signed graph is structurally balanced. Then there exists a diagonal sign matrix $\Sigma=\operatorname{diag}(\sigma_1,\ldots,\sigma_N)$ with $\sigma_i\in\{+1,-1\}$ such that the gauge-transformed Laplacian $\tilde{L}:=\Sigma L\Sigma$ is an unsigned (cooperative) Laplacian, i.e., all its off-diagonal blocks are negative semi-definite \cite{Pan TCASII bipartite consensus}. According to \cite{positive spanning tree}, for an unsigned graph with positive spanning tree, 
\[
\ker \tilde{L} = \operatorname{span}\{1_N\otimes v : v\in\mathbb R^d\}.
\]
Consequently, in the original coordinates,
\[
\ker L = \operatorname{span}\{\Sigma(1_N\otimes v) : v\in\mathbb R^d\}. \tag{21}
\]
Under the gauge transformation, the grounding matrix becomes $\tilde{G}:=\Sigma G\Sigma$, with diagonal blocks $\tilde{G}_i = \sigma_i G_i\sigma_i$. Since $\Sigma$ is an involution, the intersection condition $\ker L\cap\ker G=\{0\}$ is equivalent to $\ker \tilde{L}\cap\ker \tilde{G}=\{0\}$. Using (21), this condition holds if and only if
\[
\sum_{i=1}^N \tilde{G}_i \succ 0, \qquad\text{i.e.,}\qquad 
\sum_{i=1}^N \sigma_i G_i\sigma_i=\sum_{i=1}^N G_i\succ 0. \tag{22}
\]
Thus, no individual $G_i$ needs to be full rank; it suffices that the grounded matrices collectively cover the entire state space $\mathbb R^d$.

\medskip

\noindent \textbf{Case 2: Structurally unbalanced.} 
If the graph is connected and structurally unbalanced, and every nonzero
edge satisfies $|A_{ij}|\succ0$, then the signed Laplacian $L$ is positive definite \cite{Pan TCASII bipartite consensus, TrinhAhn2026}, $\ker L=\{0\}$. Consequently, $\ker L\cap\ker G=\{0\}$ holds automatically for any $G\succeq 0$. Hence $L_B=L+G\succ 0$ for every choice of grounded matrices $G_i\succeq 0$, without any further condition on $G$. In this case, Weyl monotonicity gives
$\lambda_{\min}(L_B)\ge\lambda_{\min}(L)$, so grounding cannot reduce
the convergence rate.

\medskip

\begin{remark}
For a disconnected graph, the above arguments apply to each connected component separately. A component that is structurally balanced requires the aggregate grounding condition (22) on that component; a structurally unbalanced component needs no grounding for positive definiteness, though grounding may still be used to shape the equilibrium.
\end{remark}

\subsection*{Illustrative example for $p_{i}$ framework}
\label{ex:weighted-stability}
Consider a directed signed matrix-weighted graph with three vertices $\mathcal V=\{v_1,v_2,v_3\}$ and edge absolute weights (positive definite diagonal matrices)
\[
A_{21}=-\begin{bmatrix}4&0\\0&3\end{bmatrix},\quad
A_{32}=\begin{bmatrix}5&0\\0&1\end{bmatrix},\quad
A_{13}=\begin{bmatrix}5&0\\0&2\end{bmatrix},
\]
where the edge from $v_1$ to $v_2$ is negative. All other $A_{ij}=0$. The graph is strongly connected and contains a negative edge, so it is genuinely signed.

We first check that the original directed assumption (Assumption~\ref{directed Assumption}) fails. Recall that this assumption requires the existence of a nontrivial partition $\mathcal V=\mathcal V_1\cup\mathcal V_2$ such that every vertex in $\mathcal V_2$ is in-degree-dominated, i.e.
\[
D_i-\sum_{j}|A_{ji}|\succeq0,\quad D_i=\sum_j|A_{ij}|.
\]
For this graph, the in-degree minus out-degree matrices are
\begin{align*}
D_1-\sum_j|A_{j1}| &= |A_{13}|-|A_{21}|
= \begin{bmatrix}1&0\\0&-1\end{bmatrix},\\
D_2-\sum_j|A_{j2}| &= |A_{21}|-|A_{32}|
= \begin{bmatrix}-1&0\\0&2\end{bmatrix},\\
D_3-\sum_j|A_{j3}| &= |A_{32}|-|A_{13}|
= \begin{bmatrix}0&0\\0&-1\end{bmatrix}.
\end{align*}
None of these matrices is positive semi-definite: the first two are indefinite, and the third is negative semi-definite. Hence no vertex is in-degree-dominated, so for any nonempty $\mathcal V_2$ the condition fails. Therefore the original assumption is not satisfied.

Now we verify the new weighted assumption (Assumption~\ref{weighted_Assumption}). Choose the partition $\mathcal V_1=\{v_2\}$, $\mathcal V_2=\{v_1,v_3\}$, and set the scaling coefficients
\[
p_1=1,\qquad p_2=0.6,\qquad p_3=2.
\]
Define the grounding matrices by $G_i=\delta_i|B_i|$, with only $v_2$ receiving grounding:
\[
B_2=|A_{21}|,\qquad \delta_2=10,\qquad G_2=10|A_{21}|,
\]
and $G_1=G_3=0$. The local residual matrices are
\[
R_i(p)=p_iG_i+\frac12\Bigl(p_iD_i-\sum_{j=1}^3 p_j|A_{ji}|\Bigr).
\]
Direct computation gives
\begin{align*}
R_1(p)&=\frac12\bigl(|A_{13}|-0.6|A_{21}|\bigr)
      =\begin{bmatrix}1.3&0\\0&0.1\end{bmatrix}\succeq0,\\
R_2(p)&=0.6(10|A_{21}|)+\frac12\bigl(0.6|A_{21}|-2|A_{32}|\bigr)
      =\begin{bmatrix}20.2&0\\0&17.9\end{bmatrix}\succ0,\\
R_3(p)&=\frac12\bigl(2|A_{32}|-|A_{13}|\bigr)
      =\begin{bmatrix}2.5&0\\0&0\end{bmatrix}\succeq0.
\end{align*}
Thus $R_i(p)\succeq0$ for all $i$, and $R_2(p)\succ0$. Moreover, there exist positive-negative paths from $v_2$ (the only vertex in $\mathcal V_1$) to $v_1$ (via $v_2\to v_3\to v_1$) and to $v_3$ (direct edge $v_2\to v_3$), all consisting of positive edges. Hence Assumption \ref{weighted_Assumption} holds.

Consequently, by Theorem~1, the grounded Laplacian
\[
L_B = L + \operatorname{blkdiag}(0,\delta_2|B_2|,0)
\]
has all eigenvalues with positive real parts. Indeed, for this example,
\[
\sigma(L_B)=\bigl\{10,\;10,\;17+4i,\;17-4i,\;13+i\sqrt2,\;13-i\sqrt2\bigr\},
\]
so $\min_{\lambda\in\sigma(L_B)}\Re(\lambda)=10>0$. This verifies that the weighted assumption successfully guarantees positive stability even when the original in-degree dominance condition is violated.

\subsection*{A note on the scaling coefficients $p_i$}

Several special cases are known a priori to admit a positive scaling.
If $|A_{ij}|=|A_{ji}|$ (in particular, for an
undirected graph), then $p_i=1$ is admissible. If all weights share a
common factor, \(|A_{ij}|=a_{ij}H\) with \(H\succ0\), the problem reduces
to the scalar balance equations
\(p_i\sum_j a_{ij}=\sum_j p_j a_{ji}\), for which the Perron--Frobenius
theory gives a strictly positive solution when the scalar graph is strongly
connected \cite{matrix analysis,Su TAC directed matrix weighted consensus}.
For a general matrix-weighted graph, existence of scalars satisfying
\eqref{eq:absolute-generalized-balance} can be decided by a linear
program. Let $\mathbb S^d:=\{X\in\mathbb R^{d\times d}:X=X^\top\}$,
$m=d(d+1)/2$, and let $\operatorname{svec}:\mathbb S^d\to\mathbb R^m$
be any fixed linear bijection. Define
$D_i=\sum_{j\ne i}|A_{ij}|$, and construct
\(\mathcal B_{ik}=\mathbf 1_{\{i=k\}}\operatorname{svec}(D_i)
-\operatorname{svec}(|A_{ki}|)\). Stacking these blocks
into \(\mathcal B\in\mathbb R^{Nm\times N}\), one solves
\[
\begin{aligned}
t^\star=\max_{p,t}\quad &t\\
\text{subject to}\quad &\mathcal Bp=0,\quad
\boldsymbol 1_N^\top p=1,\quad p_i\ge t.
\end{aligned}
\]
A feasible program with \(t^\star>0\) is equivalent to the existence
of a strictly positive normalized solution to the matrix balance
equalities. The residual and kernel conditions of
Theorem~\ref{thm:scaled-grounding} must still be checked separately.

\section{Application: Non-Trivial Consensus on Signed Matrix-Weighted Networks}\label{NTC Section}

In this section, we apply the spectral results for grounded Laplacians developed in the previous section to solve the non-trivial consensus (NTC) problem on directed signed matrix-weighted networks. As stated in the introduction, NTC refers to the convergence of all agent states to a prescribed nonzero consensus state, despite the presence of antagonistic interactions. Unlike existing consensus algorithms for signed networks that typically rely on structural balance or spanning tree conditions, our approach requires only the connectivity condition in Assumption \ref{directed Assumption} or Assumption \ref{weighted_Assumption} and the lower-bound design of coupling coefficients. For the sake of brevity, in this section we only discuss the NTC application based on Subsection \ref{baseline result}; the results for Subsection \ref{second result} are analogous.

\subsection{Problem Formulation and Augmented System}

Motivated by nonzero consensus behavior in signed networks
\cite{non-trivial consensus,ProskurnikovTAC2016,SCIS Non-trivial consensus}, we adopt the following prescribed-target definition.

\begin{definition}[Non-Trivial Consensus (NTC)]
\label{def:ntc}
For system \eqref{New system}, if there exist parameters \(\delta_i\), \(B_i\), and \(x_0\) such that
\[
\lim_{t\to\infty} x_i(t)=\boldsymbol{\theta},\qquad i=1,\ldots,N,
\]
with \(\boldsymbol{\theta}\neq\boldsymbol{0_d}\), then we say that non-trivial consensus is achieved.
\end{definition}

\begin{definition}[NTC Space]
\label{NTC space}
For any nonzero constants \(\xi,\xi_0\in\mathbb R\), the \(d\)-dimensional subspace
\[
\operatorname{span}\{\Psi(\xi,\xi_0)\},\qquad 
\Psi(\xi,\xi_0)=\begin{bmatrix}\xi(1_N\otimes I_d)\\ \xi_0 I_d\end{bmatrix}\in\mathbb R^{(Nd+d)\times d},
\]
is called the NTC space. The augmented state \(z(t)\) converges to the NTC space if and only if the agents reach NTC.
\end{definition}

\subsection{Fixed Topology: Directed Networks}

Recall dynamics \eqref{New system} under the fixed underlying topology $\mathcal{G}=(\mathcal{V},\mathcal{E},\mathcal{A})$, which is a directed signed matrix-weighted graph. Let $\mathcal{U}$ collects all vertices that receive at least one incoming negative edge in $\mathcal{G}$. For each vertex $v_{i}$, let $\varOmega_{i}$ collects all neighboring vertices that send a negative outgoing edge to $v_{i}$ in $\mathcal{G}$. 

The following theorem provides a constructive non-trivial consensus algorithm for directed signed matrix-weighted networks under Assumption \ref{directed Assumption}.

\begin{theorem}\label{NTC thm}
Consider the signed SAN \eqref{New system} on a directed
matrix-weighted graph. Suppose that Assumption~\ref{directed Assumption}
holds and define
\begin{equation}\notag
\begin{aligned}
W_i&=\sum_{j\in\varOmega_i}|A_{ij}|,\\
H_{i}&=\sum_{j\in\mathcal{N}'_{i}}|A_{ji}|-\sum_{j\in\mathcal{N}_{i}}|A_{ij}|.
\end{aligned}
\end{equation}
Assume \(W_i\succ0\) for every \(i\in\mathcal V_1\). Define
\begin{align}\label{C in NTC thm}
C=\max_{i\in\mathcal V_1} C_i,\ C_i=&\frac12\lambda_{\max}\left[
(W_i)^{-1}H_i\right],\ i\in\mathcal V_1,
\end{align}
Select the informed agents as \(\mathcal V_{\mathcal I}=\mathcal U\), set
\(\delta_i=\delta\) for \(i\in\mathcal U\), and set
\(\delta_i=0\) otherwise. Choose
\begin{equation}\label{design1}
\begin{aligned}
x_0&=k_1\boldsymbol\theta,\\
B_i&=|B_i|=W_i,& i&\in\mathcal U,\\
B_i&=\boldsymbol{0},& i&\notin\mathcal U.
\end{aligned}
\end{equation}
where \(\boldsymbol\theta\ne\boldsymbol{0_d}\) is arbitrary and
\(k_1=1+\frac{2}{\delta}\). If \(\delta>\max\{C,0\}\), then
\begin{equation}\label{desired outcome1}
\lim_{t\rightarrow+\infty}x_i(t)=\boldsymbol\theta,
\qquad i\in\underline N,
\end{equation}
which is non-trivial consensus for SAN \eqref{New system}.
\end{theorem}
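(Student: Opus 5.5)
The plan is to reduce the claim to Theorem~\ref{directed proposition} through an error coordinate centred at the target $\boldsymbol 1_N\otimes\boldsymbol\theta$. The argument has two parts. First, $\boldsymbol 1_N\otimes\boldsymbol\theta$ is an equilibrium of \eqref{New system} under the design \eqref{design1}. Second, the grounded Laplacian $L_B=L+(\Delta\otimes I_d)\mathbf{diag}(|B|)$ is positive stable, so this equilibrium is unique and globally exponentially attractive.

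\emph{Step 1 (equilibrium).} Substitute $x_i\equiv\boldsymbol\theta$ for all $i$ into the right-hand side of \eqref{New system}.
\begin{itemize}
\item For a positive in-neighbour $j\in\varGamma_i$, the term $|A_{ij}|(\boldsymbol\theta-\boldsymbol\theta)$ vanishes.
\item For a negative in-neighbour $j\in\varOmega_i$, the term is $|A_{ij}|(-\boldsymbol\theta-\boldsymbol\theta)=-2|A_{ij}|\boldsymbol\theta$. Summing these gives $-2W_i\boldsymbol\theta$.
\item For $i\in\mathcal U$, the grounding term is $\delta W_i(k_1\boldsymbol\theta-\boldsymbol\theta)=\delta W_i\cdot\frac{2}{\delta}\boldsymbol\theta=2W_i\boldsymbol\theta$, which cancels the previous term exactly.
\item For $i\notin\mathcal U$, we have $\varOmega_i=\emptyset$, so $W_i=0$, and also $B_i=\boldsymbol 0$. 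Every term is therefore zero.
\end{itemize}
Hence $x^\ast=\boldsymbol 1_N\otimes\boldsymbol\theta$ satisfies $L_Bx^\ast=(\Delta\otimes I_d)Bx_0$. Setting $e=x-x^\ast$ then gives $\dot e=-L_Be$.

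\emph{Step 2 (positive stability via Theorem~\ref{directed proposition}).} The hypotheses of Theorem~\ref{directed proposition} are checked as follows.
\begin{itemize}
\item \emph{Informed vertices cover the roots.} For $i\in\mathcal V_1$, the assumption $W_i\succ\boldsymbol 0$ forces $\varOmega_i\neq\emptyset$, so $\mathcal V_1\subseteq\mathcal U=\mathcal V_{\mathcal I}$. Consequently $\delta_i=\delta$ and $|B_i|=W_i\succ\boldsymbol 0$ on $\mathcal V_1$.
\item \emph{The thresholds coincide.} With $|B_i|=W_i$, the threshold \eqref{lower bound} becomes $\frac12\lambda_{\max}[W_i^{-1}H_i]$, which is exactly $C_i$ in \eqref{C in NTC thm}. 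This quantity is real because $W_i^{-1}H_i$ is similar to the symmetric matrix $W_i^{-1/2}H_iW_i^{-1/2}$.
\item \emph{The gain exceeds every threshold.} Since $\delta>\max\{C,0\}\ge C_i$, we get $\delta_i>C_i$ for all $i\in\mathcal V_1$.
\item \emph{Structural conditions.} Assumption~\ref{directed Assumption} supplies in-degree dominance on $\mathcal V_2$ and positive-negative reachability from $\mathcal V_1$.
\end{itemize}
Theorem~\ref{directed proposition} then yields that every eigenvalue of $L_B$ has positive real part. It follows that $L_B$ is invertible, so $x^\ast$ is the unique equilibrium, and $e(t)=e^{-L_Bt}e(0)\to\boldsymbol 0$. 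This gives \eqref{desired outcome1}, and the limit is non-trivial because $\boldsymbol\theta\ne\boldsymbol 0_d$.

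\emph{Main obstacle.} The delicate point is that the design also grounds vertices in $\mathcal U\cap\mathcal V_2$, with $\delta_i=\delta$ and $|B_i|=W_i\succeq\boldsymbol 0$. Theorem~\ref{directed proposition} is stated only with constraints on $\mathcal V_1$. I would confirm, by re-reading its proof through the expanded system and Lemma~\ref{PhiB Lemma}, that grounding enters only as the extra diagonal term $\delta_i|B_i|\succeq\boldsymbol 0$ in the lower bound \eqref{PhiB bound}. That term can only enlarge the quadratic form, so grounding on $\mathcal V_2$ preserves the conclusion. Lemma~\ref{LB spec lemma} then transfers positivity back from $\overline{L}_{\overline B}$ to $L_B$.

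If the proof of Theorem~\ref{directed proposition} turns out to use equality in the $\mathcal V_2$ diagonal blocks, I would instead argue monotonicity directly in the expanded framework: adding a positive semidefinite block-diagonal matrix to a matrix whose relevant quadratic form is already positive keeps that form positive.
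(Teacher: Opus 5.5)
Your proposal is correct and follows essentially the paper's route. The paper likewise invokes Theorem~\ref{directed proposition} for positive stability of $L_B$ and uses the same row computation $(\delta+2)W_i-\delta k_1W_i=\boldsymbol 0$ to show $\widehat L\Psi(\xi,\xi_0)=\boldsymbol 0$ with $\xi_0/\xi=k_1$; this is your equilibrium identity $L_B(\boldsymbol 1_N\otimes\boldsymbol\theta)=(\Delta\otimes I_d)Bx_0$, phrased through the $d$-dimensional null space of the augmented Laplacian instead of an error coordinate. Your explicit checks that $W_i\succ\boldsymbol 0$ forces $\mathcal V_1\subseteq\mathcal U$, and that grounding on $\mathcal U\cap\mathcal V_2$ only adds $\delta W_i\succeq\boldsymbol 0$ to the diagonal terms in \eqref{PhiB geq}, fill in points the paper leaves implicit; in fact Theorem~\ref{directed proposition} as stated places no restriction on $\mathcal V_2$, so it applies directly.
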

{\it Proof:}
See Appendix \ref{Theorem NTC Proof}. $\hfill\blacksquare$

\begin{remark}
{\rm Theorem \ref{NTC thm} presents a universal scheme for realizing non‑trivial consensus (NTC) among agents on signed matrix‑weighted networks, only demanding the connectivity condition in Assumption~1 and the explicit lower bound on \(\delta\). The NTC state \(\boldsymbol{\theta}\) can be prescribed arbitrarily. In contrast, prior work on consensus algorithms for signed matrix-weighted networks has largely revolved around bipartite consensus \cite{balancing set,Pan TCASII bipartite consensus,Su TCASII matrix weighted bipartite consensus,MiaoSu2022,MiaoSuWangJFI2023,WangEtAlJFI2025,ZhuoYinYinJFI2025} and trivial consensus \cite{balancing set}. Notably, our NTC algorithm applies regardless of whether the network is structurally balanced or unbalanced, which considerably increases its versatility.
}
\end{remark}

The preceding NTC design relies on the in-degree-dominance condition in Assumption~\ref{directed Assumption}. We now present an alternative design under the more general Assumption \ref{weighted_Assumption}. This new framework replaces the degree condition with the local residual matrices \(R_i(p)\), providing additional flexibility through the scaling factors \(p_i\).


\subsection{Undirected Networks}

For undirected signed matrix-weighted graphs, the symmetry of the Laplacian allows for a simplified condition.

\begin{corollary}\label{corollary1}
{\rm Consider the signed SAN \eqref{New system} (equivalently, FAN \eqref{augmented system1}) whose basic underlying topology $\mathcal{G}=(\mathcal{V},\mathcal{E},\mathcal{A})$ in \eqref{original system} being an undirected matrix-weighted graph. Suppose that Assumption \ref{undirected Assumption} holds. Select the informed agents $\mathcal{V}_{\mathcal{I}}=\mathcal{U}$, if the coupling coefficient $\delta_{i}=\delta>0,\ \forall i\in\mathcal{V}_{\mathcal{I}}$, and the external control signal $x_{0}$, the coupling matrix weights $B_{i}$ are designed according to \eqref{design1},
where $\boldsymbol{\theta}\neq\boldsymbol{0}_{d}$ is the prescribed consensus state, and $k_{1}=1+2/\delta$. Then
\begin{align}\label{desired outcome}
\lim_{t\rightarrow+\infty}x_{i}(t)=\boldsymbol{\theta},\ i\in\underline{N},
\vspace{-0.3cm}
\end{align}
which signifies the realization of non-trivial consensus (NTC) in the signed SAN system \eqref{New system}.}
\end{corollary}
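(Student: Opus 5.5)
The plan is to reduce the corollary to the positive definiteness of the grounded Laplacian $L_B$. Once that is in hand, we show that the prescribed consensus vector $\boldsymbol 1_N\otimes\boldsymbol\theta$ is the unique equilibrium of the follower dynamics. Write the SAN \eqref{New system} compactly as
\[
\dot x=-L_Bx+(\Delta\otimes I_d)B\,x_0 .
\]
If $L_B\succ\boldsymbol 0$, the error $e=x-x^\ast$ with $x^\ast=L_B^{-1}(\Delta\otimes I_d)Bx_0$ satisfies $\dot e=-L_Be$. Hence $e(t)\to\boldsymbol 0$ exponentially, at rate $\lambda_{\min}(L_B)$. It therefore suffices to (i) certify $L_B\succ\boldsymbol 0$ and (ii) verify $L_B(\boldsymbol 1_N\otimes\boldsymbol\theta)=(\Delta\otimes I_d)Bx_0$ under the design \eqref{design1}.

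\textbf{Step (ii), the equilibrium identity.} This is a direct block computation. For row $i$, using $L_{ii}=\sum_{j}|A_{ij}|$ and $L_{ij}=-A_{ij}$, the positive neighbours cancel and each negative neighbour contributes $2|A_{ij}|$. This gives
\[
[L(\boldsymbol 1_N\otimes\boldsymbol\theta)]_i=\sum_{j\in\varOmega_i}2|A_{ij}|\boldsymbol\theta=2W_i\boldsymbol\theta .
\]
For $i\in\mathcal U$ the grounded row gives $(2+\delta)W_i\boldsymbol\theta$. The right-hand side is $\delta W_i k_1\boldsymbol\theta=\delta(1+2/\delta)W_i\boldsymbol\theta=(\delta+2)W_i\boldsymbol\theta$, so the two sides agree. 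For $i\notin\mathcal U$ we have $W_i=\boldsymbol 0$, $\delta_i=0$ and $B_i=\boldsymbol 0$, so both sides vanish. By uniqueness (nonsingularity of $L_B$) we get $x^\ast=\boldsymbol 1_N\otimes\boldsymbol\theta$, which yields \eqref{desired outcome}.

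\textbf{Step (i), positive definiteness of $L_B$.} I would invoke Lemma \ref{undirected proposition}. Its hypotheses are that $L_B$ is symmetric (true in the undirected case, since $G$ is block diagonal with symmetric blocks) and that $|B_i|\succ\boldsymbol 0$ with $\delta_i>0$ on the root set $\mathcal V_1$. This is the main obstacle. Assumption \ref{undirected Assumption} only supplies positive-negative paths from $\mathcal V_1$, whereas the design grounds exactly the set $\mathcal U$ with $B_i=W_i$. One must therefore ensure $\mathcal V_1\subseteq\mathcal U$ and $W_i\succ\boldsymbol 0$ there, i.e.\ that each root has a negative-definite incoming edge. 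I would impose this as in Theorem \ref{NTC thm} (take $\mathcal V_1$ among vertices with $W_i\succ\boldsymbol 0$, implicitly part of the hypothesis).

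As a fallback, I would argue via the exact criterion $\ker L\cap\ker G=\{\boldsymbol 0\}$ from the undirected subsection. Take $x$ with $x^\top L_Bx=0$. The Dirichlet decomposition \eqref{eq:signed-dirichlet-decomposition} with $p_i=1$ forces $|A_{ij}|^{1/2}(x_i-s_{ij}x_j)=\boldsymbol 0$ on every edge and $G_ix_i=\boldsymbol 0$. Along definite edges this gives $x_j=\pm x_i$. A root with $W_i\succ\boldsymbol 0$ has $x_i=\boldsymbol 0$, and the zero then propagates along the positive-negative paths to every vertex of $\mathcal V_2$, so $x=\boldsymbol 0$.
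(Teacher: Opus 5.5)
Your argument is the paper's: there the corollary is obtained by invoking Lemma~\ref{undirected proposition} to get $L_B\succ\boldsymbol 0$ and then repeating the block computation from the proof of Theorem~\ref{NTC thm}, which is exactly your identity $L_B(\boldsymbol 1_N\otimes\boldsymbol\theta)=(\Delta\otimes I_d)Bx_0$, written there as $\widehat L\,\Psi(\xi,\xi_0)=\boldsymbol 0$ with $\xi_0/\xi=k_1$. The extra hypothesis you add is genuinely necessary and is missing from the corollary as written, although Theorem~\ref{NTC thm} states it explicitly: roots must be chosen with $W_i\succ\boldsymbol 0$, so that $\mathcal V_1\subseteq\mathcal U$ and $|B_i|\succ\boldsymbol 0$ there. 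For example, two vertices joined by one positive definite edge, with $\mathcal V_1=\{v_1\}$ and $\mathcal V_2=\{v_2\}$, satisfy Assumption~\ref{undirected Assumption}, but they give $\mathcal U=\emptyset$, no grounding, and $x(t)\equiv\boldsymbol 0$ from $x(0)=\boldsymbol 0$.
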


Based on Lemma \ref{undirected proposition}, the proof of Corollary \ref{corollary1} is the same as that of Theorem \ref{NTC thm}, and is thus omitted.
\begin{remark}
{\rm The conditions for undirected network NTC are more relaxed than for their directed counterparts, a key distinction being that the former eliminates the need to compute a lower bound like \eqref{C in NTC thm} for the coupling coefficient $\delta$.}
\end{remark}


\subsection{Switching Topologies}
\label{Non-Trivial Consensus with Switching Topology}

We now extend the NTC results to networks with time-varying topologies. Consider a directed matrix-weighted SAN with switching original topologies $\mathcal{G}(t)=\left(\mathcal{V},\mathcal{E}(t),\mathcal{A}(t)\right)$
\begin{equation}
\begin{aligned}\label{switching SAN}
\dot{x}_{i}(t)=&\sum_{j\in\mathcal{N}_{i}(t)}\left|A_{i j}(t)\right|\left[\mathbf{sgn}\left(A_{i j}(t)\right) x_{j}(t)-x_{i}(t)\right]\\
+&\delta_{i}(t)|B_{i}(t)|\left[\mathbf{sgn}(B_{i}(t))x_{0}(t)-x_{i}(t)\right], i\in\underline{N},
\end{aligned}
\end{equation}
and equivalently in its FAN form
\begin{equation}\label{switching FAN}
\begin{aligned}
\dot{z}_{i}(t)=\sum_{j=1}^{\widehat{N}}\left|\widehat{A}_{i j}(t)\right|\left[\mathbf{sgn}\left(\widehat{A}_{i j}(t)\right) z_{j}(t)-z_{i}(t)\right],\ i\in\underline{\widehat{N}},
\end{aligned}
\end{equation}
which can be compactly written as
\begin{align}\label{compact switching FAN}
\dot{z}(t)=-\widehat{{L}}(t)z(t),
\end{align}
in which $z_{i}(t)=x_{i}(t),\ i\in\widehat{N}$ and $z_{\widehat{N}}(t)=x_{0}(t)$, $\widehat{N}=N+1$. The network topology for \eqref{switching FAN} \eqref{compact switching FAN} is denoted as $\widehat{\mathcal{G}}(t)$. In addition,
\begin{equation}
\begin{aligned}\label{grounded L_hat(t)}
&\widehat{L}(t)\\
=&\begin{bmatrix}
L(t)+(\Delta(t)\otimes I_{d})\cdot\mathbf{diag}(|B(t)|)&-(\Delta(t)\otimes I_{d})B(t)\\
\boldsymbol{0}_{d\times Nd}\ &\boldsymbol{0}_{d\times d}\end{bmatrix}.
\end{aligned}
\end{equation}
Correspondingly, denote the grounded matrix-weighted Laplacian $L_{B}(t)=L(t)+(\Delta(t)\otimes I_{d})\cdot\mathbf{diag}(|B(t)|)$. In the light of previous research on switching networks \cite{consensus problems} \cite{switching matrix-weighted cluster consensus} \cite{time-varying consensus TAC2011Cao}, the assumptions listed below are employed.

\begin{assumption}\label{interval Assumption}
{\rm A time sequence $\{t_{k}\}_{k\in\mathbb{N}}$ with $\lim\limits_{t\rightarrow+\infty}t_{k}=+\infty$ exists, such that its dwell time $\Delta t_{k}=t_{k+1}-t_{k}\geq\alpha,\ \forall k\in\mathbb{N}$, in which $\alpha>0,\ t_{0}=0$. In addition, $\mathcal{G}(t)$ remains constant on each interval $[t_{k},t_{k+1})$, i.e. $\mathcal{G}(t)=\mathcal{G}^{(k)}, t\in[t_{k},t_{k+1}),\ \forall k\in\mathbb{N}$.}
\end{assumption}

\begin{assumption}\label{finite interval Assumption}
{\rm In addition to Assumption \ref{interval Assumption}, there exists a finite networks set $\{\mathcal{G}_{1},...,\mathcal{G}_{M}\}$, where $M\in\mathbb{N}^{*}$, such that $\mathcal{G}(t)\in\{\mathcal{G}_{1},...,\mathcal{G}_{M}\},\ \forall t\geq0$.} 
\end{assumption}

In line with Assumption \ref{interval Assumption}, on each interval $[t_{k},t_{k+1})$ we denote $\mathcal{G}(t),\ \widehat{\mathcal{G}}(t)$ as $\mathcal{G}_{[t_{k},t_{k+1})}(t)=\mathcal{G}^{(k)},\ \widehat{\mathcal{G}}_{[t_{k},t_{k+1})}(t)=\widehat{\mathcal{G}}^{(k)}$, and denote $L(t),\ \widehat{L}(t)$ as $L_{[t_{k},t_{k+1})}=L^{(k)},\ \widehat{L}_{[t_{k},t_{k+1})}=\widehat{L}^{(k)}$. It will be explicit in the following, that as Assumption \ref{finite interval Assumption} holds, the relation between $\mathcal{G}(t)$ and $\widehat{\mathcal{G}}(t)$ ensures that $\widehat{\mathcal{G}}(t)$ is also chosen from a finite set, which can be denoted as $\{\widehat{\mathcal{G}}_{1},...,\widehat{\mathcal{G}}_{M}\}$. Now we are ready to present one necessary condition for the convergence of switching network \eqref{compact switching FAN}. To begin with, Lemma \ref{switching lemma} is provided.
\begin{lemma}\label{switching lemma}
{\rm Under Assumption \ref{finite interval Assumption}, if the states of switching network \eqref{compact switching FAN} converge to $\lim_{t\rightarrow+\infty}z(t)=z^{*}$, then $\lim_{t\rightarrow+\infty}\widehat{L}(t)z^{*}=0$.}
\end{lemma}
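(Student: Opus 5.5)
We argue by contradiction. The finiteness in Assumption \ref{finite interval Assumption} combined with the dwell-time bound $\Delta t_k\ge\alpha$ in Assumption \ref{interval Assumption} is what makes this work. First we interpret the conclusion. Since $\widehat{L}(t)$ takes values in the finite set $\{\widehat{L}_1,\dots,\widehat{L}_M\}$ (the augmented graphs are finitely many), the statement $\lim_{t\to\infty}\widehat{L}(t)z^*=0$ is equivalent to the following claim: every mode $\widehat{L}_m$ that is active on intervals with arbitrarily large $t$ satisfies $\widehat{L}_m z^*=0$. Indeed, for $t$ large only such modes occur, and then $\widehat{L}(t)z^*$ is identically zero. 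So it suffices to fix such an infinitely-recurring mode $m$, suppose $\widehat{L}_m z^*=\boldsymbol{v}\neq 0$, and derive a contradiction.

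The key step is to integrate the dynamics \eqref{compact switching FAN} over a switching interval $[t_k,t_{k+1})$ on which mode $m$ is active. By Assumption \ref{interval Assumption}, this interval has length at least $\alpha$, so we may restrict to the subinterval $[t_k,t_k+\alpha]$. This gives
\[
z(t_k+\alpha)-z(t_k)=-\int_{t_k}^{t_k+\alpha}\widehat{L}_m z(s)\,ds
=-\alpha\widehat{L}_m z^*-\int_{t_k}^{t_k+\alpha}\widehat{L}_m\bigl(z(s)-z^*\bigr)\,ds .
\]
Since $z(t)\to z^*$, the left-hand side tends to zero along the sequence of such $k$ (there are infinitely many of them). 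The last integral is bounded by $\alpha\|\widehat{L}_m\|\sup_{s\ge t_k}\|z(s)-z^*\|$, which also tends to zero. Hence $\alpha\widehat{L}_m z^*\to 0$. But $\alpha\widehat{L}_m z^*$ is a constant vector, so $\widehat{L}_m z^*=0$, contradicting $\boldsymbol{v}\neq0$. Because the mode set is finite, after some time $T$ only infinitely-recurring modes appear; for $t\ge T$ we then have $\widehat{L}(t)z^*=0$, and the limit follows.

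The main obstacle is a technical point: if the switching signal is allowed to take values outside the $M$ modes or to have vanishing dwell times, the integral argument breaks down. The dwell time $\alpha>0$ is exactly what guarantees a uniform interval length on which the constant term $\alpha\widehat{L}_m z^*$ cannot be cancelled. One also needs that the external signal $x_0(t)$ is part of $z$ and converges; this is covered by the hypothesis that $z(t)$ converges. We do not need boundedness of $\|\widehat{L}(t)\|$ beyond the finite-mode bound $\max_m\|\widehat{L}_m\|$.
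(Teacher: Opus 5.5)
Your proof is correct, but it takes a different route from the paper's.

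\textbf{The paper's route.} It bounds $\dot z$ and its right-hand derivative on each switching interval and invokes Barbalat's lemma to get $\dot z(t)\to 0$. It then concludes from $\|\widehat{L}(t)z^*\|\le\|\widehat{L}(t)\|\,\|z(t)-z^*\|+\|\dot z(t)\|$.

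\textbf{Your route.} You integrate the dynamics over a window $[t_k,t_k+\alpha]$ inside each dwell interval. The constant term $\alpha\widehat{L}_m z^*$ is then squeezed to zero.

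\textbf{What your approach buys.} It uses the dwell time $\alpha$ explicitly, which the paper's argument needs but never states. Barbalat's lemma requires $\dot z$ to be uniformly continuous on $[0,\infty)$. That fails here because $\dot z=-\widehat{L}(t)z$ jumps at the switching instants, so the paper's step really needs a piecewise version of Barbalat's lemma. That version works only because interval lengths are bounded below. Your window integration is exactly that piecewise argument carried out by hand. It also gives directly the stronger conclusion that $\widehat{L}(t)z^*$ is eventually identically zero, which is Theorem \ref{necessary condition}, rather than just a limit.

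\textbf{A simplification.} The finite-mode reduction is not essential. Applying the same estimate on every interval gives $\|\widehat{L}^{(k)}z^*\|\le\alpha^{-1}\|z(t_k+\alpha)-z(t_k)\|+\sup_t\|\widehat{L}(t)\|\,\sup_{s\ge t_k}\|z(s)-z^*\|\to 0$. So a uniform bound on $\|\widehat{L}(t)\|$ together with the dwell time already suffices.

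\textbf{What the paper's approach buys.} It produces $\dot z\to 0$ as a by-product. It is also the more natural template when $\widehat{L}(t)$ varies uniformly continuously rather than being piecewise constant.

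\textbf{Minor points.} Both arguments rely on the same implicit assumption: the augmented Laplacian, including $\delta_i(t)$ and $B_i(t)$, is constant on each $[t_k,t_{k+1})$ and uniformly bounded. Also, your ``contradiction'' framing is unnecessary, since the argument is direct.
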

{\it Proof:}
See Appendix \ref{Lemma 5 Proof}. $\hfill\blacksquare$\\

\begin{theorem}\label{necessary condition}
{\rm Suppose that Assumption \ref{finite interval Assumption} holds, and any graph $\mathcal{G}_{i}$ in subset $\{\mathcal{G}_{1},...,\mathcal{G}_{M_{1}}\}$ appears in $\{\mathcal{G}^{(k)}\}_{k\geq0}$ for infinite times, where $M_{1}\leq M$ and $M_{1}\in\mathbb{N}^{*}$. If $\lim_{t\rightarrow+\infty}z(t)=z^{*}$, then $z^{*}\in\bigcap_{i\in\underline{M_{1}}}\mathbf{null}\left(\widehat{L}(\widehat{\mathcal{G}}_{i})\right)$.}
\end{theorem}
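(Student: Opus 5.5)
The plan is to reduce the statement to Lemma \ref{switching lemma}, using the finiteness of the mode set. Under Assumption \ref{finite interval Assumption}, the augmented topology $\widehat{\mathcal{G}}(t)$ takes values in the finite set $\{\widehat{\mathcal{G}}_{1},\dots,\widehat{\mathcal{G}}_{M}\}$. Hence $\widehat{L}(t)$ takes values in the finite set of matrices $\{\widehat{L}(\widehat{\mathcal{G}}_{1}),\dots,\widehat{L}(\widehat{\mathcal{G}}_{M})\}$. Since $z(t)\to z^{*}$, Lemma \ref{switching lemma} gives $\lim_{t\to+\infty}\widehat{L}(t)z^{*}=0$. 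In other words, for every $\varepsilon>0$ there is $T_\varepsilon$ such that $\|\widehat{L}(t)z^{*}\|<\varepsilon$ for all $t\ge T_\varepsilon$.

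Fix any $i\in\underline{M_{1}}$. By hypothesis, $\mathcal{G}_{i}$ appears as $\mathcal{G}^{(k)}$ for infinitely many indices $k$. By Assumption \ref{interval Assumption}, $t_k\to+\infty$ and each such mode is active on the whole interval $[t_k,t_{k+1})$, which has length at least $\alpha>0$. So there is an unbounded sequence of times $\tau_m\to+\infty$ with $\widehat{L}(\tau_m)=\widehat{L}(\widehat{\mathcal{G}}_{i})$; taking $\tau_m=t_{k_m}$ suffices. Along this subsequence, $\widehat{L}(\widehat{\mathcal{G}}_{i})z^{*}=\widehat{L}(\tau_m)z^{*}\to 0$. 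The left-hand side is a constant vector independent of $m$, so it must equal zero. Hence $z^{*}\in\mathbf{null}(\widehat{L}(\widehat{\mathcal{G}}_{i}))$. Since $i\in\underline{M_{1}}$ was arbitrary, $z^{*}$ lies in the intersection over $i\in\underline{M_{1}}$.

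The main subtlety is the identification of modes. Two different $k$ with the same $\mathcal{G}^{(k)}=\mathcal{G}_{i}$ must yield the same augmented Laplacian $\widehat{L}(\widehat{\mathcal{G}}_{i})$. This requires that the grounding parameters $\Delta(t)$, $B(t)$ attached to a given original mode are the same each time that mode recurs, as implicitly asserted when the paper writes $\{\widehat{\mathcal{G}}_{1},\dots,\widehat{\mathcal{G}}_{M}\}$. I would state this explicitly: the indexing is by augmented mode, and the infinitely recurring set is understood in terms of $\widehat{\mathcal{G}}_{i}$. Beyond that, the argument is a direct subsequence extraction, and no spectral estimates are needed.
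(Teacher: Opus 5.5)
Your argument is correct and follows the route the paper intends. The paper omits the proof, calling the theorem a direct consequence of Lemma \ref{switching lemma}, and your step is exactly the omitted one: take the recurrence times $t_{k_m}\to+\infty$ with $\widehat{L}(t_{k_m})=\widehat{L}(\widehat{\mathcal{G}}_{i})$, and note that the constant vector $\widehat{L}(\widehat{\mathcal{G}}_{i})z^{*}$ tends to zero and therefore vanishes. Your caveat about mode identification is also sound: each recurrence of an original mode must carry the same grounding data $\Delta$, $B$, which is the assumption the paper makes implicitly when it writes the augmented modes as $\{\widehat{\mathcal{G}}_{1},\dots,\widehat{\mathcal{G}}_{M}\}$.
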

Theorem \ref{necessary condition} is a direct result from Lemma \ref{switching lemma}, thus its proof is omitted. A necessary condition is established in Theorem \ref{necessary condition} for the convergence of switching network \eqref{compact switching FAN}, which highlights the role of null spaces of those temporary Laplacians appearing for infinite times.
Before proceed, one essential lemma on logarithmic norm is introduced.
\begin{lemma}\cite{logarithmic norm}\label{logarithmic norm lemma}
{\rm Consider matrix $A\in\mathbb{R}^{d\times d}$, where $d\in\mathbb{N}^{*}$. The logarithmic norm of spectral norm $\Vert\cdot\Vert_{2}$ is defined as
\begin{align}\notag
\mu_{2}(A)=\lim\limits_{h\rightarrow0^{+}}\dfrac{\Vert I+hA\Vert_{2}-1}{h}.
\end{align}
Then the following two results hold:
\begin{enumerate}
\item $\mu_{2}(A)=\lambda_{max}\left(\dfrac{A+A^{\top}}{2}\right)$;
\item $\left\Vert e^{tA}\right\Vert_{2}\leq e^{t\mu_{2}(A)},\ \forall t\geq0$.
\end{enumerate}}
\end{lemma}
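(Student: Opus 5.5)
The statement to prove is Lemma~\ref{logarithmic norm lemma}, the classical logarithmic-norm facts. The plan is to prove part~1 by a first-order expansion of $\Vert I+hA\Vert_2$, and then obtain part~2 from a differential inequality for $\Vert x(t)\Vert_2$ along $\dot x=Ax$.

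\textbf{Part 1.} Write $\Vert I+hA\Vert_2^2=\lambda_{max}\big((I+hA)^{\top}(I+hA)\big)=\lambda_{max}\big(I+2h\,\mathrm{sym}(A)+h^2A^{\top}A\big)$, with $\mathrm{sym}(A)=(A+A^{\top})/2$ as in the paper. Weyl's inequality gives
\[
\big|\lambda_{max}(I+2h\,\mathrm{sym}(A)+h^2A^{\top}A)-1-2h\lambda_{max}(\mathrm{sym}(A))\big|\le h^2\Vert A\Vert_2^2 .
\]
Hence $\Vert I+hA\Vert_2=\sqrt{1+2h\lambda_{max}(\mathrm{sym}(A))+O(h^2)}=1+h\lambda_{max}(\mathrm{sym}(A))+O(h^2)$ as $h\to0^{+}$, by the expansion $\sqrt{1+u}=1+u/2+O(u^2)$. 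Dividing $\Vert I+hA\Vert_2-1$ by $h$ and letting $h\to0^{+}$ yields $\mu_2(A)=\lambda_{max}(\mathrm{sym}(A))$. This also shows the limit exists.

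\textbf{Part 2.} Fix $x_0\in\mathbb{R}^d$ and set $x(t)=e^{tA}x_0$, so that $\dot x=Ax$. A direct route avoids Dini derivatives: let $V(t)=\Vert x(t)\Vert_2^2$. Then $\dot V=2x^{\top}Ax=2x^{\top}\mathrm{sym}(A)x\le 2\mu_2(A)V$, using part~1 and the Rayleigh quotient bound. By Grönwall, or by differentiating $e^{-2\mu_2 t}V(t)$, we get $V(t)\le e^{2t\mu_2(A)}V(0)$. Therefore $\Vert e^{tA}x_0\Vert_2\le e^{t\mu_2(A)}\Vert x_0\Vert_2$, and taking the supremum over unit $x_0$ gives the claimed bound.

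An alternative for part~2 uses the semigroup property. One writes $\Vert e^{(t+h)A}x_0\Vert\le\Vert I+hA\Vert\,\Vert e^{tA}x_0\Vert+o(h)$, so that $D^{+}\Vert x(t)\Vert_2\le\mu_2(A)\Vert x(t)\Vert_2$, and then applies a comparison lemma for right-hand derivatives. The quadratic-form route above is simpler, so I would use it.

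The only delicate point is making the $O(h^2)$ remainder in part~1 uniform, which Weyl's perturbation bound does.
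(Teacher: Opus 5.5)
Your proof is correct and complete. There is no in-paper proof to match it against: the paper imports this lemma from the cited reference on logarithmic norms without proof. It uses the lemma only in the proof of Theorem~\ref{switching NTC}, where part~1 evaluates $\mu_2(-TL_B^{(i)})=-T\lambda_{\min}(S^{(i)})$ and part~2 is applied with $t=1$ and $A=-TL_B^{(i)}$.

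Your part~1 gives more than the paper states. The Weyl bound $0\le\lambda_{\max}\bigl(I+2h\,\mathrm{sym}(A)+h^2A^{\top}A\bigr)-1-2h\lambda_{\max}(\mathrm{sym}(A))\le h^2\Vert A\Vert_2^2$, valid for $h>0$ because $h^2A^{\top}A\succeq0$, combined with $\sqrt{1+u}=1+u/2+O(u^2)$, also proves that the defining one-sided limit exists. The statement takes that existence for granted.

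For part~2, your energy estimate $\dot V=2x^{\top}\mathrm{sym}(A)x\le2\mu_2(A)V$ relies on the Euclidean inner-product structure. It needs only $\lambda_{\max}(\mathrm{sym}(A))$, not the limit definition itself. The standard proof in the logarithmic-norm literature is the one you list as an alternative: $D^{+}\Vert x(t)\Vert\le\mu(A)\Vert x(t)\Vert$, followed by a comparison argument. That route has the advantage of giving $\Vert e^{tA}\Vert\le e^{t\mu(A)}$ for every induced norm. The paper only ever uses $\Vert\cdot\Vert_2$, so your shorter quadratic-form argument is sufficient.
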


We are now in a position to present the primary result of Subsection \ref{Non-Trivial Consensus with Switching Topology}: realizing non-trivial consensus (NTC) for directed signed matrix-weighted networks under switching topologies. From \eqref{switching SAN} and \eqref{grounded L_hat(t)} one has
\begin{equation}\notag
\begin{aligned}
\dot{x}(t)=&-\left[L(t)+\left(\Delta(t)\otimes I_{d}\right)\cdot\mathbf{diag}(|B(t)|)\right]x(t)\\
&+\left(\Delta(t)\otimes I_{d}\right)B(t)x_{0}(t)\\
\triangleq&-L_{B}(t)x(t)+\Delta_{B}(t)x_{0}(t),
\end{aligned}
\end{equation}
in which $\Delta_{B}(t)=\left(\Delta(t)\otimes I_{d}\right)B(t)$. Denote error vector $\varepsilon(t)=x(t)-(\boldsymbol{1}_{N}\otimes\boldsymbol{\theta})$, where $\boldsymbol{\theta}\in\mathbb{R}^{d}$ is the prescribed NTC state for $x_{i},\ i\in\underline{N}$. By Assumption \ref{interval Assumption}, for each interval $t\in[t_{k},t_{k+1})$, let $\mathcal{U}(t)=\mathcal{U}^{(k)}$ collects all vertices with incoming negative edge(s) in $\mathcal{G}(t)=\mathcal{G}^{(k)}$, let $\varOmega_{i}(t)=\varOmega_{i}^{(k)}$ collects all the neighbor vertices of $v_{i}$ that has outgoing negative edge pointing to $v_{i}$ in $\mathcal{G}^{(k)}$.\\

\begin{theorem}\label{switching NTC}
{\rm Consider matrix-weighted SAN \eqref{switching SAN} with switching topologies $\mathcal{G}(t)=\left(\mathcal{V},\mathcal{E}(t),\mathcal{A}(t)\right)$, equivalently, FAN \eqref{switching FAN} \eqref{compact switching FAN}. Let Assumption \ref{finite interval Assumption} be satisfied, in which Assumption \ref{directed Assumption} holds for each graph $\mathcal{G}_{i},\ i\in\underline{M}$, and $\sum_{j\in\varOmega_{i}(t)}|A_{ij}(t)|\succ\boldsymbol{0}_{d\times d}$ for every $i\in\mathcal{V}_{1}(t)$. On each interval $t\in[t_{k},t_{k+1})$, denote
\begin{footnotesize}
\begin{equation}\label{C in switching NTC thm}\notag
\begin{aligned}
&C(t)=\max\limits_{i\in\mathcal{V}_{1}(t)}\{C_{i}(t)\},\\
&C_{i}(t)\\
=&\frac{1}{2}\lambda_{max}\left[\left(\sum_{j\in\varOmega_{i}(t)}|A_{ij}(t)|\right)^{-1}\left(\sum_{j\in\mathcal{N}'_{i}(t)}|A_{ji}(t)|-\sum_{j\in\mathcal{N}_{i}(t)}|A_{ij}(t)|\right)\right],\\ &i\in\mathcal{V}_{1}(t).
\end{aligned}
\end{equation}
\end{footnotesize}
Select the informed agents as $\mathcal{V}_{\mathcal{I}}(t)=\mathcal{U}(t)=\mathcal{U}^{(k)}$, let coupling coefficients $\delta_{i}(t)=\delta(t),\ \forall i\in\mathcal{V}_{\mathcal{I}}(t)$. If $\delta(t)>C(t)$, and the external control signal $x_{0}(t)$ together with the coupling matrix weights $B_{i}(t)$ are set as
\begin{align}\notag
x_{0}(t)=k_{1}(t)\boldsymbol{\theta},\ |B_{i}(t)|=\sum_{j\in\varOmega_{i}(t)}|A_{ij}(t)|,\ i\in\mathcal{V}_{\mathcal{I}}(t),
\end{align}
where $k_{1}(t)=1+\frac{2}{\delta(t)}$, $\boldsymbol{\theta}\neq\boldsymbol{0}_{d}$ is the prescribed consensus state. Then
\begin{align}\label{switching desired outcome}
\lim_{t\rightarrow+\infty}\varepsilon(t)=\boldsymbol{0},
\end{align} 
which implies that non-trivial consensus (NTC) is achieved in switching matrix-weighted system \eqref{switching SAN}.}
\end{theorem}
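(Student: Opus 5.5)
The plan is to reduce the statement to exponential stability of a switched homogeneous linear error system, and then to apply the logarithmic-norm bound of Lemma~\ref{logarithmic norm lemma} mode by mode.

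\textbf{Step 1: the design cancels the forcing term on every interval.} Fix an interval $[t_k,t_{k+1})$ and write $\varepsilon=x-\boldsymbol 1_N\otimes\boldsymbol\theta$. Then
\[
\dot\varepsilon=-L_B^{(k)}\varepsilon-L_B^{(k)}(\boldsymbol 1_N\otimes\boldsymbol\theta)+\Delta_B^{(k)}x_0 .
\]
I will check the residual row by row; this is the same computation as in the fixed-topology Theorem~\ref{NTC thm}.
\begin{itemize}
\item A positive neighbor contributes $|A_{ij}|(\boldsymbol\theta-\boldsymbol\theta)=0$.
\item A negative neighbor contributes $|A_{ij}|(\boldsymbol\theta+\boldsymbol\theta)$, so $[L^{(k)}(\boldsymbol 1_N\otimes\boldsymbol\theta)]_i=2W_i^{(k)}\boldsymbol\theta$, where $W_i^{(k)}=\sum_{j\in\varOmega_i^{(k)}}|A_{ij}^{(k)}|$.
\item Taking $B_i=|B_i|=W_i^{(k)}$, so that $\mathbf{sgn}(B_i)=1$, the grounding adds $\delta W_i^{(k)}\boldsymbol\theta$ to that row.
\item The input term is $\delta W_i^{(k)}(1+2/\delta)\boldsymbol\theta=\delta W_i^{(k)}\boldsymbol\theta+2W_i^{(k)}\boldsymbol\theta$.
\end{itemize}
These cancel exactly for $i\in\mathcal U^{(k)}$. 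For $i\notin\mathcal U^{(k)}$ all three terms vanish. Hence $\dot\varepsilon=-L_B^{(k)}\varepsilon$ on each interval, and $\varepsilon$ is continuous across the switching instants $t_k$.

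\textbf{Step 2: a uniform symmetric margin for each mode.} Since $W_i^{(k)}\succ0$ on $\mathcal V_1^{(k)}$, every root vertex lies in $\mathcal U^{(k)}$ and so receives $|B_i|=W_i^{(k)}\succ0$. Vertices in $\mathcal U^{(k)}\setminus\mathcal V_1^{(k)}$ only receive additional positive semidefinite diagonal grounding, which can only help. With $\delta(t)>C(t)$, the hypotheses of Theorem~\ref{directed proposition} therefore hold for each mode.

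Positive stability alone does not control a switched product, so I will instead extract from the proof of Theorem~\ref{directed proposition} the stronger statement $\operatorname{sym}(L_B^{(k)})\succ0$. The route is as follows. Lemma~\ref{PhiB Lemma}, applied to the expanded cooperative system, gives $\bar x^\top\overline L_{\overline B}\bar x>0$ for all $\bar x\neq0$. Evaluating at the lifted vector $\bar x=(x^\top,-x^\top)^\top$ should give $\bar x^\top\overline L_{\overline B}\bar x=2\,x^\top L_Bx$, hence $x^\top L_B^{(k)}x>0$ for all $x\ne0$.

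\textbf{Step 3: finitely many modes give a uniform rate.} By Assumption~\ref{finite interval Assumption} there are finitely many graphs. I take $\delta(t)$ to be a fixed admissible value for each graph $\mathcal G_m$, so that there are only finitely many matrices $L_B$. Set
\[
c=\min_{m}\lambda_{\min}\bigl(\operatorname{sym}(L_B(\mathcal G_m))\bigr)>0 .
\]
Lemma~\ref{logarithmic norm lemma} gives $\mu_2(-L_B^{(k)})=-\lambda_{\min}(\operatorname{sym}L_B^{(k)})\le-c$, hence $\|e^{-L_B^{(k)}\tau}\|_2\le e^{-c\tau}$. Chaining these bounds over the intervals (each of length at least $\alpha$, so only finitely many switches occur on any bounded interval and the solution is well defined) yields $\|\varepsilon(t)\|_2\le e^{-ct}\|\varepsilon(0)\|_2\to0$.

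The main obstacle I expect is Step~2: turning the spectral conclusion of Theorem~\ref{directed proposition} into the quadratic-form statement $\operatorname{sym}(L_B)\succ0$. If the lifting identity does not work out cleanly, I will fall back on Step~3 with a common Lyapunov argument. That is, I would prove $x^\top L_B^{(k)}x>0$ directly by rewriting the quadratic form in the sign-separated form used for the expanded graph, and then apply Lemma~\ref{PhiB Lemma} to that form.
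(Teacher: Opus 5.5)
Your proposal is correct and follows essentially the same route as the paper's proof. You cancel the forcing term by the design so that $\dot\varepsilon=-L_B^{(k)}\varepsilon$, obtain $\operatorname{sym}(L_B^{(k)})\succ0$ from the proof of Theorem~\ref{directed proposition}, and chain the logarithmic-norm bounds of Lemma~\ref{logarithmic norm lemma} over finitely many modes, which is exactly what the paper does. The differences are refinements rather than a new route: your lifting identity $\bar x^\top\overline L_{\overline B}\bar x=2x^\top L_Bx$ for $\bar x=(x^\top,-x^\top)^\top$ makes explicit the positive definiteness that the paper only cites, and your choice of one fixed $\delta$ per graph states the assumption the paper silently needs for $\varLambda<1$, while also giving the cleaner bound $e^{-ct}$ in place of $\varLambda^{k}$.
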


{\it Proof:}
See Appendix \ref{Theorem switching NTC Proof}. $\hfill\blacksquare$\\

\begin{remark}
{\rm Theorem \ref{switching NTC} establishes a clear and specific strategy to realize directed signed matrix-weighted network NTC under switching topologies. The dynamic adjustment of algorithm parameters is enabled in response to any switch of the interaction topologies.}
\end{remark}

Results in Section \ref{NTC Section} demonstrate that the spectral properties of grounded Laplacians established in Section \ref{MAIN RESULTS} serve as a unified tool for solving the non-trivial consensus problem under both fixed and switching topologies, without imposing structural balance or restrictive connectivity conditions.

\section{Simulation}\label{SIMULATION}
\subsection{Non-Trivial Consensus with Fixed Topology}\label{fixed simulation}
The topology of the directed signed matrix-weighted network in our simulation is presented in Fig. \ref{directed topology}:
\begin{figure}[h]
\begin{center}
\includegraphics[width=0.90\linewidth]{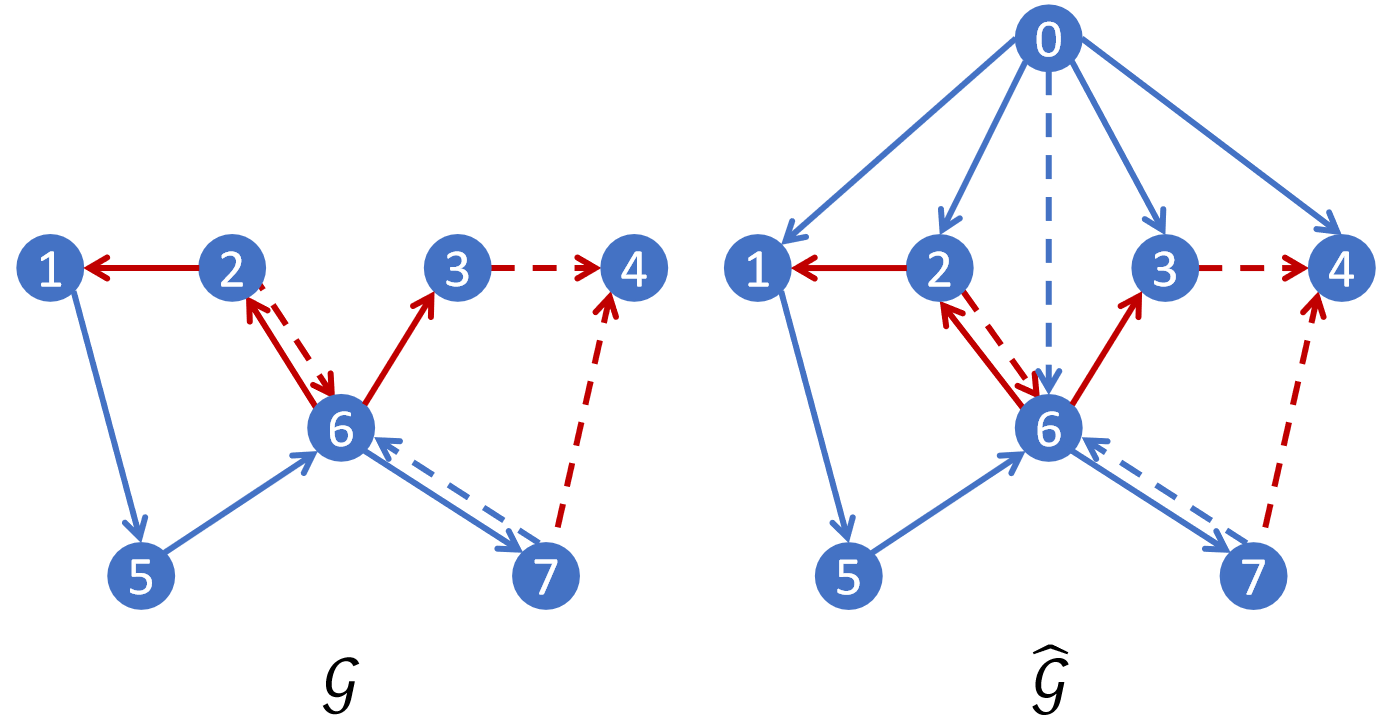}
\caption{Directed topologies: $\mathcal{G}$ (original FAN \eqref{original system}) and $\widehat{\mathcal{G}}$ (SAN \eqref{augmented system2}) in Theorem \ref{NTC thm}. Here, positive and negative (semi‑)definite edges are respectively characterized by solid (dashed) blue and red lines.} 
\label{directed topology}                               
\end{center}
\end{figure}
Specifically, the edge matrix weights are
\begin{small}
\begin{equation}\label{G1_weights}
\begin{aligned}
&A_{12}=-\begin{bmatrix}5&2&1\\2&4&1\\1&1&3\end{bmatrix}\prec\boldsymbol{0},\ 
A_{47}=-\begin{bmatrix}0&0&0\\0&1&0\\0&0&0\end{bmatrix}\preceq\boldsymbol{0},\\
&A_{65}=\begin{bmatrix}6&1&-1\\1&8&2\\-1&2&6\end{bmatrix}\succ\boldsymbol{0},\ 
A_{67}=\begin{bmatrix}1&0&0\\0&0&0\\0&0&0\end{bmatrix}\preceq\boldsymbol{0},\\ 
&A_{76}=\begin{bmatrix}2&1&0\\1&3&-1\\0&-1&2\end{bmatrix}\succ\boldsymbol{0},\ 
A_{43}=-\begin{bmatrix}2&0&1\\0&0&0\\1&0&3\end{bmatrix}\preceq\boldsymbol{0},
\end{aligned}
\end{equation}
\end{small}
\noindent and $A_{51}=A_{65},\ A_{26}=A_{36}=-0.5I_{3},\ A_{62}=A_{47}$. $\mathcal{G}=(\mathcal{V},\mathcal{E},\mathcal{A})$ in Fig. \ref{directed topology} satisfies Assumption \ref{directed Assumption} with the vertices decomposition $\mathcal{V}=\mathcal{V}_{1}\cup\mathcal{V}_{2},\ \mathcal{V}_{1}=\{v_{1},v_{2},v_{3},v_{4}\},\ \mathcal{V}_{2}=\{v_{5},v_{6},v_{7}\}$. The prescribed non-trivial consensus (NTC) state is $\boldsymbol{\theta}=[1,2,-1]^{\top}$. According to Theorem \ref{NTC thm}, we select the informed agents $\mathcal{V}_{\mathcal{I}}=\{v_{1},v_{2},v_{3},v_{4},v_{6}\}$, the coupling coefficient $\delta$ is lower-bounded by $C=6.9495$, thus we set $\delta=C+0.1=7.0495$, the external signal $x_{0}=(1+2/\delta)\boldsymbol{\theta}=[1.2837,2.5674,-1.2837]^{\top}$, and the coupling matrix weights
\begin{equation}
\begin{aligned}\notag
&B_{1}=|A_{12}|\succ\boldsymbol{0},\ 
B_{2}=|A_{26}|\succ\boldsymbol{0},\ 
B_{3}=|A_{36}|\succ\boldsymbol{0},\\
&B_{4}=|A_{43}|+|A_{47}|\succ\boldsymbol{0},\ 
B_{6}=|A_{62}|\succeq\boldsymbol{0}.
\end{aligned}
\end{equation}
Following the above settings, the grounded matrix-weighted Laplacian $L_{B}$ is derived, and $\min\limits_{i\in\underline{Nd}}\{\mathbf{Re}\left(\lambda_{i}(L_{B})\right)\}=0.9334>0$, which guarantees the network states convergence. Further,
under Theorem \ref{NTC thm}'s technique, the states evolution of the SAN \eqref{New system} with topology in Fig. \ref{directed topology}, is shown in Fig. \ref{directed state evolution}.

According to the proof, positive-negative path condition is crucial for the positive-stability of $L_{B}$ and further the NTC realization. This cruciality can also be reflected in simulation. By changing the positive definite matrix weight $A_{65}$ in \eqref{G1_weights} into $A_{65}=\begin{bmatrix}0&0&0\\0&8&0\\0&0&3\end{bmatrix}\succeq\boldsymbol{0}$, there would not exist any positive-negative path from some $v_{i}\in\mathcal{V}_{1}$ to $v_{6}\in\mathcal{V}_{2}$. Now for the grounded matrix-weighted Laplacian $L_{B}$, $\min\limits_{i\in\underline{Nd}}\{\mathbf{Re}\left(\lambda_{i}(L_{B})\right)\}=0$, and NTC can not be achieved, as shown in Fig. \ref{anti directed state evolution}.

\begin{figure}[htbp]
\centering
\begin{minipage}[b]{0.95\linewidth}
\centering
\includegraphics[width=\linewidth]{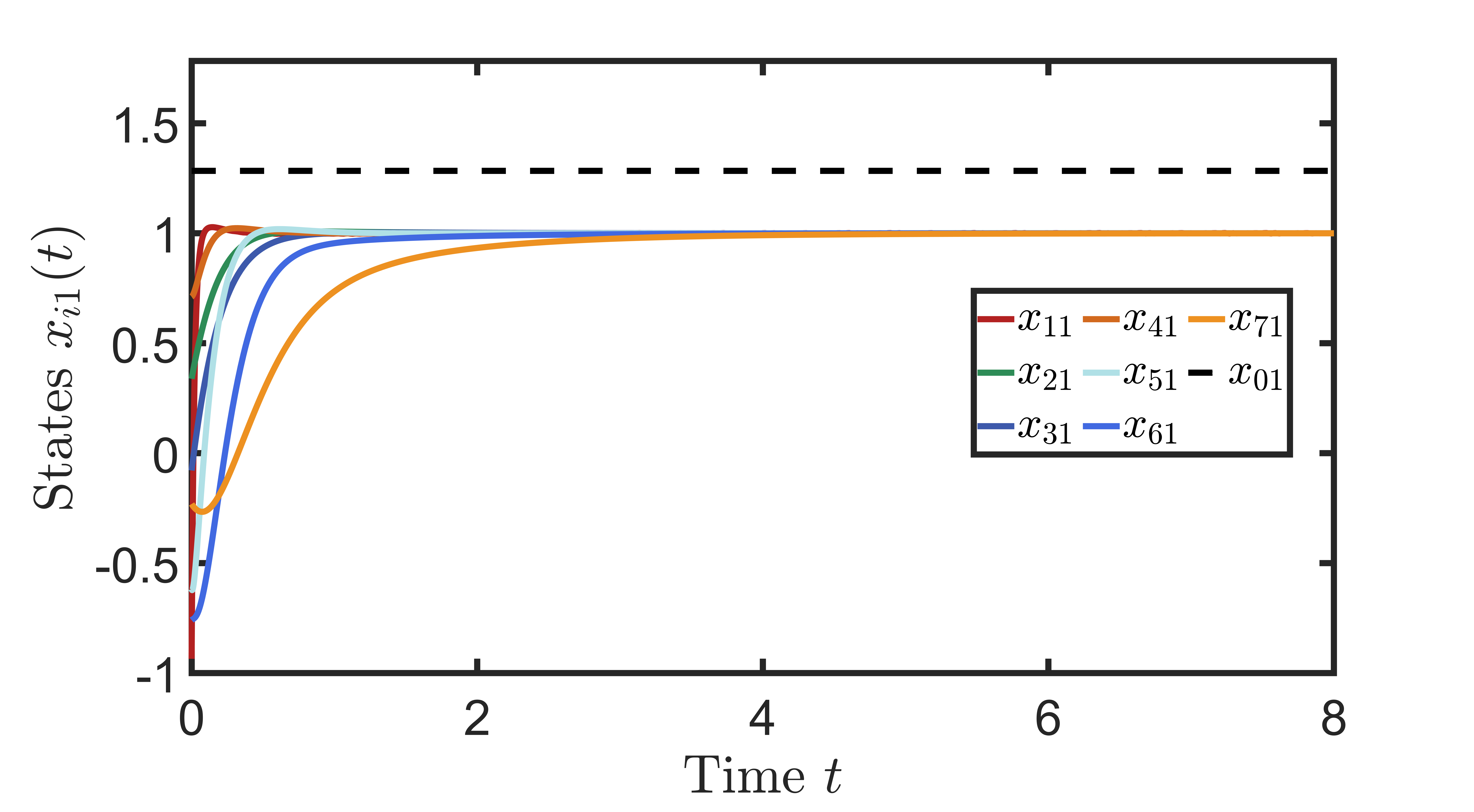}
\end{minipage}
\vfill
\begin{minipage}[b]{0.95\linewidth}
\centering
\includegraphics[width=\linewidth]{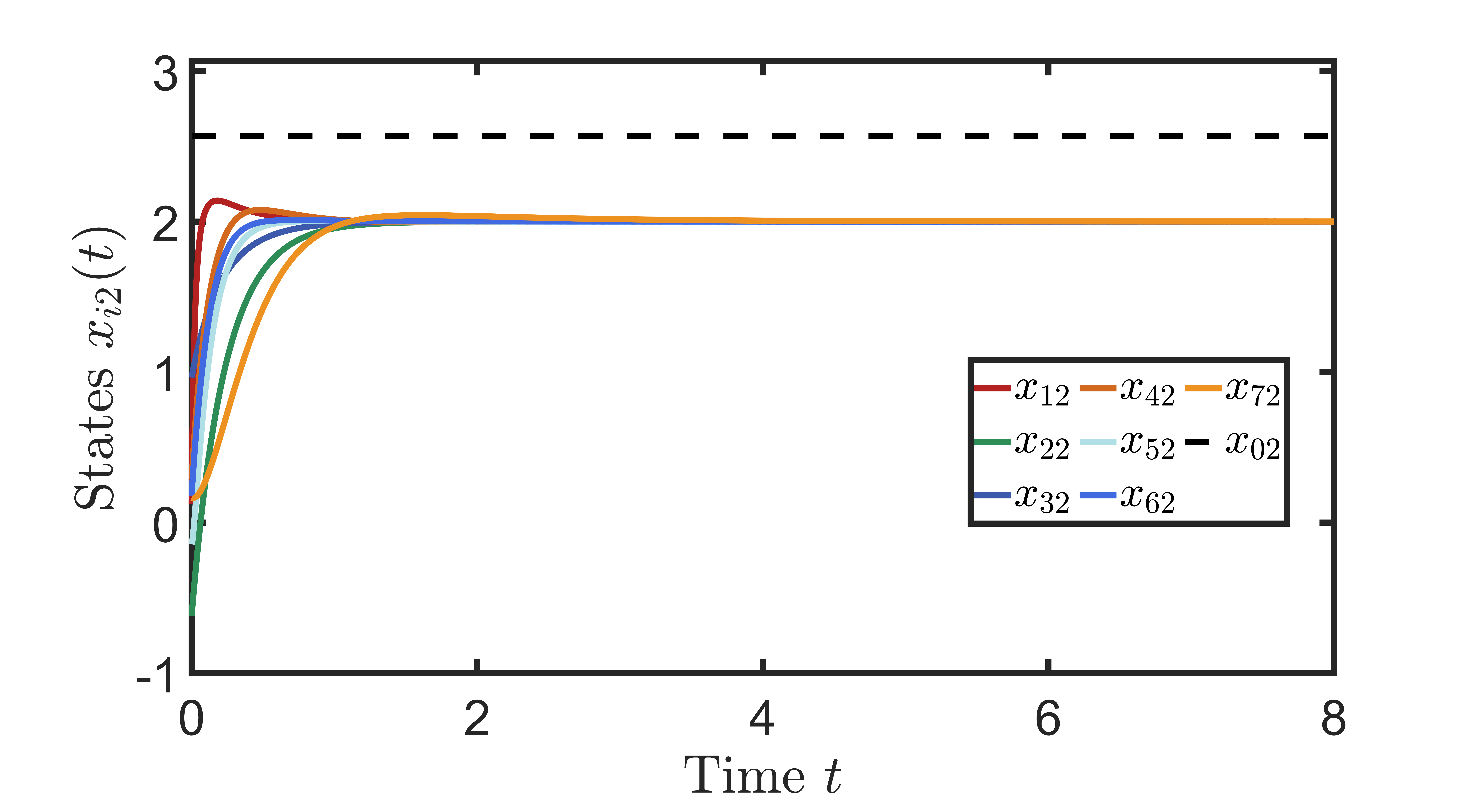}
\end{minipage}
\vfill
\begin{minipage}[b]{0.95\linewidth}
\centering
\includegraphics[width=\linewidth]{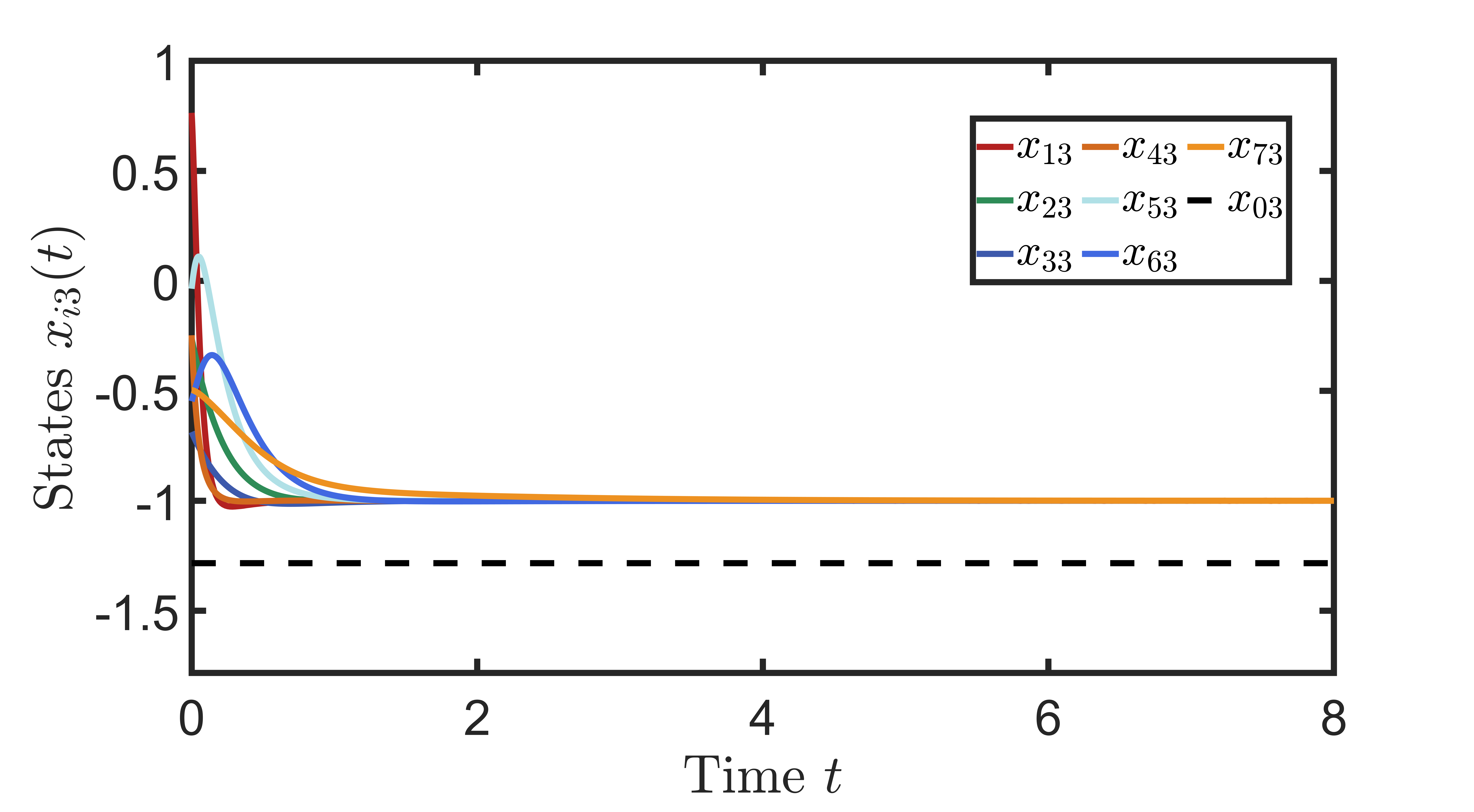}
\end{minipage}
\caption{State trajectories on directed matrix-weighted network with topology $\widehat{\mathcal{G}}$ in Fig. \ref{directed topology}.}
\label{directed state evolution}
\end{figure}

\begin{figure}[htbp]
\centering
\begin{minipage}[b]{0.95\linewidth}
\centering
\includegraphics[width=\linewidth]{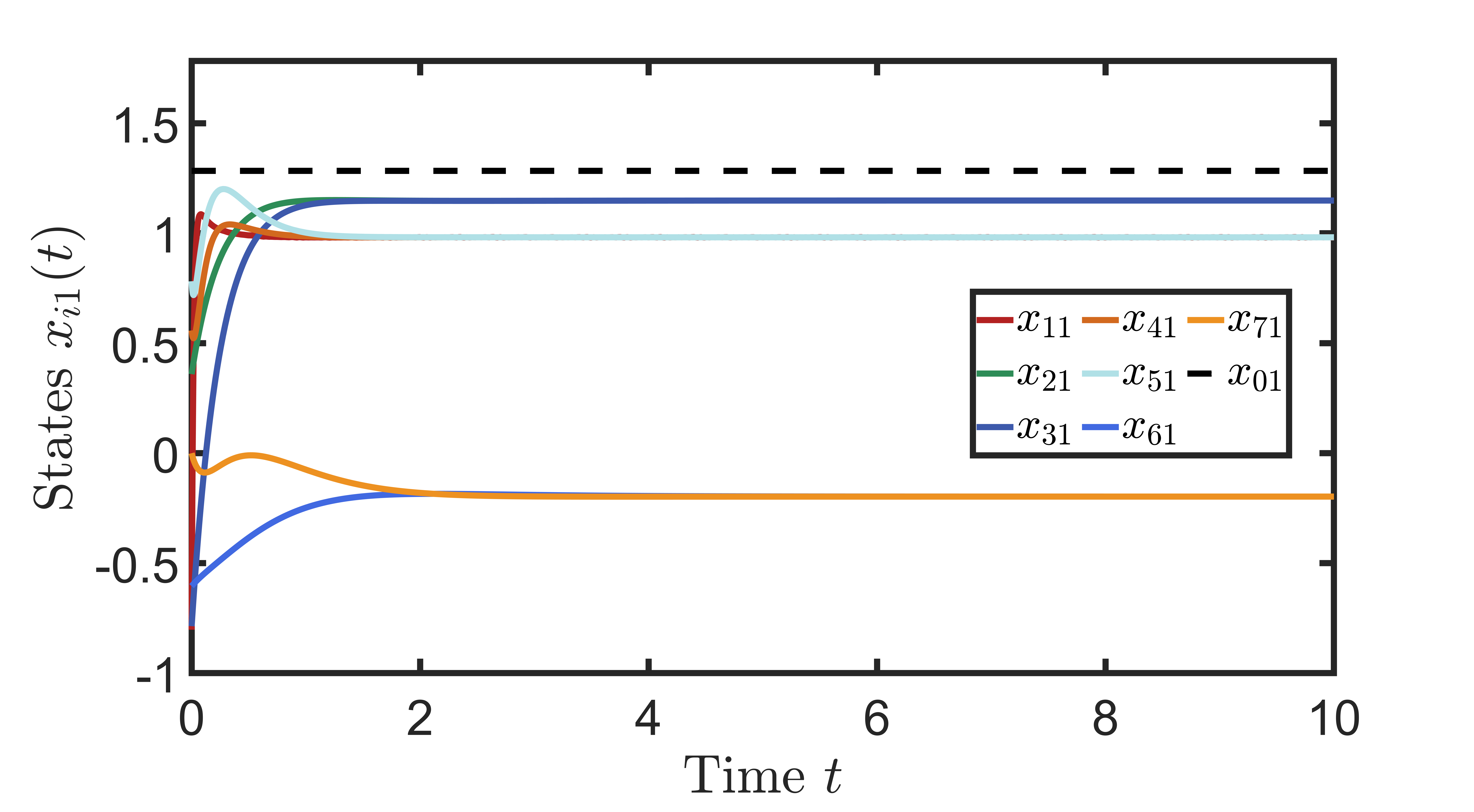}
\end{minipage}
\vfill
\begin{minipage}[b]{0.95\linewidth}
\centering
\includegraphics[width=\linewidth]{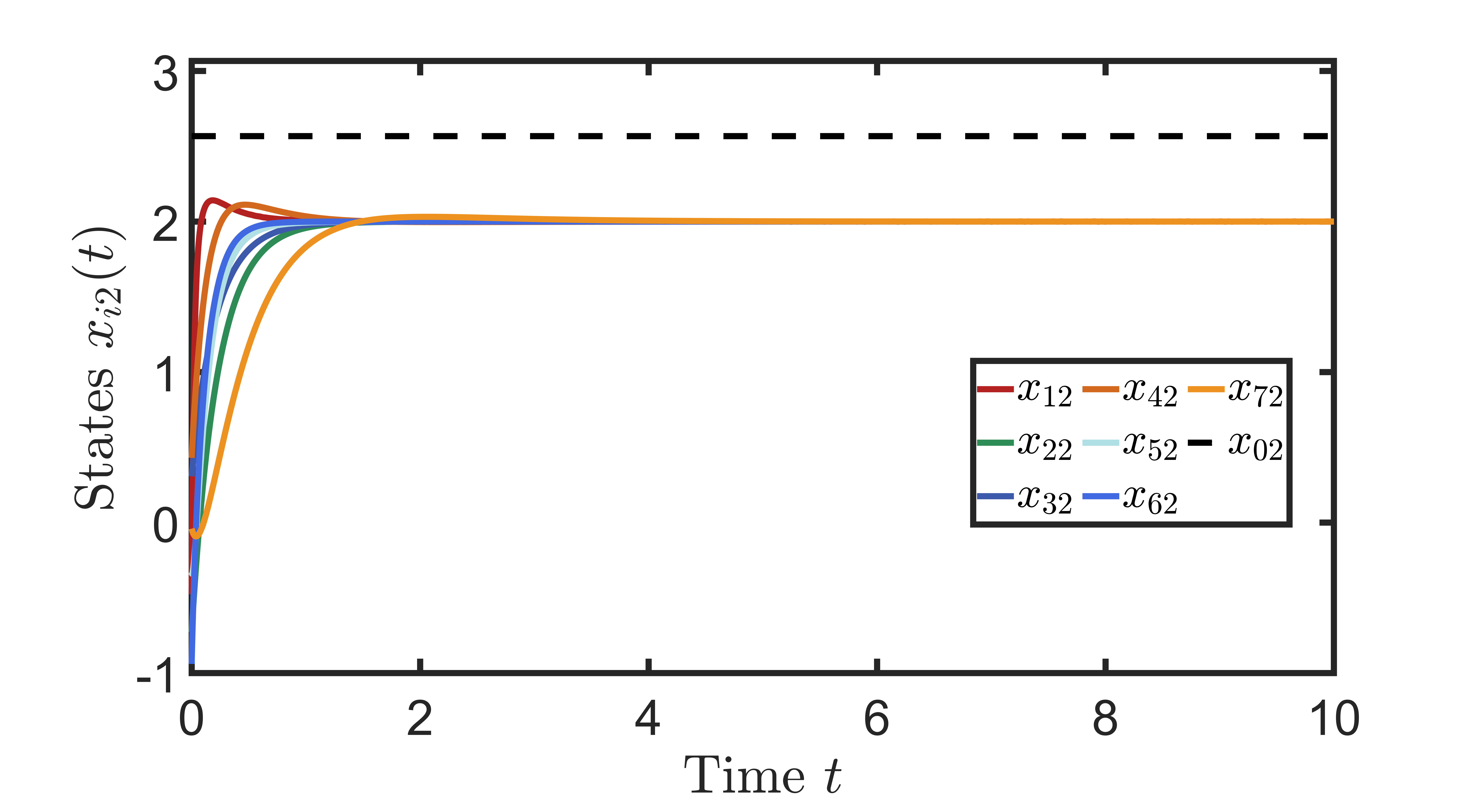}
\end{minipage}
\vfill
\begin{minipage}[b]{0.95\linewidth}
\centering
\includegraphics[width=\linewidth]{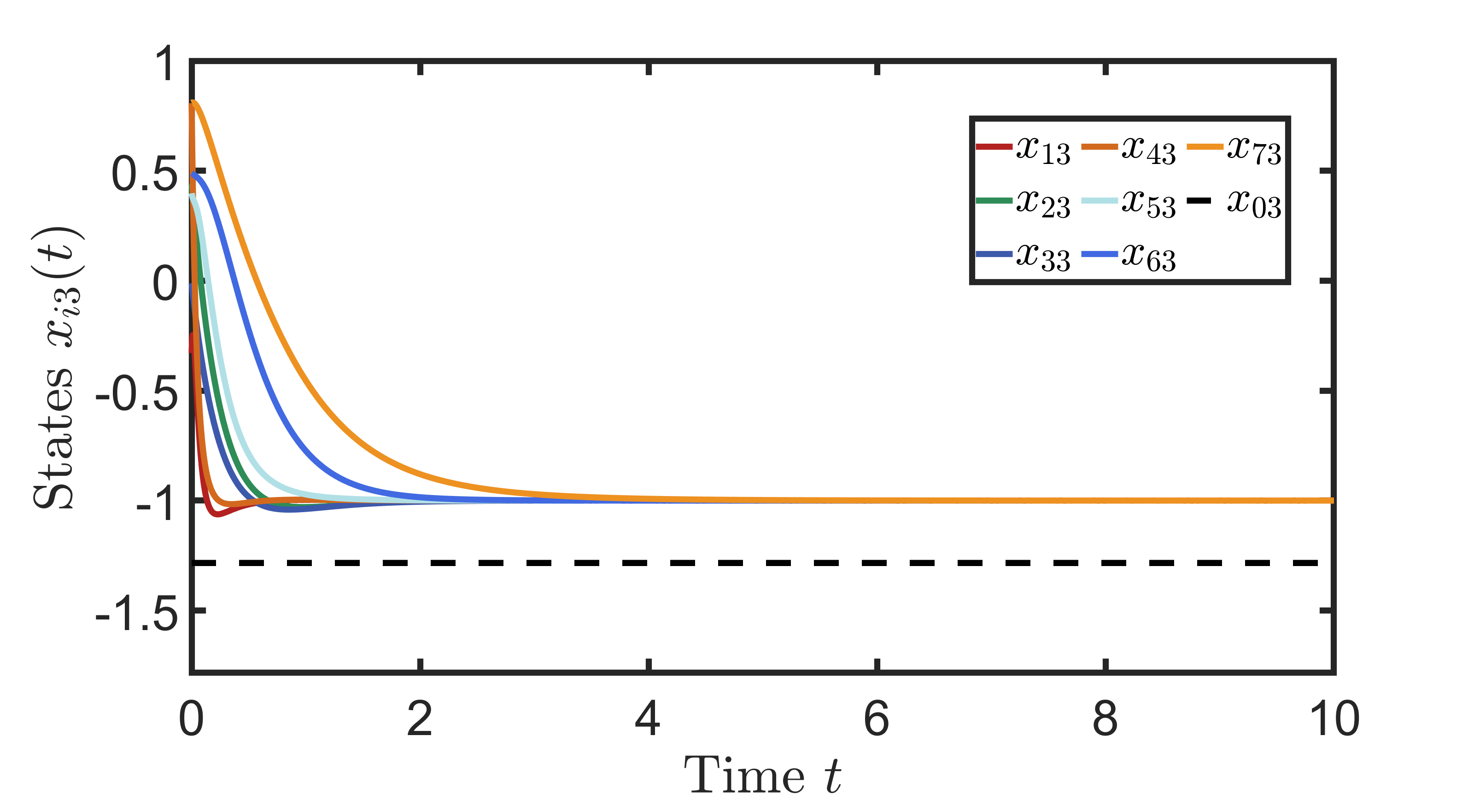}
\end{minipage}
\caption{States trajectories on directed matrix-weighted network with topology $\widehat{\mathcal{G}}$ in Fig. \ref{directed topology}, after changing matrix weight $A_{65}$ into a positive semi-definite matrix.}
\label{anti directed state evolution}
\end{figure}

\subsection{Non-Trivial Consensus with Switching Topology}
\begin{figure}[htbp]
\centering
\begin{minipage}[b]{0.90\linewidth}
\centering
\includegraphics[width=\linewidth]{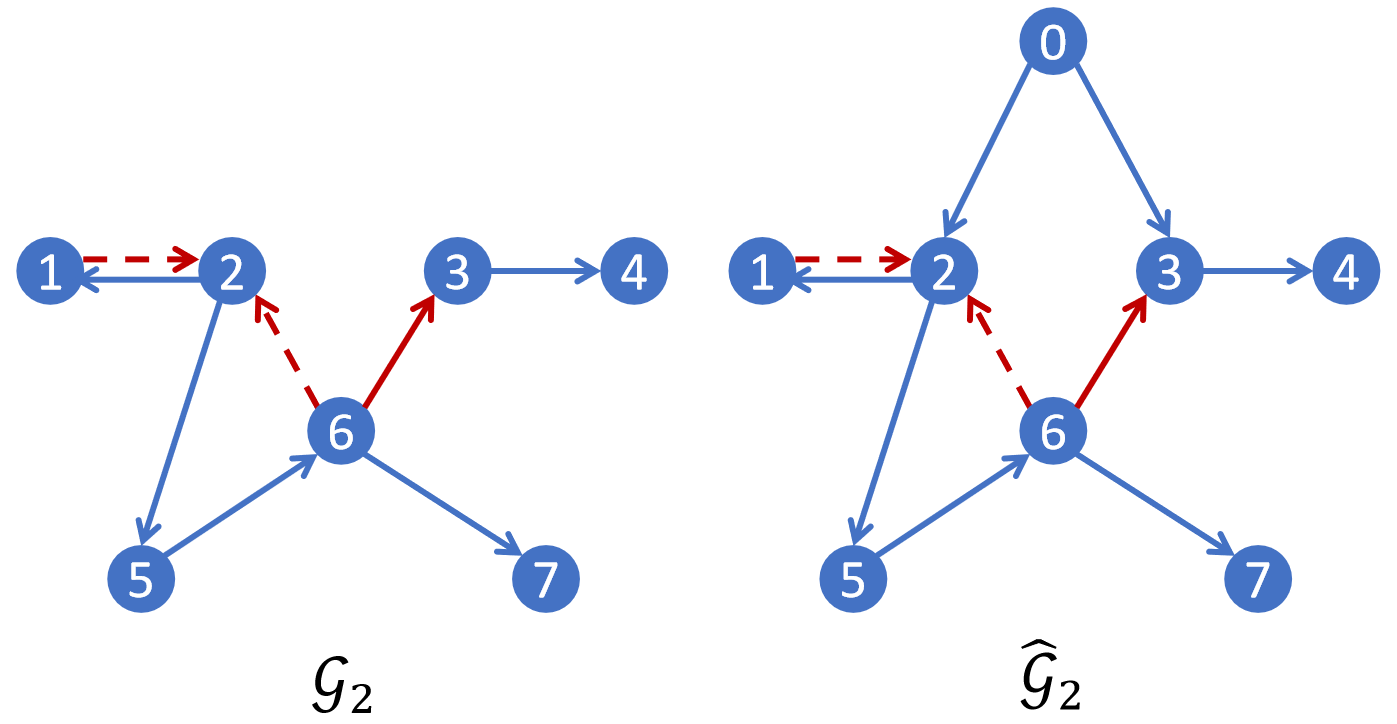}
\end{minipage}
\vfill
\begin{minipage}[b]{0.90\linewidth}
\centering
\includegraphics[width=\linewidth]{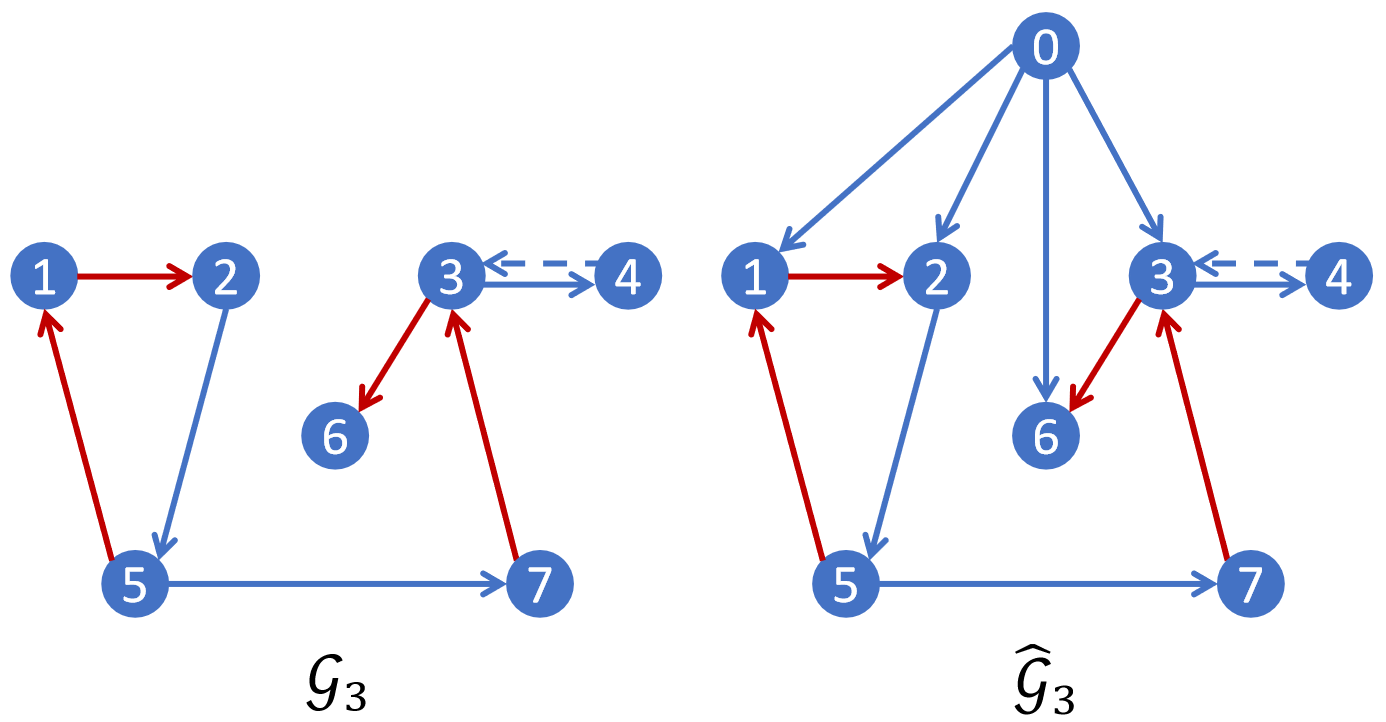}
\end{minipage}
\caption{Topology $\mathcal{G}_{2},\ \widehat{\mathcal{G}}_{2}$ and $\mathcal{G}_{3},\ \widehat{\mathcal{G}}_{3}$ for switching network \eqref{switching SAN} \eqref{switching FAN} \eqref{compact switching FAN}.}
\label{switching G2 and G3}
\end{figure}

The original topologies for switching network \eqref{switching SAN} are denoted as $\mathcal{G}_{1}$ ($\mathcal{G}$ from Fig. \ref{directed topology}) and $\mathcal{G}_{2},\ \mathcal{G}_{3}$ from Fig. \ref{switching G2 and G3}. Specifically,
\begin{small}
\begin{equation}\label{G2_weights}\notag
\begin{aligned}
&A_{12}(\mathcal{G}_{2})=\begin{bmatrix}1&0&0\\0&1&0\\0&0&2\end{bmatrix}\succ\boldsymbol{0},\ 
A_{21}(\mathcal{G}_{2})=-\begin{bmatrix}1&0&0\\0&1&0\\0&0&0\end{bmatrix}\preceq\boldsymbol{0},\\
&A_{65}(\mathcal{G}_{2})=\begin{bmatrix}4&1&1\\1&3&0\\1&0&2\end{bmatrix}\succ\boldsymbol{0},\ 
A_{52}(\mathcal{G}_{2})=\begin{bmatrix}10&1&2\\1&8&1\\2&1&12\end{bmatrix}\succ\boldsymbol{0},\\
&A_{32}(\mathcal{G}_{2})=\begin{bmatrix}0&0&0\\0&4&2\\0&2&4\end{bmatrix}\succeq\boldsymbol{0},\ 
A_{26}(\mathcal{G}_{2})=-\begin{bmatrix}0&0&0\\0&0&0\\0&0&1\end{bmatrix}\preceq\boldsymbol{0},
\end{aligned}
\end{equation}
\begin{equation}\label{G3_weights}\notag
\begin{aligned}
&A_{34}(\mathcal{G}_{3})=\begin{bmatrix}1&0&0\\0&2&0\\0&0&0\end{bmatrix}\succeq\boldsymbol{0},
A_{37}(\mathcal{G}_{3})=-\begin{bmatrix}2&-1&0\\-1&2&0\\0&0&1\end{bmatrix}\prec\boldsymbol{0},\\ 
&A_{43}(\mathcal{G}_{3})=\begin{bmatrix}2&-1&0\\-1&3&1\\0&1&4\end{bmatrix}\succ\boldsymbol{0},\ 
A_{75}(\mathcal{G}_{3})=\begin{bmatrix}2&0&0\\0&3&0\\0&0&1\end{bmatrix}\succ\boldsymbol{0},\\
&A_{52}(\mathcal{G}_{3})=\begin{bmatrix}8&2&1\\2&8&1\\1&1&6\end{bmatrix}\succ\boldsymbol{0},\ 
A_{21}(\mathcal{G}_{3})=-\begin{bmatrix}3&1&-1\\1&5&2\\-1&2&5\end{bmatrix}\prec\boldsymbol{0},
\end{aligned}
\end{equation}
\end{small}
\noindent and $A_{76}(\mathcal{G}_{2})=-A_{36}(\mathcal{G}_{2})=0.1\cdot A_{12}(\mathcal{G}_{2}),\ A_{43}(\mathcal{G}_{2})=A_{12}(\mathcal{G}_{2}),\ A_{63}(\mathcal{G}_{3})=A_{15}(\mathcal{G}_{3})=-I_{3}\prec\boldsymbol{0}$. 
$\mathcal{G}_{2}$ and $\mathcal{G}_{3}$ satisfy Assumption \ref{directed Assumption}, with vertices decomposition $\mathcal{V}(\mathcal{G}_{2})=\mathcal{V}_{1}(\mathcal{G}_{2})\cup\mathcal{V}_{2}(\mathcal{G}_{2}),\ \mathcal{V}_{1}(\mathcal{G}_{2})=\{v_{2},v_{3}\},\ \mathcal{V}_{2}(\mathcal{G}_{2})=\{v_{1},v_{4},v_{5},v_{6},v_{7}\}$ and $\mathcal{V}(\mathcal{G}_{3})=\mathcal{V}_{1}(\mathcal{G}_{3})\cup\mathcal{V}_{2}(\mathcal{G}_{3}),\ \mathcal{V}_{1}(\mathcal{G}_{3})=\{v_{1},v_{2},v_{3}\},\ \mathcal{V}_{2}(\mathcal{G}_{2})=\{v_{4},v_{5},v_{6},v_{7}\}$. Take the time sequence $\{t_{k}\}_{k\in\mathbb{N}}$ as $t_{0}=0$, and $\Delta t_{k}=0.03, \forall k\in\mathbb{N}$. For any $l\in\mathbb{N}$, the topology switches according to the following rule: $\mathcal{G}(t)=\mathcal{G}_{1},\ t\in[t_{5l},t_{5l+2})$, $\mathcal{G}(t)=\mathcal{G}_{2},\ t\in[t_{5l+2},t_{5l+3})$, $\mathcal{G}(t)=\mathcal{G}_{3},\ t\in[t_{5l+3},t_{5(l+1)})$. Following the scheme of Theorem \ref{switching NTC}, to achieve the prescribed NTC state $\theta=[1,2,-1]^{\top}$, corresponding parameter configurations are listed below:\\
for $t\in[t_{5l+2},t_{5l+3})$, set
\begin{equation}\notag
\begin{aligned}
&\delta(t)=7.2440,\ x_{0}(t)=[1.2761, 2.5522, -1.2761]^{\top},\\ &B_{2}(t)=\left|A_{21}(\mathcal{G}_{2})+A_{26}(\mathcal{G}_{2})\right|,\ B_{3}(t)=\left|A_{36}(\mathcal{G}_{2})\right|,
\end{aligned}
\end{equation}
for $t\in[t_{5l+3},t_{5(l+1)})$, set
\begin{equation}\notag
\begin{aligned}
&\delta(t)=3.2486,\ x_{0}(t)=[1.6156, 3.2313, -1.6156]^{\top},\\ &B_{1}(t)=\left|A_{15}(\mathcal{G}_{3})\right|,\ B_{2}(t)=\left|A_{21}(\mathcal{G}_{3})\right|,\\ &B_{3}(t)=\left|A_{37}(\mathcal{G}_{3})\right|,\ 
B_{6}(t)=\left|A_{63}(\mathcal{G}_{3})\right|,
\end{aligned}
\end{equation}
and for $t\in[t_{5l},t_{5l+2})$, the parameters are set the same as in the first part of Subsection \ref{fixed simulation}. The evolution of error vector $\Vert\varepsilon(t)\Vert=\left\Vert x(t)-(\boldsymbol{1}_{N}\otimes\boldsymbol{\theta})\right\Vert$ is shown in Fig. \ref{switching evolution}, for which we take the Euclidean norm. It is worth pointing out that, $\mathcal{G}_{1}$ ($\mathcal{G}$ from Fig. \ref{directed topology}) and $\mathcal{G}_{2}$ are structurally unbalanced signed graphs, while $\mathcal{G}_{3}$ is balanced. This once again highlights that our NTC technique applies to signed networks irrespective of structural balance or unbalance.

\begin{figure}[htbp]
\begin{center}
\includegraphics[height=4.9cm]{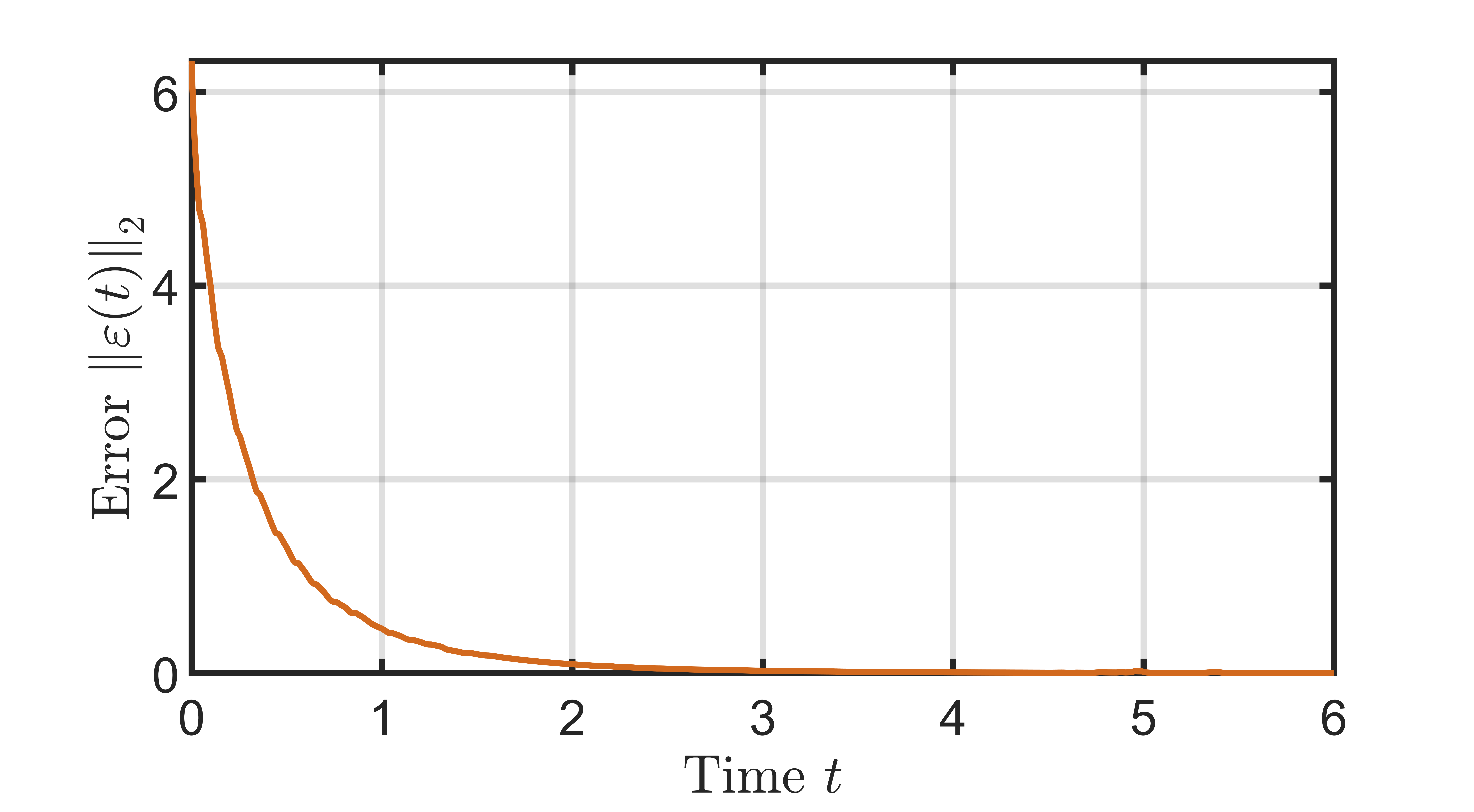}\caption{Evolution of error $\Vert\varepsilon(t)\Vert_{2}$ in switching network.} 
\label{switching evolution}                              
\end{center}
\end{figure}

\section{Conclusion}\label{CONCLUSION}
This paper establishes grounded-Laplacian positive stability as the main spectral result for signed matrix-weighted networks. The baseline criterion combines in-degree dominance with positive-negative paths and gives explicit grounding thresholds. The scaled result replaces the unscaled degree condition with a signed Dirichlet decomposition, a joint-kernel test, and a computable contraction margin. Under absolute generalized balance, the kernel-intersection condition is given; the balanced and definite-edge unbalanced undirected cases become immediate consequences. These tools yield prescribed non-trivial-consensus designs for fixed directed and undirected topologies without a structural-balance requirement. For switching topology case, the necessary condition for convergence is derived, and explicit strategy for NTC featuring dynamic adjustment of algorithm parameters in response to topology switches is offered.

\section*{Acknowledgment}
This work was supported by the National Natural Science Foundation of China under Grants 62503335, 92267101, 62573301, and the Startup Grants from Shenzhen University.

\appendix
\section{Proof of Lemma \ref{PhiB Lemma}}
\label{Lemma 2 Proof}
For positive (semi)-definite matrix $A_{ij}$, there exists positive (semi)-definite matrix $A_{ij}^{\frac{1}{2}}$, such that $A_{ij}=(A_{ij}^{\frac{1}{2}})^{\top}A_{ij}^{\frac{1}{2}}$. Therefore one has
\begin{equation}\notag
\begin{aligned}
\Phi_{B}(x)=&\sum_{i=1}^{N}x_{i}^{\top}\left[\sum_{j\neq i}^{N}A_{ij}+\delta_{i}|B_{i}|\right]x_{i}-\sum_{i=1}^{N}\sum_{j\neq i}^{N}x_{i}^{\top}A_{ij}x_{j}\\
\geq&\sum_{i=1}^{N}x_{i}^{\top}\left[\sum_{j\neq i}^{N}A_{ij}+\delta_{i}|B_{i}|\right]x_{i}\\
&-\frac{1}{2}\sum_{i=1}^{N}\sum_{j\neq i}^{N}(x_{i}^{\top}A_{ij}x_{i}+x_{j}^{\top}A_{ij}x_{j})\\
=&\sum_{i=1}^{N}x_{i}^{\top}\left[\delta_{i}|B_{i}|+\frac{1}{2}\sum_{j\neq i}^{N}(A_{ij}-A_{ji})\right]x_{i},
\end{aligned}
\end{equation}
and 
\begin{align}\notag
\Phi_{B}(x)=\sum_{i=1}^{N}x_{i}^{\top}\left[\delta_{i}|B_{i}|+\frac{1}{2}\sum_{j\neq i}^{N}(A_{ij}-A_{ji})\right]x_{i}
\end{align}
if and only if
\begin{align}\notag
A_{ij}^{\frac{1}{2}}(x_{i}-x_{j})=0,\ i,j\in\underline{N},\ i\neq j.
\end{align}

\section{Proof of Lemma \ref{LB spec lemma}}
\label{LB spec lemma Proof}
Assume that $L_{B}\eta=\lambda\eta$, in which $\lambda$ and $\eta$ are a pair of eigenvalue and eigenvector of $L_{B}$. Let $\overline{\eta}=[\eta^{\top},-\eta^{\top}]^{\top}$. Then based on expanded system construction, one has 
\begin{align}\notag
\overline{L}_{\overline{B}}\overline{\eta}=
\begin{bmatrix}L_{B}\eta\\-L_{B}\eta\end{bmatrix}
=\begin{bmatrix}\lambda\eta\\-\lambda\eta\end{bmatrix}
=\lambda\overline{\eta},
\end{align}
which leads to $\sigma(L_{B})\subseteq\sigma(\overline{L}_{\overline{B}})$.

\section{Proof of Theorem \ref{directed proposition}}
\label{Theorem 1 Proof} 
We first prove that every eigenvalue of $\overline{L}_{\overline{B}}$ has positive real part. Let $\overline{\Phi}_{\overline{B}}(x)=x^{\top}\overline{L}_{\overline{B}}x$, where
\begin{small}	
\begin{equation}\notag		
\begin{aligned}			
\overline{L}_{\overline{B}}=			
\begin{bmatrix}				
\overline{\delta}_{1}|\overline{B}_{1}|+\sum_{k\neq1}^{2N}\overline{A}_{1k}&\cdots&-\overline{A}_{1,2N}\\				
\vdots&\ddots&\vdots\\				
-\overline{A}_{2N,1}&\cdots&\overline{\delta}_{2N}|\overline{B}_{2N}|+\sum_{k\neq 2N}^{2N}\overline{A}_{2N,k}			
\end{bmatrix}.		
\end{aligned}	
\end{equation}
\end{small}
\vspace{-0.03cm}
Notice that $\overline{A}_{ij}\succeq\boldsymbol{0},\ i,j\in\underline{2N}$, according to Lemma \ref{PhiB Lemma}, one has
\begin{equation}\label{PhiB geq}	
\begin{aligned}		
&\overline{\Phi}_{\overline{B}}(x)\geq\sum_{i=1}^{2N}x_{i}^{\top}\left[\overline{\delta}_{i}|\overline{B}_{i}|+\frac{1}{2}\sum_{j\neq i}^{2N}(\overline{A}_{ij}-\overline{A}_{ji})\right]x_{i}\\		
=&\sum_{i=1}^{N}x_{i}^{\top}\left[\delta_{i}|B_{i}|+\frac{1}{2}(\sum_{j\in\mathcal{N}_{i}}|A_{ij}|-\sum_{j\in\mathcal{N}'_{i}}|A_{ji}|)\right]x_{i}\\		
+&\sum_{i=1}^{N}x_{i+N}^{\top}\left[\delta_{i}|B_{i}|+\frac{1}{2}(\sum_{j\in\mathcal{N}_{i}}|A_{ij}|-\sum_{j\in\mathcal{N}'_{i}}|A_{ji}|)\right]x_{i+N},	
\end{aligned}
\end{equation}
since $B_{i}$ is positive definite for any $v_{i}\in\mathcal{V}_{1}$, by Lemma \ref{positive definite lemma}, there exists a non-singular matrix $S$ such that
\vspace{-0.1cm}
\begin{equation}\notag	
\begin{aligned}		
&\delta_{i}|B_{i}|+\frac{1}{2}\left(\sum_{j\in\mathcal{N}_{i}}|A_{ij}|-\sum_{j\in\mathcal{N}'_{i}}|A_{ji}|\right)\\		
=&S\left(\delta_{i}I_{d\times d}+\frac{1}{2}\Lambda\right)S^{*},\ \forall i\in\mathcal{V}_{1},	
\end{aligned}
\end{equation}
in which the diagonal matrix $\Lambda$ has all the eigenvalues of $|B_{i}|^{-1}\left(\sum_{j\in\mathcal{N}_{i}}|A_{ij}|-\sum_{j\in\mathcal{N}'_{i}}|A_{ji}|\right)$ as its diagonal entries. Therefore, take 
\begin{align}\notag	
C_{i}=\frac{1}{2}\lambda_{max}\left[|B_{i}|^{-1}\left(\sum_{j\in\mathcal{N}'_{i}}|A_{ji}|-\sum_{j\in\mathcal{N}_{i}}|A_{ij}|\right)\right],\ i \in\mathcal{V}_{1}
\end{align}
as in \eqref{lower bound}, and there holds
\begin{equation}\label{V1geq0}	
\begin{aligned}		
\delta_{i}|B_{i}|+\frac{1}{2}\left(\sum_{j\in\mathcal{N}_{i}}|A_{ij}|-\sum_{j\in\mathcal{N}'_{i}}|A_{ji}|\right)\succ\boldsymbol{0}_{d\times d},&\\		
\forall \delta_{i}>C_{i},\ \forall i \in\mathcal{V}_{1}.&	
\end{aligned} 
\end{equation}
\eqref{PhiB geq} \eqref{V1geq0} together with Assumption \ref{directed Assumption}.(2) yields
\begin{equation}\notag	
\begin{aligned}		
\overline{\Phi}_{\overline{B}}(x)&\geq\sum_{i=1}^{2N}x_{i}^{\top}\left[\overline{\delta}_{i}|\overline{B}_{i}|+\frac{1}{2}\sum_{j\neq i}^{2N}(\overline{A}_{ij}-\overline{A}_{ji})\right]x_{i}\\		
&\geq0,\ \forall x\in\mathbb{R}^{2Nd}.	
\end{aligned}
\end{equation}
Further, from the proof of Lemma \ref{PhiB Lemma}, one has that for any $x\in\mathbb{R}^{2Nd}$,
\begin{align}\notag	
\overline{\Phi}_{\overline{B}}(x)=\sum_{i=1}^{2N}x_{i}^{\top}\left[\overline{\delta}_{i}|\overline{B}_{i}|+\frac{1}{2}\sum_{j\neq i}^{2N}(\overline{A}_{ij}-\overline{A}_{ji})\right]x_{i}
\end{align}
if and only if
\begin{align}\label{xi-xj}	
{\overline{A}_{ij}}^{\frac{1}{2}}(x_{i}-x_{j})=\boldsymbol{0}_{d},\ \forall i,j\in\underline{2N},\ i\neq j.
\end{align}
Now assume that there exists $x^{*}=[x_{1}^{*},...,x_{2N}^{*}]\in\mathbb{R}^{2Nd}$ such that
\begin{align}\notag	
0=\overline{\Phi}_{\overline{B}}(x^{*})=\sum_{i=1}^{2N}{x_{i}^{*}}^{\top}\left[\overline{\delta}_{i}|\overline{B}_{i}|+\frac{1}{2}\sum_{j\neq i}^{2N}(\overline{A}_{ij}-\overline{A}_{ji})\right]x_{i}^{*},
\end{align}
then from the proof of Lemma \ref{PhiB Lemma}, one has that
\begin{align}\notag	
{\overline{A}_{ij}}^{\frac{1}{2}}(x_{i}^{*}-x_{j}^{*})=\boldsymbol{0}_{d},\ \forall i,j\in\underline{2N},\ i\neq j,
\end{align}
and by \eqref{V1geq0} one has
\begin{align}\label{xi=0}	
x_{i}^{*}=\boldsymbol{0}_{d},\ x_{i+N}^{*}=\boldsymbol{0}_{d},\ \forall i\in\mathcal{V}_{1}.
\end{align}
By Assumption \ref{directed Assumption}.(1), for any $j\in\mathcal{V}_{2}$, there exists $i\in\mathcal{V}_{1}$ and a positive path $\mathcal{P}_{ij}$ or $\mathcal{P}_{i+N,j}$ in expanded graph $\overline{\mathcal{G}}$ from $v_{i}$ or $v_{i+N}$ to $v_{j}$, and a positive path $\mathcal{P}_{i,j+N}$ or $\mathcal{P}_{i+N,j+N}$ from $v_{i}$ or $v_{i+N}$ to $v_{j}$. Therefore, combining \eqref{xi-xj} and \eqref{xi=0}, one has
\begin{align}\label{xj=0}	
x_{j}^{*}=\boldsymbol{0}_{d},\ x_{j+N}^{*}=\boldsymbol{0}_{d},\ \forall j\in\mathcal{V}_{2}.
\end{align}  
Clearly, \eqref{xi=0} and \eqref{xj=0} gives $x^{*}=\boldsymbol{0}_{2Nd}$. The above analysis indicates that
\begin{align}\notag	
\overline{\Phi}_{\overline{B}}(x)>0,\ \forall x\neq\boldsymbol{0}_{2Nd}.
\end{align}
Therefore, by Lemma \ref{LB spec lemma}, every eigenvalue of $L_{B}$ has positive real part. This ends the proof.

\section{Proof of Lemma \ref{undirected proposition}}
\label{Lemma 3 Proof}
Let
\begin{equation}\notag	
\begin{aligned}		
\Phi_{B}(x)=&x^{\top}\left[L+(\Delta\otimes I_{d})\cdot\mathbf{diag}(|B|)\right]x\\		
=&x^{\top}Lx+x^{\top}\left[(\Delta\otimes I_{d})\cdot\mathbf{diag}(|B|)\right]x.	
\end{aligned}
\end{equation}
From the positive semi-definiteness of undirected matrix-weighted Laplacian $L$
\cite{Su TCASII matrix weighted bipartite consensus}, one has $x^{\top}Lx\geq0$. Therefore, $\Phi_{B}(x)\geq0$. In the following we prove that $\Phi_{B}(x)>0,\ \forall x\neq0$.
Suppose that there exists $x^{*}\in\mathbb{R}^{Nd}$ such that $\Phi_{B}(x^{*})=0$, then ${x^{*}}^{\top}L{x^{*}}=0$ and ${x^{*}}^{\top}\left[(\Delta\otimes I_{d})\cdot\mathbf{diag}(|B|)\right]x^{*}=0$.
On the one hand,
\begin{equation}\notag	
\begin{aligned}		
0&={x^{*}}^{\top}\left[(\Delta\otimes I_{d})\cdot\mathbf{diag}(|B|)\right]x^{*}\\		
&\geq\sum_{i\in\mathcal{V}_{1}}\delta_{i}{x_{i}^{*}}^{\top}|B_{i}|x_{i}^{*},\\	
\end{aligned}
\end{equation}
which leads to
\begin{equation}\label{V1xi=0}	
\begin{aligned}		
x_{i}^{*}=\boldsymbol{0}_{d},\ \forall i\in\mathcal{V}_{1}.	
\end{aligned}
\end{equation}
On the other hand,
\begin{equation}\notag	
\begin{aligned}		
0=&{x^{*}}^{\top}L{x^{*}}\\		
=&\sum_{(i,j)\in\mathcal{E}}\left[x_{i}^{*}-\mathbf{sgn}(A_{ij})x_{j}^{*}\right]^{\top}|A_{ij}|\left[x_{i}^{*}-\mathbf{sgn}(A_{ij})x_{j}^{*}\right]\\		
=&\sum_{(i,j)\in\mathcal{E}}{e_{ij}^{*}}^{\top}|A_{ij}|e_{ij}^{*},	
\end{aligned}
\end{equation}
therefore, for any $(i,j)\in\mathcal{E}$, one has $e_{ij}^{*}=x_{i}^{*}-\mathbf{sgn}(A_{ij})x_{j}^{*}=0$. By Assumption \ref{undirected Assumption}, for any $v_{j}\in\mathcal{V}_{2}$, there exists some $v_{i}\in\mathcal{V}_{1}$ and a positive-negative path $\mathcal{P}_{ij}=\{(v_{i},v_{i1}),(v_{i1},v_{i2}),...,(v_{ik},v_{j})\}$ connecting $v_{i}$ and ${v_{j}}$, together with \eqref{V1xi=0} yields
\begin{align}\label{V2xj=0}	
x_{j}^{*}=x_{ik}^{*}=...=x_{i1}^{*}=x_{i}^{*}=\boldsymbol{0}_{d},\ \forall j\in\mathcal{V}_{2}.
\end{align}
\hspace{2em}From \eqref{V1xi=0} and \eqref{V2xj=0}, $x^{*}=\boldsymbol{0}_{Nd}$. This leads to the positive definiteness of $\Phi_{B}(x)$, and thus $L_{B}=L+(\Delta\otimes I_{d})\cdot\mathbf{diag}(|B|)$ is a positive definite matrix, whose eigenvalues are positive real numbers.

\section{Proof of Theorem \ref{thm:scaled-grounding}}\label{thm:scaled-grounding_proof}
For any real vector $x=[x_1^\top,\ldots,x_N^\top]^\top\in\mathbb{R}^{Nd}$, with $x_i\in\mathbb{R}^d$, we first expand $x^\top PL_Bx$. Since $x^\top PL_Bx = x^\top\operatorname{Sym}(PL_B)x$, it suffices to compute the former. By the definition of $L_B$ and $L$,
\[
x^\top PL_Bx = x^\top PLx + \sum_i p_i x_i^\top G_i x_i.
\]
Using the expression $(Lx)_i=\sum_j W_{ij}(x_i-s_{ij}x_j)$, we obtain
\[
x^\top PLx = \sum_i p_i x_i^\top\sum_j W_{ij}(x_i-s_{ij}x_j).
\]
For each edge $(v_j,v_i)$, apply the identity
\begin{align*}
\frac{p_i}{2}(x_i-s_{ij}x_j)^\top W_{ij}(x_i-s_{ij}x_j)
=&\frac{p_i}{2}x_i^\top W_{ij}x_i + \frac{p_i}{2}x_j^\top W_{ij}x_j\\
&- p_i s_{ij} x_i^\top W_{ij}x_j.
\end{align*}
Summing over all edges and comparing with $x^\top PLx$, we get
\begin{align*}
x^\top PLx
=& \frac12\sum_{i,j} p_i (x_i-s_{ij}x_j)^\top W_{ij}(x_i-s_{ij}x_j)\\
&+ \frac12\sum_i x_i^\top\left(p_i\sum_jW_{ij}-\sum_jp_jW_{ji}\right)x_i.
\end{align*}
Adding the grounding term $\sum_i p_i x_i^\top G_i x_i$ and using the definition of $R_i(p)$ in (12), we arrive at the fundamental decomposition
\begin{equation}\label{symPLB_decomposition}
\begin{aligned}
x^\top\operatorname{Sym}(PL_B)x
=&\frac12\sum_{i\ne j} p_i (x_i-s_{ij}x_j)^\top W_{ij}(x_i-s_{ij}x_j)\\
+&\sum_i x_i^\top R_i(p)x_i.
\end{aligned}
\end{equation}
This proves statement 1.

Next, let $\lambda\in\mathbb{C}$ be any eigenvalue of $L_B$ with associated eigenvector $v\in\mathbb{C}^{Nd}\setminus\{0\}$, i.e., $L_Bv=\lambda v$. Then
\begin{align*}
v^*\operatorname{Sym}(PL_B)v &= \frac12\left(v^*PL_Bv + v^*L_B^\top Pv\right)\\
&= \frac12\left(\lambda v^*Pv + \bar{\lambda} v^*Pv\right)
= \operatorname{Re}(\lambda)\,v^*Pv.
\end{align*}
Since $\operatorname{Sym}(PL_B)\succ0$, the left-hand side is positive; also $P\succ0$ gives $v^*Pv>0$. Hence $\operatorname{Re}(\lambda)>0$. Moreover, from the definition of $\gamma_p$ in \eqref{eq:weighted-contraction-margin},
\[
\operatorname{Sym}(PL_B) \succeq \gamma_p P,
\]
so
\[
\operatorname{Re}(\lambda)\,v^*Pv = v^*\operatorname{Sym}(PL_B)v \ge \gamma_p\,v^*Pv.
\]
Dividing by $v^*Pv>0$ yields $\operatorname{Re}(\lambda)\ge \gamma_p$. This proves statement 2.

Finally, consider the error dynamics $\dot{e}=-L_Be$. Define the weighted Lyapunov function $V(e)=e^\top Pe$. Along trajectories,
\begin{align*}
\dot{V}(e) = &\dot{e}^\top Pe + e^\top P\dot{e}
= -e^\top L_B^\top P e - e^\top P L_B e\\
= &-2e^\top\operatorname{Sym}(PL_B)e
\le -2\gamma_p e^\top P e = -2\gamma_p V(e).
\end{align*}
By the comparison principle, $V(e(t))\le e^{-2\gamma_p t}V(e(0))$, which is equivalent to
\[
\|e(t)\|_P \le e^{-\gamma_p t}\|e(0)\|_P.
\]
This completes the proof of statement 3.

\section{Proof of Corollary \ref{thm:generalized-balanced-grounding}}
\label{thm:generalized-balanced-grounding_proof}
We first simplify the expression for $\operatorname{Sym}(PL_B)$ under condition \eqref{eq:absolute-generalized-balance}. Recall the scaled Dirichlet decomposition established in Theorem~\ref{thm:scaled-grounding}:
\begin{equation}\label{symPLB_decomposition}
\begin{aligned}
x^\top \operatorname{Sym}(PL_B)x
= &\frac12\sum_{i\ne j} p_i (x_i-s_{ij}x_j)^\top W_{ij}(x_i-s_{ij}x_j)\\
&+ \sum_i x_i^\top R_i(p)x_i,
\end{aligned}
\end{equation}
where
\[
R_i(p)=p_iG_i+\frac12\left(p_i\sum_jW_{ij}-\sum_jp_jW_{ji}\right).
\]
Under condition \eqref{symPLB_decomposition}, one has $R_i(p)=p_iG_i$. Substituting this into \eqref{symPLB_decomposition} yields
\begin{equation}\notag
\begin{aligned}
x^\top \operatorname{Sym}(PL_B)x
= &\frac12\sum_{i\ne j} p_i \left\|W_{ij}^{1/2}(x_i-s_{ij}x_j)\right\|^2\\
&+ \sum_i p_i \left\|G_i^{1/2}x_i\right\|^2.
\end{aligned}
\end{equation}
Since $p_i>0$, both terms on the right-hand side are nonnegative. Consequently, $\operatorname{Sym}(PL_B)x=0$ if and only if
\begin{equation}
W_{ij}^{1/2}(x_i-s_{ij}x_j)=0,\quad \forall (v_j,v_i)\in\mathcal E, \tag{26}
\end{equation}
and
\begin{equation}
G_i^{1/2}x_i=0,\quad \forall i. \tag{27}
\end{equation}
Condition (26) is equivalent to $(Lx)_i=0$ for every $i$, i.e., $x\in\ker L$. Condition (27) is equivalent to $G_i x_i=0$ for every $i$, i.e., $x\in\ker G$. Therefore,
\[
\ker\!\big(\operatorname{Sym}(PL_B)\big) = \ker L \cap \ker G. \tag{28}
\]
It follows immediately that
\[
\operatorname{Sym}(PL_B)\succ 0 \;\Longleftrightarrow\; \ker L \cap \ker G = \{0\}. \tag{29}
\]
This establishes the equivalence between statements 2 and 3.

It remains to show that statement 2 is equivalent to statement 1. If $\operatorname{Sym}(PL_B)\succ 0$, then for any eigenvalue $\lambda$ of $L_B$ with eigenvector $v\neq 0$, one has
\[
v^*\operatorname{Sym}(PL_B)v = \operatorname{Re}(\lambda)\,v^*Pv > 0.
\]
Since $P\succ 0$, $v^*Pv>0$, hence $\operatorname{Re}(\lambda)>0$. Thus $L_B$ is positive stable.

Conversely, suppose $L_B$ is positive stable. If there existed a nonzero $x\in\ker L\cap\ker G$, then $L_Bx = (L+G)x=0$, meaning $0$ is an eigenvalue of $L_B$, contradicting positive stability. Hence $\ker L\cap\ker G=\{0\}$, which by (29) gives $\operatorname{Sym}(PL_B)\succ 0$. This completes the proof.

\section{Proof of Theorem \ref{NTC thm}}
\label{Theorem NTC Proof}
Since $\delta>C$, where $C$ is defined in \eqref{C in NTC thm}, from Theorem \ref{directed proposition} and its proof, every eigenvalue of $L_B=L+(\Delta\otimes I_{d})\cdot\mathbf{diag}(|B|)$ has positive real part. Therefore, $z(t)$ converges and falls into the space $\mathbf{null}(\widehat{L})$, and $\mathbf{dim}\left(\mathbf{null}(\widehat{L})\right)=d$. 
Let $\xi,\xi_{0}$ be any two real numbers such that $k_{1}=\xi_{0}/\xi$. Recall the definition of non-trivial consensus space $\mathbf{span}\{\Psi(\xi,\xi_{0})\}$ in Definition \ref{NTC space}, with $\mathbf{dim}\left(\mathbf{span}\{\Psi(\xi,\xi_{0})\}\right)=d$. In the following we prove that
\begin{equation}\label{zero equation1}	
\begin{aligned}		
&\textbf{0}_{\widehat{N}d}=\widehat{L}\Psi(\xi,\xi_{0})\\		
=&\begin{bmatrix}			
L+(\Delta\otimes I_{d})\cdot\mathbf{diag}(|B|)&-(\Delta\otimes I_{d})B\\			
\boldsymbol{0}_{d\times Nd}\ &\boldsymbol{0}_{d\times d}		
\end{bmatrix}		
\begin{bmatrix}			
\xi(1_{N}\otimes I_{d})\\\xi_{0}I_{d}		
\end{bmatrix}.	
\end{aligned}
\end{equation}
From \eqref{design1}, for $i\in\mathcal{V}_{\mathcal{I}}$, one has
\begin{small}	
\begin{equation}\notag		
\begin{aligned}			
&\left[-A_{i1},\cdots,\delta_{i}|B_{i}|+\sum_{j\neq i}^{N}|A_{ij}|,\cdots,-A_{iN},\ -\delta_{i}B_{i}\right]			
\begin{bmatrix}\xi I_{d}\\\vdots\\\xi I_{d}\\\xi_{0}I_{d}\end{bmatrix}\\			
=&\xi\left(\delta|B_{i}|+\sum_{j\neq i}^{N}|A_{ij}|\right)-\xi\sum_{j\neq i}^{N}A_{ij}-\delta\xi_{0}B_{i}\\			
=&\xi\left(\delta+2\right)\sum_{j\in\varOmega_{i}}|A_{ij}|- \delta\left(1+\frac{2}{\delta}\right)\xi\sum_{j\in\varOmega_{i}}|A_{ij}|			
=\boldsymbol{0}_{d\times d}.		
\end{aligned}	
\end{equation}
\end{small}
For $i\in\mathcal{V}_{\mathcal{N}}=\mathcal{V}/\mathcal{V}_{\mathcal{I}}=\mathcal{V}/\mathcal{U}$, since $\varOmega_{i}=\emptyset$, there exists no negative (semi)-definite matrix in $i$th block rows of $\mathcal{A}=[A_{ij}]$, and $B_{i}=\boldsymbol{0}_{d\times d}$. Therefore, one has
\begin{small}	
\begin{equation}\notag		
\begin{aligned}			
&\left[-A_{i1},\cdots,\delta_{i}|B_{i}|+\sum_{k\neq i}^{N}|A_{ik}|,\cdots,-A_{iN},\ -B_{i}\right]			
\begin{bmatrix}\xi I_{d}\\\vdots\\\xi I_{d}\\\xi_{0}I_{d}\end{bmatrix}\\			
=&\left[-A_{i1},\cdots,\sum_{k\neq i}^{N}A_{ik},\cdots,-A_{iN},\ \boldsymbol{0}_{d\times d}\right]\begin{bmatrix}\xi I_{d}\\\vdots\\\xi I_{d}\\\xi_{0}I_{d}\end{bmatrix}=\boldsymbol{0}_{d\times d}.	\end{aligned}	
\end{equation}
\end{small}
From the above, \eqref{zero equation1} holds. Therefore, $\mathbf{null}(\widehat{L})=\mathbf{span}\{\Psi(\xi,\xi_{0})\}$, and further \eqref{desired outcome1} holds.

\section{Proof of Lemma \ref{switching lemma}}
\label{Lemma 5 Proof}
By Assumption \ref{finite interval Assumption}, each $\Vert\widehat{A}_{ij}(t)\Vert$ is upper bounded for all $t\geq0$. Hence, one can find a constant $D_{1}>0$ with $\Vert\widehat{A}_{ij}(t)\Vert\leq D_{1},\ \forall i,j\in\underline{\widehat{N}},\ \forall t\geq0$. Meanwhile, $\Vert z(t)\Vert$ is upper bounded since $\lim_{t\rightarrow+\infty}z(t)=z^{*}$. Therefore, by \eqref{switching FAN} one has $\Vert\dot{z}_{i}(t)\Vert$ is upper bounded for $i\in\underline{\widehat{N}}$. We now proceed to prove that the right-hand derivative of $\dot{z}_{i}(t)$ is bounded.
\begin{equation}\label{right-hand D}	
\begin{aligned}		
&D^{+}(\dot{z}_{i}(t))\\		
=&D^{+}\left(\sum_{j=1}^{\widehat{N}}\left|\widehat{A}_{i j}(t)\right|\left[\mathbf{sgn}\left(\widehat{A}_{i j}(t)\right) z_{j}(t)-z_{i}(t)\right]\right),	
\end{aligned}
\end{equation}
take a micro view of \eqref{right-hand D} w.l.o.g, denote $\widehat{a}_{kl}^{(ij)}(t)$ as $\widehat{A}_{ij}(t)$'s $(k,l)$-th element, and $\left\vert\widehat{a}_{kl}^{(ij)}(t)\right\vert=\mathbf{sgn}\left(\widehat{A}_{i j}(t)\right)\widehat{a}_{kl}^{(ij)}(t)$, then one has
\begin{equation}\notag	
\begin{aligned}		
&D^{+}\left(\left\vert\widehat{a}_{kl}^{(ij)}(t)\right\vert\left[\mathbf{sgn}\left(\widehat{A}_{i j}(t)\right) z^{(j)}_{l}(t)-z^{(i)}_{l}(t)\right]\right)\\		
=&D^{+}\left(\widehat{a}_{kl}^{(ij)}(t)z^{(j)}_{l}(t)-\left\vert\widehat{a}_{kl}^{(ij)}(t)\right\vert z^{(i)}_{l}(t)\right)\\		
=&D^{+}\left(\widehat{a}_{kl}^{(ij)}(t)\right)\cdot z^{(j)}_{l}(t)+\widehat{a}_{kl}^{(ij)}(t)\cdot D^{+}\left(z^{(j)}_{l}(t)\right)\\		
&-D^{+}\left(\left\vert\widehat{a}_{kl}^{(ij)}(t)\right\vert\right)\cdot z^{(i)}_{l}(t)-\left\vert\widehat{a}_{kl}^{(ij)}(t)\right\vert\cdot D^{+}\left(z^{(i)}_{l}(t)\right),	
\end{aligned}
\end{equation}
in which $D^{+}\left(\widehat{a}_{kl}^{(ij)}(t)\right)=0$ and $D^{+}\left(\left\vert\widehat{a}_{kl}^{(ij)}(t)\right\vert\right)=0$ according to Assumption \ref{interval Assumption}, and $D^{+}\left(z^{(i)}_{l}(t)\right)=\dot{z}^{(i)}_{l}(t)$ is bounded. Therefore, $D^{+}(\dot{z}_{i}(t))$ is bounded for $i\in\underline{\widehat{N}}$. Combining with $\lim_{t\rightarrow+\infty}z(t)=z^{*}$, by Barbalat’s lemma, one has $\lim_{t\rightarrow+\infty}\dot{z}(t)=0$. Then
\begin{equation}\notag	
\begin{aligned}		
\left\Vert\widehat{L}(t)z^{*}-0\right\Vert&\leq\left\Vert\widehat{L}(t)z^{*}-\widehat{L}(t)z(t)\right\Vert+\left\Vert\widehat{L}(t)z(t)-0\right\Vert\\		
&\leq\left\Vert\widehat{L}(t)\right\Vert\cdot\left\Vert z(t)-z^{*}\right\Vert+\left\Vert\dot{z}(t)\right\Vert	
\end{aligned}
\end{equation}
which leads to $\lim_{t\rightarrow+\infty}\widehat{L}(t)z^{*}=0$.

\section{Proof of Theorem \ref{switching NTC}}
\label{Theorem switching NTC Proof}
From the proof of Theorem \ref{NTC thm},
\begin{equation}\notag	
\begin{aligned}		
\dot{\varepsilon}(t)&=\dot{x}(t)\\		
&=-L_{B}(t)\left(\varepsilon(t)+\boldsymbol{1}_{N}\otimes\boldsymbol{\theta}\right)+\Delta_{B}(t)x_{0}(t)\\		
&=-L_{B}(t)\varepsilon(t)-\left[\frac{1}{k_{1}(t)}L_{B}(t)(\boldsymbol{1}_{N}\otimes I_{d})-\Delta_{B}(t)\right]x_{0}(t)\\		
&=-L_{B}(t)\varepsilon(t).	
\end{aligned}
\end{equation}
Therefore, one has
\begin{equation}\label{switching error}	
\begin{aligned}		
\varepsilon(t)=&e^{-(t-t_{k})L_{B}^{(k)}}\left[\prod_{i=0}^{k-1}e^{-\Delta t_{k-1-i}L_{B}^{(k-1-i)}}\right]\varepsilon(0),\\
&t\in[t_{k},t_{k+1}).	
\end{aligned}
\end{equation}
Define 
\begin{align}\notag	
\varPhi^{(i)}=\left[e^{-\Delta t_{i}L_{B}^{(i)}}\right]^{\top}\left[e^{-\Delta t_{i}L_{B}^{(i)}}\right],\ i\in\{0\}\cup\underline{k-1},
\end{align}
and
\begin{align}\notag	
\varPhi^{(k)}_{t}=\left[e^{-(t-t_{k})L_{B}^{(k)}}\right]^{\top}\left[e^{-(t-t_{k})L_{B}^{(k)}}\right].
\end{align}
Let $S^{(i)}=\dfrac{L_B^{(i)}+(L_B^{(i)})^{\top}}{2}$. According to Theorem \ref{directed proposition} and its proof, $S^{(i)}$ is symmetric and positive definite. From Lemma \ref{logarithmic norm lemma}, one has that for any $T>0$,
\begin{align}\notag	
\left\Vert e^{-TL_{B}^{(i)}}\right\Vert_{2}\leq e^{\mu_{2}\left(-TL_{B}^{(i)}\right)}
\end{align}
and
\begin{equation}\notag	
\begin{aligned}		
\mu_{2}\left(-TL_{B}^{(i)}\right)=&\lambda_{max}\left(\dfrac{\left[-TL_{B}^{(i)}\right]+\left[-TL_{B}^{(i)}\right]^{\top}}{2}\right)\\		
=&-T\cdot\lambda_{min}\left(S^{(i)}\right).	
\end{aligned}
\end{equation}
Therefore
\begin{equation}\notag	
\begin{aligned}		
\lambda_{max}\left(\varPhi^{(i)}\right)&=\left\Vert e^{-\Delta t_{i}L_{B}^{(i)}}\right\Vert_{2}^{2}\leq\left\Vert e^{-(\Delta t_{i}-\alpha)L_{B}^{(i)}}\right\Vert_{2}^{2}\cdot\left\Vert e^{-\alpha L_{B}^{(i)}}\right\Vert_{2}^{2}\\		
&\leq e^{-2(\Delta t_{i}-\alpha)\cdot\lambda_{min}(S^{(i)})}\cdot e^{-2\alpha\cdot\lambda_{min}(S^{(i)})}\\		
&\leq\max\limits_{i\in\mathbb{N}}\left\{e^{-2\alpha\cdot\lambda_{min}(S^{(i)})}\right\}\triangleq\varLambda<1.	
\end{aligned}
\end{equation}
Following the same reasoning, one finds that $\lambda_{max}\left(\varPhi^{(k)}_{t}\right)<1$. Then by \eqref{switching error}, one has
\begin{equation}\notag	
\begin{aligned}		
&\Vert\varepsilon(t)\Vert_{2}^{2}\\		
=&\varepsilon^{\top}(t)\left(\prod_{i=0}^{k-1}\left[e^{-\Delta t_{i}L_{B}^{(i)}}\right]^{\top}\right)\left[e^{-(t-t_{k})L_{B}^{(k)}}\right]^{\top}\\		
&\cdot\left[e^{-(t-t_{k})L_{B}^{(k)}}\right]\left(\prod_{i=0}^{k-1}e^{-\Delta t_{k-1-i}L_{B}^{(k-1-i)}}\right)\varepsilon(t)\\		
\leq&\lambda_{max}\left(\varPhi^{(k)}_{t}\right)\cdot\left(\prod_{i=0}^{k-1}\lambda_{max}\left(\varPhi^{(i)}\right)\right)\cdot\Vert\varepsilon(0)\Vert_{2}^{2}.	
\end{aligned}
\end{equation}
then one has
\begin{align}\notag	
\Vert\varepsilon(t)\Vert_{2}^{2}\leq\varLambda^{k}\Vert\varepsilon(0)\Vert_{2}^{2},\ \forall t\in[t_{k},t_{k+1}),
\end{align}
which leads to the desired non-trivial consensus result \eqref{switching desired outcome}.

\end{document}